\documentclass[12pt]{amsart}
\usepackage{amsmath, amssymb, amsthm, amsfonts, enumerate, verbatim, mathtools,enumitem, bbm}
\usepackage{pgf,tikz,pgfplots}
\usepackage[justification=centering]{caption}
\pgfplotsset{compat=1.15}
\usepackage{mathrsfs}
\usepackage{graphicx}
\usetikzlibrary{arrows,positioning}
\usepackage[all]{xy}
\usepackage{tikz}
\usepackage{subfig}
\usepackage{xcolor, soul, todonotes}
\usepackage[hypertexnames=false,debug,linktocpage=true,hidelinks]{hyperref}

\hypersetup{
	colorlinks,
	linktoc=all,
	linkcolor={blue},
	citecolor={red},
	urlcolor={blue}
}
%
%
\tikzstyle{punkt}=[circle, fill=black, minimum size=1mm,inner sep=0pt, draw]

\newtheorem{Theorem}{Theorem}[section]
\newtheorem{Lemma}[Theorem]{Lemma}
\newtheorem{Corollary}[Theorem]{Corollary}
\newtheorem{Proposition}[Theorem]{Proposition}

\newtheorem{Example}[Theorem]{Example}

\newtheorem{Question}[Theorem]{Question}

%
%
\let\epsilon\varepsilon
\let\kappa=\varkappa
%
%
\textwidth=15cm \textheight=22cm \topmargin=0.5cm
\oddsidemargin=0.5cm \evensidemargin=0.5cm \pagestyle{plain}
%
\def\pnt{{\raise0.5mm\hbox{\large\bf.}}}

\setcounter{tocdepth}{1} 

\newif\ifistoreview
\istoreviewtrue

\begin{document}
	\title{A Fundamental Probabilistic Fuzzy Logic Framework Suitable for
		Causal Reasoning}
\author[Amir Saki]{Amir Saki}
\address{Amir Saki, School of Mathematics, Institute for Research in Fundamental Sciences (IPM), P.O. Box 19395-5746, Tehran, Iran.}
\email{amirsaki1369@ipm.ir, amir.saki.math@gmail.com}
\author[Usef Faghihi]{Usef Faghihi\,*}
\thanks{* Corresponding author}
\address{Usef Faghihi, Department of Mathematics and Computer Science, The University of Quebec at Trois-Rivieres, 3351 Bd des Forges, Trois-Rivières, QC G8Z 4M3, Canada}
\email{usef.faghihi@uqtr.ca}
	
	\begin{abstract}
    In this paper, we introduce a fundamental  framework to create a bridge between Probability Theory  and Fuzzy Logic. Indeed, our theory formulates a random experiment of  selecting crisp elements with the criterion of having a certain fuzzy attribute. To do so, we associate some specific crisp random variables to the random experiment. Then,  several formulas are presented, which make it easier to compute different conditional probabilities and expected values of these random variables. Also, we provide measure theoretical basis for our probabilistic fuzzy logic framework. Note that in our theory, the probability density functions of continuous distributions which come from the aforementioned random variables include the  Dirac delta function as a term. Further, we introduce an application of our theory in Causal Inference.    
	\end{abstract}

	
	\keywords{Probability theory, fuzzy logic, probabilistic fuzzy logic, cusal inference, fuzzy average treatment effect, qualitative research, fuzzy attributes}

\maketitle	

\tableofcontents
	
	\section{Introduction}\label{introduction}

	First, we briefly explain our motivations to create a Probabilistic Fuzzy Logic~(PFL) framework. Deep Learning Algorithms  with their incredible achievements such as very high precision results in some specific tasks are at the center of the weak Artificial Intelligence  \cite{faghihi2020cog}. Deep Learning Algorithms  fail when it comes to reasoning as pointed out by Madan et al \cite{madan2021fast}, Pearl \cite{pearl2018book}, and Faghihi et al \cite{faghihi2020association}. In order to equip the machine with reasoning (e.g., Causal Reasoning), Faghihi et al \cite{faghihi2020cog,faghihi2020association} suggest that we need to improve Non-Classical Logics'  drawbacks and then integrate them with Deep Learning Algorithms. 
	
	In this paper, we focus on an integration of Fuzzy Logic and Probability Theory in order to create a PFL framework.  Let's see a simple problem that can be solved by our PFL.  Assume that we intend to  allow only datapoints with the attribute {\it high} to be selected by a system (such as a neural network). Here, {\it high} is vague and could be considered as a {\it fuzzy attribute}. We suggest a  probabilistic fuzzy approach to deal with the aforementioned  problem (see Section \ref{Problem Setting and Some Motivations}). Also, using our PFL theory, we can answer questions such as the following:

\begin{center}
	\begin{minipage}{14cm}
			{\it"Given a fuzzy attribute (i.e., low) of $X= \{1,...,10\}$, what is the probability of a randomly selected element from $X$ being equal to $3$, and $3$ is low?"}
	\end{minipage}
\end{center}

We believe that properly answering the above question could help the field of Causal Reasoning. Roughly speaking, we associate a random variable $\xi_{X,low}$ to the above question, and the answer is $\mathbb{P}(\xi_{X,low}=3)$. We use this type of random variables to define a {\it fuzzy average treatment effect} formula in Section \ref{FATE}. 

	Apart from problems concerning Deep Learning Algorithms, our theory could also be used in Qualitative Research and Quantitative Research (which is out of the scope of this paper) \cite{leavy2017research}. For instance, assume that a market researcher has conducted a survey. The questionnaire includes different types of questions such as demographic questions,  multiple choice questions, and Likert scale questions \cite{brace2018questionnaire}. A Likert scale question is used to measure beliefs and opinions in a scale such as integers from 0 to 20 \cite{likert1932technique}. The market researcher could fuzzify the interval $[0,20]$ by using different fuzzy attributes such as {\it low}, {\it medium} and {\it high}. Then to select for instance {\it high} responses, one could apply  our PFL theory (see Section \ref{Problem Setting and Some Motivations}). 
       
     Now, we explain the organization of this paper. First, we provide the related works and their drawbacks in Section \ref{Related Works}. In this section, we divide  most known previous PFLs into three categories: 1) Zadeh's theory, 2) Meghdadi and Akbarzadeh's theory, 3) Buckley's theory. Indeed, to the best of our knowledge, other PFLs have been created following one of the above categories. In Section \ref{Probability Theory}, we provide some basic ideas of Probability Theory, especially the main philosophical interpretations of Probability Theory. We dedicate this section to the subjective nature of some probability distributions discussed in the next sections. Also, we justify why we use a measure theoretical approach to construct our PFL. Section~\ref{Fuzzy Preliminaries} is devoted to some basic concepts in the fuzzy literature required for the rest of this paper. Indeed, we define a fuzzy set and explain what we mean by a fuzzy attribute. 
     In Section \ref{cuasalinference}, some terminology of Causal Inference are discussed. Then, {\it the fundamental problem of Causal Inference} is given, especially the average treatment effect as an {\it estimand} for the treatment effect of an {\it intervention} is defined. Further, the {\it ignorability condition} is explained as an assumption to overcome  the fundamental problem of Causal Inference. Section \ref{Problem Setting and Some Motivations} is devoted to some motivations and the problem setting of this paper. Indeed, one problem is how to feed a system by numerical values that satisfy a fuzzy attribute. Another problem is the assignment mechanism in Causal Inference in the case that we intend to assign for instance {\it low}, {\it medium} or {\it high} treatments to the units. 
      In Section \ref{frame}, we build up our PFL framework in discrete case. Indeed, we define a two-steps random experiment, called Experiment~($\star$). Then,  for any element $x$ and a fuzzy attribute $A$ of the sample space, we associate a binary random variable $\xi_{x,A}$ to the second step of the experiment. Also, we associate a random variable $\xi_{X,A}$ to the whole process, where $X$ is the identity random variable defined on the sample space. Further, we provide several formulas for expected values and conditional probability distributions derived from these newly defined random variables. In Section~\ref{Some Models of our PFL Framework}, we introduce several possible models derived from our PFL framework. Since $\mathbb{P}(x\text{ is }A)$ is subjective in our theory, it could be based on the degree of "$x$ satisfying $A$". Hence, some of the introduced models are based on {\it membership degrees}, which come from Fuzzy Logic. Further, we explain how to obtain $\mathbb{P}(y\text{ is }B|x\text{ is }A)$ in the so-called {\it standard models}.  Section \ref{FATE} is devoted to the application of our PFL framework in Causal Inference. In this section, we explain an assignment mechanism using $\xi_{T,A}$, where $T$ is a treatment and $A$ is a fuzzy attribute. Further, we define a fuzzy version of the average treatment effect with respect to any pair belonging to for instance $\{low, medium, high\}$. This fuzzy version is based on the idea stating that to measure the causal effect of a treatment it suffices to measure the difference between the patients' conditions when they receive  {\it high} treatments versus {\it low } treatments. We also have an alternative for the ignorability condition to solve the fundamental problem of Causal Inference supported by providing an  experiment. In Section \ref{Measure Theoretical Approach}, we provide a measure theoretical basis for our theory, which is followed by  creating our PFL framework in continuous case. Indeed, the probability density function of $\xi_{X,A}$ includes the  Dirac delta function as a term, when $X$ is a continuous random variable. In Section \ref{Relationship Between Zadeh's PFL and our Theory}, we explore the relationship between Zadeh and our PFL frameworks  by addressing some main definitions. Next,  Section \ref{Conclusion} is devoted to the conclusion of this paper. In this section, we briefly justify the purposes of this paper and compare our constructed   PFL framework to the previous ones. Our future works and also some recommendations for other researchers are provided in this section as well.  Finally, we provide some appendices as a map to define Probability Theory as an application of Measure Theory. To do so,  we start with  Appendix \ref{1}, where we introduce Topology, which makes it possible to define Borel sets and measures. Then, in Appendix~\ref{MeasureTheory}, we very briefly  explain everything in Measure Theory required to define different basic concepts in  Probability Theory. In Appendix~\ref{Probability Theory (Measure Theoretical Point of View)}, we explain measure theoretical definitions of different basic concepts in Probability Theory such as probability measures, random variables, expected values, joint distributions, conditional distributions, discrete random variables and continuous random variables. In Appendix~\ref{gpdf}, we define {\it generalized distributions}, which include both discrete and continuous distributions by using the Dirac delta function. 
    
   \section{Related Works}\label{Related Works}
    Most of the current PFLs are based on the followings:
    \begin{enumerate}
    	\item Zadeh \cite{zadeh1968probability}, assumed that the events are fuzzy and defined the probability of an event using its membership function. However, Zadeh’s theory cannot answer the "and being equal to $3$" part of the following question: 
    	\begin{Question}
    		Given a fuzzy attribute (i.e., low) of $X= \{1,...,10\}$, what is the probability of a randomly selected element from $X$ being equal to $3$, and $3$ is low?
    	\end{Question}
     In \cite{gudder1998fuzzy}, Gudder followed Zadeh's work and constructed his PFL by postulating the imprecision in observing outcomes using some terminology in Quantum Mechanics.  Hence, he replaced events and random variables in crisp probability theory by fuzzy events (or {\it effects}) and fuzzy random variables (or {\it observables}), respectively. In this theory, a probability function on a $\sigma$-algebra of effects is interpreted as a {\it state} of the system. Finally, Gudder provides some applications of his PFL in Quantum Mechanics and  Pattern Recognition. Gudder's work suffers from the same drawbacks as Zadeh's theory.

     To compare some concepts of Zadeh's PFL to our PFL, see Section \ref{Relationship Between Zadeh's PFL and our Theory}.
    	\item  Meghdadi and Akbarzadeh \cite{meghdadi2001probabilistic} assumed that in probabilistic fuzzy logic the membership function is a random variable. For reasoning, the authors considered probabilistic fuzzy rules followed by applying a Mamdani's min-max system. However, in this theory there is no thorough explanation on how to handle membership functions in fuzzy logic as random variables in Probability Theory. In addition, there is no clear explanation of where to use the membership functions as random variables in their proposed reasoning system. 
   
    \item
    	In \cite{buckley2005fuzzy}, Buckley used crisp probabilities to calculate probabilities as fuzzy numbers. More concretely,  he determined the fuzzy  numbers corresponding to the probabilities of the events by giving the freedom to select crisp probabilities which are the probabilities of the sample points. Indeed, Buckley defined the probability as a triangular-shaped fuzzy number using confidence intervals for vagueness of the events \cite{ersel2016fuzzy}. However, in this theory, computing the probability of an event strongly depends on selecting the sample points out of that event, resulting in complex computations including some constraints. 
       \end{enumerate}

   \section{Probability Theory}\label{Probability Theory}
   In this section, we briefly explain main different interpretations in Probability Theory. Then, we justify why  we use measure theory  in this paper. There are two main types of interpretation for Probability Theory, one is $physical$ and another one is based on $evidences$. These two categories are briefly introduced in the next two following subsections.  To study more about different interpretations of Probability Theory, see \cite{gillies2012philosophical} or for a brief explanation, see  \cite{lyon2010philosophy}.
   \subsection{Physical Interpretations}
   These types of interpretations are based on random physical systems, and hence they are {\it objective}. There are different types of physical interpretations each having  its own branches. In the following, we briefly explain the two main categories of physical interpretations.
   
   \subsubsection{Relative Frequencies Interpretation} 
   Based on this interpretation, to determine the probability of an event $E$ (a set of some possible outcomes) in a random experiment, we conduct a sequence of the random experiment and compute the relative frequency of observing $E$ after each trial. This interpretation is based on that these relative frequencies tend to a unique number, which will be considered as the probability of observing $E$. 
   Let's  clarify the frequency point of view by a simple example. In tossing a coin experiment, the probabilities of appearing Head or Tail are determined by many times repeating the experiment. Indeed,  the probability of appearing Head is approximated to be  the number of times we observe Head divided by the number of times we toss the coin.
   
   \subsubsection{Propensity Interpretation}
   
   This interpretation states that the probabilities of events come from physical properties and their causal connections with the random experiment. In this interpretation, the physical properties of the events directly influence the outcome. For instance, tossing an unfair coin reveals  different probability values for Head and Tail, since it comes from its physical properties.  This interpretation is connected to the relative frequency interpretation in the way that we might compute probability values by relative frequencies. This does not happen always, since it might not be possible to repeat an experiment (e.g., the probability of a war between two certain nations). The motivation of this approach is Quantum Mechanics in problems such as the probability that a radioactive atom decays after a certain time. 
   
  \subsection{Evidential Interpretations}
  
    In contrast, the evidential probability interpretation associates probability values to an event based on how current evidences support the occurrences of the event.
  
  \subsubsection{Subjective Interpretation}\label{subjective}
    
    In this interpretation,   also called the Bayesian interpretation, probability values are determined by the degrees of beliefs. It is assumed that the degrees of beliefs are somehow $coherent$ in a process that simulates any probability problem to a lottery. Here, being coherent means trying not to lose.  Since we would like to use experiences, the Bayes rule is essential in this interpretation. Some situations are mostly fit to subjective interpretations compared to the others. For instance, the probability of someone being robbed is a kind of degree of belief and could be more and more accurate by considering more and more evidences. 
    
    \subsubsection{Logical Interpretation}
 
 Consider the following types of inferences: $Deductive$ and $Inductive$. In the deductive inference, premises could certainly determine the truth of a conclusion. In contrast, it is not possible in some situations to determine the truth of a conclusion, while in inductive inference we can find the probability of occurring such a conclusion. Logical interpretations deals with inductive logical systems. 
 
 \subsection{Kolmogorov Axioms}
 
 These axioms are the foundations of Probability Theory and make it possible to use mathematical tools in this area.  
 These axioms state that a function on the set of events of a random experiment is a probability measure, whenever it satisfies the followings:
 
 \begin{enumerate}
 	\item 
 	$P(E)\ge 0$ for any event $E$.
 	\item
 	$P(\Omega)=1$, where $\Omega$ is the set of all possible outcomes (or the sample space).
 \item
 $P\left(\bigcup_{n=1}^{\infty}E_n\right)=\sum_{n=1}^{\infty}P(E_n)$ for any family $\{E_n\}_{n=1}^{\infty}$ of events. 
 \end{enumerate}
Several fundamental results can be derived from the above axioms. For instance, the second and the third axioms yield that $P(\Omega\backslash E)=1-P(E)$, where throughout of this paper by $\Omega\backslash E$ we mean the complement of $E$ in $\Omega$. 

\subsection{Measure Theoretical Probability Theory}

Measure theory is a broad branch of mathematics mostly developed to generalize the concepts of length, area, volume, and integral. However, the Kolmogorov axioms are the properties of a measure with the additional condition that the measure of the whole space is $1$. We have provided a detailed map in the appendices, which starts with Topology, and continues with Measure Theory and Probability Theory to see how precisely all these are connected to each other. 

\subsection{Our Approach}

In this paper, we use the measure theoretical approach as a basis for our PFL. While probability interpretations are philosophical point of views, Probability measures must satisfy Kolmogorov axioms no matter which probability interpretation  has been used. Otherwise, we are not able to systematically work on Probability Theory  using suitable math tools.  Besides, in a classical point of view, probability distributions are either discrete or continuous, while in this paper we frequently deal with mixed ones. A measure theoretical approach provides a framework to deal with all possible probability distributions.
This is why we choose the general approach of working with measure theory.

\section{Fuzzy Preliminaries}\label{Fuzzy Preliminaries}

In this section, first we briefly talk about $fuzzy$ $sets$.  Then, we explain what we mean by a  $fuzzy$ $attribute$.  Finally, we  see some basic information about $t$-$norms$, which are operators on fuzzy sets and essential for the rest of this paper. 
\subsection{Fuzzy Sets}
Let $\Omega$ be a set and $A$ be a subset of $\Omega$. One could associate a function $\mathbbm{1}_A:\Omega\to\{0,1\}$, which sends any element of $A$ to $1$ and the other elements to $0$. We note that there is a one-to-one correspondence between $\mathcal{P}(\Omega)$, the set of all subsets of $\Omega$ (or the powerset of $\Omega$), and the set of all functions $f:\Omega\to\{0,1\}$. Hence, we can identify a subset $A$ of $\Omega$ with $\mathbbm{1}_A$. A generalization of the concept of a subset comes from investigating  the above identification. Indeed, we call any function $f:\Omega\to[0,1]$ a fuzzy subset of $\Omega$, where $[0,1]$ is the set of all real numbers not lower than $0$ and not greater than $1$. 

\subsection{Fuzzy Attributes}
In practice, fuzzy sets usually come from fuzzy attributes. 
Roughly speaking, we define a fuzzy attribute as a vague attribute without precise boundaries. For instance, $high$ could be considered as an attribute of the interval $[0,10]$, and it is weird if we say that $8$ is $high$ but $7.99$ is not! Now, let $\Omega$ be a set and $A$ be a fuzzy attribute of $\Omega$. Then, we can associate a fuzzy set $\mu_A:\Omega\to[0,1]$, called the $membership$ $function$ of $A$, in such a way that we interpret $\mu_A(\omega)$ as the $membership$ $degree$ of $\omega$ to be $A$. For an example of fuzzy attributes and their corresponding membership functions, see Figure \ref{fuzzyattributes}.

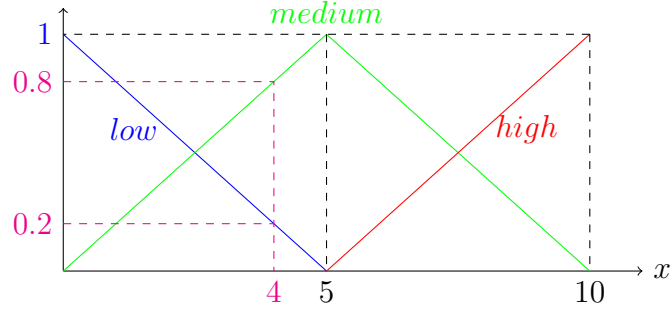
\begin{figure}
	\begin{center}
		\begin{tikzpicture}[scale=.7]
			\draw[->](0,0)--(11,0)node[right]{$ x $};
			\draw[->](0,0)--(0,5);
			\draw[blue] (0,4.5)node[left]{$1$} -- (2,2.7) node[left]{$low$}--(5,0);
			\draw[green] (0,0) --(5,4.5)node[above]{$medium$}-- (10,0);
			\draw[red] (5,0) -- (8,2.7)node[right]{$high$}--(10, 4.5);
			\draw[dashed](5,4.5) --(5,0)node[below]{$5$};
			\draw[dashed](10,4.5) --(10,0)node[below]{$10$};
			\draw[dashed] (10,4.5)-- (0, 4.5);
			\draw[dashed,magenta](0,0.9)node[left]{$0.2$}--(4,.9)--(4,0)node[below]{$4$};
			\draw[dashed, magenta](0,3.6)node[left]{$0.8$}--(4,3.6)--(4,0.9);
		\end{tikzpicture}
	\end{center}
	\caption{\small Fuzzy attributes "$low$", "$medium$" and "$high$" for $\Omega=[0,10]$. Note that $\omega=4$ is $low$, $medium$ and $high$ with the membership degrees of $0.2$, $0.8$ and $0$, respectively.}\label{fuzzyattributes}
\end{figure}

\subsection{$t$-norms}\label{tnorms}
Let $\Omega$ be a set. Assume that $A$ and $B$ are two fuzzy attributes of $\Omega$. A natural way to associate a membership function which satisfies both attributes $A$ and $B$ is using $t$-norms. Indeed, a $t$-norm is a function $T:[0,1]\times[0,1]\to[0,1]$ satisfying the following properties for any $a,b,c,d\in[0,1]$:
\begin{itemize}
	\item $T$ is commutative: 
	$T(a,b)=T(b,a)$,
	\item
	T is associative: 
	$T(a,T(b,c))=T(T(a,b),c)$,
	\item
	$1$ is an identity element: 
	$T(a,1)=a$, and
	\item
	$T$ is an increasing function in the following sense: 
	$T(a,b)\le T(c,d)$ whenever $a\le c$ and $b\le d$.
\end{itemize}
 In the following, several $t$-norms are given:

\[\begin{array}{lll}
	T_{\min}(a,b)=\min\{a,b\} && \text{minimum $t$-norm}\\
	T_{\text{prod}}(a,b)=ab && \text{product $t$-norm}\\
	T_{\text{Luk}}(a,b)=\max\{0, a+b-1\} && \text{Łukasiewicz $t$-norm}\\
	&&\\
	T_{\text{D}}(a,b)=\left\{\begin{array}{ll} b &, a=1\\ a& , b=1\\ 0 &, \text{otherwise}\end{array}\right. && \text{Drastic $t$-norm}\\
	&& \\
	T_{\text{nM}}(a,b)=\left\{\begin{array}{ll}\min\{a,b\}&, a+b>1\\ 0 & \text{otherwise}\end{array}\right. && \text{ Nilpotent Minimum $t$-norm}\\
	&& \\
	T_{\text{H}_0}(a,b)=\left\{\begin{array}{ll}0 & a=b=0\\ \dfrac{ab}{a+b-ab}&, \text{otherwise}\end{array}\right. && \text{Hamacher Product $t$-norm}
\end{array}\]
Also, there are parametric $t$-norms such as the followings:
\[
\begin{array}{ll}
	T_p^{\text{AA}}(x,y)=\left\{\begin{array}{ll}
		T_{\text{D}}(x,y)&, p=0\\
		\exp(-(|\log x|^p+|\log y|^p)^{\frac{1}{p}})&, 0<p<\infty\\
		T_{\min}(x,y)&, p=\infty
	\end{array}\right. &\text{(Aczél–Alsina $t$-norms)}
	\\ & \\
	T_p^{\text{SW}}(x,y)=\left\{\begin{array}{ll}
		T_{\text{D}}(x,y)&, p=-1\\
		\max\left\{0,\dfrac{x+y-1+pxy}{1+p}\right\}&, -1<p<\infty\\
		T_{\text{prod}}(x,y)&, p=\infty
	\end{array}\right. &\hspace*{-.35cm}\text{(Sugeno–Weber $t$-norms)}
\end{array}
\]
Now, let $T$ be a $t$-norm. Then, we can define the fuzzy set associated to the fuzzy attribute "$A\,\&\, B$" (or "$A\cap B$") as follows:
\[\mu_{A\,\&\, B}(\omega)=T(\mu_A(\omega), \mu_B(\omega)).\]

\section{Causal Inference}\label{cuasalinference}

"In Causal Inference, one reasons to the conclusion that something is, or is likely to be, the cause of something else \cite{causalinference}." For example, a researcher may want to find out if a specific pill is effective against insomnia. The {\it causal effect} of the pill against insomnia could be computed as the difference between the patient's condition in the following cases: taking the pill, and not taking the pill. Let's see some terminologies here. In the above example, we have the following definitions:

\begin{itemize}
	\item 
	The patient is called a {\it unit}, that is a member of a sample. We note that samples are selected from a target population.
	\item
	Taking the pill is called the {\it (active) treatment} or the intervention.
	\item
	Not taking the pill is called the {\it control}.
	\item
	The set of all units that receive the treatment  is called the {\it treatment group}.
	\item
	 The set of all units under control is called the {\it control group}. 
	\item
	The patient's condition for both  treatment and control states are called {\it potential outcomes}.
\end{itemize}
To have a significant result in a causal inference, we may have the following assumption:
\begin{itemize}
	\item 
	({\it Consistency Condition}) We assume that the treatment level for all units in the treatment group is the same qualitatively and quantitatively.  
\end{itemize}
We note that to measure the causal effect of the treatment for a unit, we have to do the following procedure:
\begin{itemize}
	\item 
	The unit receives the treatment at the time $t_1$ and the potential outcome associated to the treatment is measured at the time $t_2$.
	\item
	The unit receives the control at the time $t_1$ and the potential outcome associated to the control is measured at the time $t_2$.
\end{itemize}
However, it is not possible to measure both treatment and control potential outcomes for a unit simultaneously.  Hence, for a period of time, only one of the above potential outcomes could be measured for the unit, and the other one remains missing. This issue is called the  fundamental problem of Causal Inference.
 The measured potential outcome is called the {\it observed} or the {\it factual outcome}, and the missing potential outcome is called the {\it counterfactual outcome}. 

Now, consider the problem of measuring the causal effect of the pill against insomnia in a large enough population. To do so, we can measure the average treatment effect of taking the pill against insomnia. 

Let's fix some notations. We denote the random variables associated to the treatment, the potential outcome corresponding to the active treatment, the potential outcome corresponding to the control, and the observed outcome by $T$, $Y(1)$, $Y(0)$ and $Y$, respectively. 
Hence, the average treatment effect of taking the pill against insomnia is $\mathbb{E}(Y(1)-Y(0))$. To solve the issue arising from the fundamental problem of Causal Inference, one way is considering the  following assumption:
\begin{itemize}
	\item 
	(Ignorability Condition) We assume that taking the treatment for a unit does not effect the potential outcomes for that unit and the others. In other words, $Y(0)$ and $Y(1)$ are independent given $T$ (one might see some slightly different definitions for the ignorability condition in other contexts). 
\end{itemize}
Then, a {\it random assignment} mechanism could be used for which a group of people are  selected as the treatment group and the others  as the control group.  

Now, let us assume that the igorability condition is satisfied. Then,
\begin{align*}
	\mathbb{E}(Y(1)-Y(0))&=\mathbb{E}(Y(1))-\mathbb{E}(Y(0))\\
	&=\mathbb{E}(Y(1)|T=1)-\mathbb{E}(Y(0)|T=0)\\
	&=\mathbb{E}(Y|T=1)-\mathbb{E}(Y|T=0).
\end{align*}
The latter is a statistical formula without any missing data. 

While, the above concepts and formulas are provided for binary treatments, Causal Inference  deals with non-binary treatments as well. Hence, the aforementioned concepts and formulas could be generalized to include non-binary treatments.  We  recommend  a fuzzy based approach using our PFL theory in Section \ref{FATE}.

To study more about Causal Inference, see \cite{imbens2015causal} for a precise and sophisticated reference,  and see  \cite{pearl2009causal} for a graphical causal  framework.

\section{Problem Setting and Some Motivations}\label{Problem Setting and Some Motivations}

In this section, we provide some probabilistic  fuzzy nature problems and some motivations for our PFL theory. In Subsections \ref{InputswithFuzzyAttributes} and \ref{FuzzyBasedAssignment Machanism}, we discuss two different motivational problems for our framework.  To deal with the aforementioned problems, we define a certain property, called Golden Property (see Subsection~\ref{Golden Property}). To solve the above problems, we then discuss two different points of views in Subsections \ref{FuzzyBasedAssignment Machanism} and \ref{OurFramework}. Finally, we justify our solution in Subsection \ref{OurFramework}, which is the subject of the rest of this paper.

\subsection{Inputs with Fuzzy Attributes}\label{InputswithFuzzyAttributes}

Let us assume that we are given a system and a collection of data to  feed the system as input. Assume that the fuzzy nature of data is important, and we only want to feed the system with data satisfying a fuzzy attribute. For instance, a datapoint $x_0$ could have the fuzzy attribute $high$ with a degree $0.7$ of membership. Here, we have to select   $x_0$ as $high$ or not select it as $high$, and we do not have another choice. 
Clearly, $x_0$ is not completely $high$ or completely $\neg\, high$. 

\subsection{Diamond Property}\label{DiamondProperty}
Let $X$ and $Y$ be two random variables. We say that the conditional distribution of $X$ given $Y=y$ satisfies Diamond Property with respect to the $t$-norm $T$, whenever we have that
\begin{equation}\label{diamondequation}
	\mathbb{P}(X=x|Y=y)=\frac{T(\mathbb{P}(X=x), \mathbb{P}(Y=y))}{\mathbb{P}(Y=y)}.
\end{equation}
We refer to this property as Diamond Property.
Note that Diamond Property mostly fails when we check the Kolmogorov axioms and the total law of probability.  We recall that the total law of probability for $X$ given $Y$ is the following:
\[\mathbb{P}(X=x)=\sum_{y}\mathbb{P}(X=x|Y=y)\mathbb{P}(Y=y)\]
for any value $X=x$.
For example, let $X$ and $Y$ be two Bernoulli random variables with the success chance of $0.6$ and $0.7$, respectively. Then, by considering the minimum $T$-norm, we have that
\begin{align*}
&\frac{\min\{\mathbb{P}(X=1), \mathbb{P}(Y=1)\}}{\mathbb{P}(Y=1)}=\frac{\min\{0.6, 0.7\}}{0.7}=\frac{0.6}{0.7},\\
&\frac{\min\{\mathbb{P}(X=0), \mathbb{P}(Y=1)\}}{\mathbb{P}(Y=1)}=\frac{\min\{0.4, 0.7\}}{0.7}=\frac{0.4}{0.7},\\
&\frac{\min\{\mathbb{P}(X=1), \mathbb{P}(Y=0)\}}{\mathbb{P}(Y=0)}=\frac{\min\{0.6, 0.3\}}{0.3}=1,\\
&\frac{\min\{\mathbb{P}(X=0), \mathbb{P}(Y=0)\}}{\mathbb{P}(Y=0)}=\frac{\min\{0.4, 0.3\}}{0.3}=1.
\end{align*}
If the conditional distribution of $X$ given $Y=1$ (or $Y=0$) satisfies Diamond Property with respect to the minimum $t$-norm, then
\begin{align*}
&\mathbb{P}(X=1|Y=1)+	\mathbb{P}(X=0|Y=1)=\frac{1}{0.7}>1,\\
&\mathbb{P}(X=1|Y=0)+	\mathbb{P}(X=0|Y=0)=2>1,
\end{align*}
a contradiction! Further, if we normalize the above conditional distributions  dividing each of the probability values by the sum of probability values, then we obtain the followings:
\begin{align*}
	&\mathbb{P}(X=1|Y=1)=0.6, &\mathbb{P}(X=0|Y=1)=0.4,\\
	&\mathbb{P}(X=1|Y=0)=0.5,&	\mathbb{P}(X=0|Y=0)=0.5.
\end{align*}
Now, we note that
the total law of probability for $X$ given $Y$ fails. For instance, we have that
\[0.6=\mathbb{P}(X=1)\neq \underbrace{\mathbb{P}(X=1|Y=1)}_{0.6}\underbrace{\mathbb{P}(Y=1)}_{0.7}+\underbrace{\mathbb{P}(X=1|Y=0)}_{0.5}\underbrace{\mathbb{P}(Y=0)}_{0.3}=0.57.\] 
\subsection{Golden Property}\label{Golden Property}

Let's say  Equality~(\ref{diamondequation}) is satisfied for most but not all values of $X$. 
In the case that we are interested in satisfying Equality~(\ref{diamondequation}) for all except for one value of $X$, we define the following property. Let $X$ and $Y$ be two random variables. We say that the conditional distribution of $X$ given $Y=y$ satisfies Golden Property with respect to $X=x_0$ and a $t$-norm $T$, whenever Equality ($\ref{diamondequation}$) holds for all values of $X$ except for $x_0$. In other words, we must have
\[\mathbb{P}(X=x|Y=y)=\frac{T(\mathbb{P}(X=x), \mathbb{P}(Y=y))}{\mathbb{P}(Y=y)},\quad x\neq x_0.\]
One could see that the example in Subsection \ref{DiamondProperty} does not satisfy Golden Property with respect to any value of $X$. See Section \ref{Some Models of our PFL Framework} to construct many examples satisfying Golden Property but not Diamond Property.

Now, we briefly explain the reasons that make us interested in Diamond Property and Golden Property. Also, we justify why we assume that $T$ in Equation (\ref{diamondequation}) is a $t$-norm. Let $X$, $Y$ and $Z$ be three random variables whose distributions come from subjective random experiments. In general, the conditional distribution of $X$ given $Y$ could be complicated. However, based on the subjective nature of the random selections, the subject might consider the aforementioned conditional distribution as a function of the distributions of $X$ and $Y$.  To look for such a  setting that satisfies the Bayes rule, one could use the  expression appeared in Equation (\ref{diamondequation})  for $\mathbb{P}(X=x|Y=y)$, where $T:[0,1]\times[0,1]\to[0,1]$ is a function. It follows that $\mathbb{P}(X=x,Y=y)=T(\mathbb{P}(X=x),\mathbb{P}(Y=y))$ for any values $X=x$ and $Y=y$. Obviously, $T$ is commutative on the Cartesian product of the probability values of $X$ and $Y$.  Further,   for instance if $\mathrm{Supp}(Y)=\{y\}$, then $\mathbb{P}(Y=y)=1$, and hence $1$ is the identity element for the probability values of $X$  with  respect to  $T$. Furthermore, if the conditional distribution of any pair from $\{X,Y,Z\}$ satisfies Equation~(\ref{diamondequation}), then $T$ is associative on the probability values of $X$, $Y$ and $Z$. We note that the above conditions are satisfied for $T$ on a subset of $[0,1]$ (the probability values of the above distributions). 
If  the above conditions are satisfied for the whole interval $[0,1]$, then $T$   differs with the definition of a $t$-norm only on the increasing condition. Now, to integrate Fuzzy Logic and Probability Theory and due to the importance  of $t$-norms in Fuzzy Logic, we might  consider that $T$ is a $t$-norm. 

\subsection{Fuzzy Based Assignment Machanism}\label{FuzzyBasedAssignment Machanism}

Assume that we wish to randomly assign the values of a random variable $T$ to a target population in such a way that a fuzzy attribute $A$ of $T$ is the criterion for selecting $T=t$ in the assignment mechanism . To do so, one could consider a probability function $q_A$ such that $q_A(t)$ is the probability of randomly selecting $t$ as $A$ among all possible values of $T$.  Intuitively, this makes sense but Diamond Property mostly fails. Now, let's work on Golden Property. Note that in Golden Property, one value of $T$ must be distinct comparing to the other values (i.e, $x_0$ in Subsection \ref{Golden Property}). To have such a distinct value, one could   extend the sample space of the above random experiment to include "selecting nothing" as $A$. More precisely, we extend the values of $T$ to include a value $t_A$ representing "nothing", and then we associate a random variable $\xi$ to the above experiment in such a way that $\xi=t$ with a probability of $q_A(t)$. Note that

\[\mathbb{P}(T=t_A|Y=y)=1-\sum_{\substack{T=t,\\ t\neq t_A}}\frac{T(\mathbb{P}(T=t), P(Y=y))}{\mathbb{P}(Y=y)}.\]

\subsection{Our Framework}\label{OurFramework}

To deal with the problem explained in Subsection \ref{InputswithFuzzyAttributes}, we suggest a probabilistic approach to conduct a random experiment with the probability $0.7$ of selecting $x_0$ as {\it high}. If $x_0$ is not selected, then we select nothing! This procedure could be applied to any input $x$ to feed the system with only  $high$ values. In the aforementioned example, we may use a different probabilistic approach to select $x_0$ as $high$. Thus, in general we have a probability value $\mathbb{P}(x_0\text{ is } high)$ for selecting $x_0$ as high and selecting nothing with a probability of $1-\mathbb{P}(x_0\text{ is } high)$  (see Figure~\ref{selectashigh}).
\begin{figure}
	\begin{tikzpicture}[
		node distance = 4mm and 22mm
		]
		\node (high) [draw,minimum size=24mm] {$high$};
		\node (system) [draw,minimum size=24mm, right = of high.east] {System};
		\coordinate[above left = of high.west]   (a1);
		\coordinate[below left = of high.west]      (a2);

		\coordinate[above = of high.east]   (b1);
		\coordinate[below  = of high.east]      (b2);
		%
		\foreach \i [count=\xi from 1] in {$x_0$, $x_1$ }
		\draw[-latex']  (a\xi) node[left] {\i} -- (a\xi-| high.west);
		
		\draw[-latex']  (b1)--(b1-|system.west);
	\end{tikzpicture}
	\caption{ After tossing two coins for which the probabilities of appearing Head is $\mathbb{P}(x_0\text{ is }high)$ and $\mathbb{P}(x_1\text{ is }high)$, respectively, $x_0$ is selected as $high$, while $x_1$ is not selected as high. }\label{selectashigh}
\end{figure}
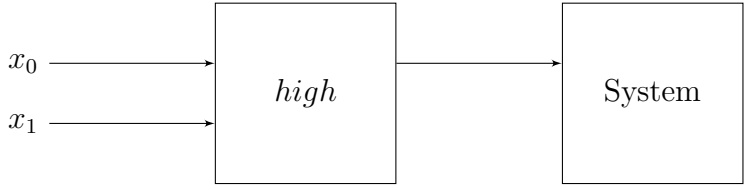
 Further, we naturally associate a random variable $\xi_{x_0,high}$ to the random experiment of selecting $x_0$ or nothing with the fuzzy attribute $A$. Indeed, as we discuss in Section \ref{Some Models of our PFL Framework}, the  conditional distribution of $\xi_{x_0,A}$ given $\xi_{y,B}$ in a so-called standard model satisfies Golden Property (explained in Subsection \ref{Golden Property}). Then, we extend our framework to include an alternative to the assignment mechanism in Causal Inference. To do so,  the model explained in Subsection \ref{FuzzyBasedAssignment Machanism} for the random mechanism could be built up by our framework  using a random variable $\xi_{T,A}$ (see Section \ref{frame} for the definition of $\xi_{T,A}$ and Section \ref{FATE} for our assignment mechanism). Indeed, in  contrast to the probability measure $q_A$ (defined in Subsection \ref{FuzzyBasedAssignment Machanism}), the probability measure induced by $\xi_{T,A}$ randomly selects elements based on Experiment~($\star$) explained in Section \ref{frame}.

 \section{Our PFL Framework in Discrete Case}\label{frame}

Let $\Omega$ be a set and $A$ be a fuzzy attribute of $\Omega$. For instance, $A$ could be an attribute like {\it large}, {\it medium}, {\it small}, {\it not large}, {\it low} or {\it high}. Each element $\omega$ of $\Omega$ has the attribute $A$ with a membership degree $\mu_A(\omega)$ in the interval $[0,1]$. We use the membership notation "$\in$" only for crisp sets, while for the fuzzy attribute $A$ of  $\Omega$, we write "$\omega$ is $A$".  In the sequel of this section, we assume that $\Omega$ is a discrete subset of $\mathbb{R}$. Also, we assume that $X:\Omega\to\mathbb{R}$ is the inclusion random variable.

A general measure theoretical framework including the continuous case is defined in Section \ref{Measure Theoretical Approach}.

For any $\omega\in \Omega$, we define $\mathbb{P}(\omega \text{ is } A)$ to be the chance that $\omega$ is randomly selected as an element with the attribute $A$. This kind of selection is subjective, and hence it is based on the knowledge or the assumptions of the subject. However, the subject could make her selection reasonable by paying attention to the facts and information about $A$ and random selection of $\omega$ as $A$ (see the subjective interpretation of Probability Theory in Section \ref{subjective}, and for more detail see \cite{gillies2012philosophical}). 
 Note that for any $\omega\in \Omega$, there is a random experiment for selecting $\omega$ or nothing as $A$ rather than selecting a random element from $\Omega$ (see Figure \ref{fig1}).
\begin{figure}[!h]
\begin{center}
\includegraphics[scale=.5]{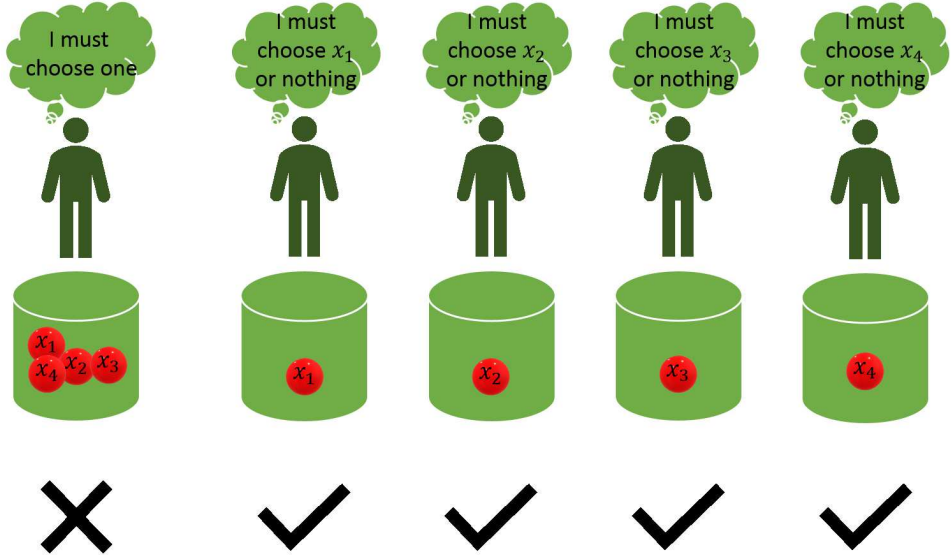}
\end{center}
\caption{\Small{This figure shows what a random experiment means when we intend to choose an element of a set $\Omega$ with the fuzzy attribute $A$. So, we have one experiment for each element of $\Omega$.}}\label{fig1}
\end{figure}
Indeed, a similar point of view but in the context of possibility theories has been considered in \cite{kovalerchuk2017relationships}.

This interpretation implies that for an event $E$, in general we have the following:
\[ \mathbb{P}(E)\neq\sum_{\omega\in \Omega}\mathbb{P}(E|\omega\text { is }A)\mathbb{P}(\omega\text { is }A).\] 
Note that $\mathbb{P}(\neg(\omega\text{ is } A))$ and $\mathbb{P}(\omega\text{ is }not\,A)$ are different. Indeed, $\mathbb{P}(\neg(\omega\text{ is } A))=1-\mathbb{P}(\omega\text{ is } A)$, while $\mathbb{P}(\omega\text{ is }not\,A)$ is the chance for $\omega$ to be chosen with the attribute "$not\;\text{A}$". Hence, $\mathbb{P}(\neg(\omega\text{ is } A))$ and $\mathbb{P}(\omega\text{ is }not\,A)$ come from different random experiments.

We also define $\mathbb{P}(\Omega\text{ is } A)$ to be the chance that a randomly chosen element of $\Omega$ has the attribute $A$. Indeed, $\mathbb{P}(\Omega \text{ is } A)$ is obtained in two steps:
\begin{enumerate}\label{experiment}
\item
Randomly selecting an element $\omega$ from $\Omega$,
\item
Doing a new random experiment for selected $\omega$ in Step (1) with the following outcomes:
\[\{\text{$\omega$ is selected as $A$}, \text{Nothing is selected as $A$}\}.\]
\end{enumerate}

In the sequel of this paper, we refer to this experiment as Experiment ($\star$). Now, we associate the inclusion random variable $X:\Omega\to\mathbb{R}$ to the first step of Experiment~($\star$). Thus, we have the following equality:
\[ \mathbb{P}(\Omega \text{ is } A|X=x)=\mathbb{P}(x \text{ is } A).\]
So, by the total law of probability, we have that:
\[\mathbb{P}(\Omega \text{ is } A)=\sum_{x\in \Omega}\mathbb{P}(\Omega \text{ is } A|X=x)\mathbb{P}(X=x)=\sum_{x\in \Omega}\mathbb{P}(x \text{ is } A)\mathbb{P}(X=x).\]
We might use the notation $\mathbb{P}(X\text{ is } A)$ rather than $\mathbb{P}(\Omega\text{ is } A)$.

In Section \ref{Relationship Between Zadeh's PFL and our Theory}, we will see that how  $\mathbb{P}(\Omega\text{ is }A)$ helps us to define a relationship between Zadeh and our PFL frameworks.

Let $\phi_A:\Omega\to\mathbb{R}$ be the map sending any $x\in \Omega$ to $\mathbb{P}(x\text{ is }A)$. We define
\[\mathrm{Supp}(\phi_A)=\{x\in \Omega:\phi_A(x)\neq 0\}=\{x\in X:\mathbb{P}(x\text{ is }A)\neq 0\}.\]
We call $A$ a \textit{proper} attribute of $\Omega$ with respect to $\phi_A$, whenever $\mathrm{Supp}(\phi_A)\subsetneqq \Omega$. In the sequel, all considered fuzzy attributes are proper.
Fix $x_A\in \Omega\backslash\mathrm{Supp}(\phi_A)$ as a base element for $A$. 

In the sequel, we define some random variables which form some basis of our framework. Then, we reformulate probability formulas coming from these random variables using their underlying random experiments in order to obtain more friendly formulas. By more friendly, we mean for instance,  $\mathbb{P}(x\text{ is }A)$, $\mathbb{P}(x\text{ is }A|Y=y)$, $\mathbb{P}(x\text{ is } A|y\text{ is }B)$ and $\mathbb{P}(x\text{ is } A| y\text{ is } B, Z=z)$.  In the random experiment for selecting $x$ with the attribute $A$ or selecting nothing, we imagine that selecting "nothing" is the same  action as selecting $x_A$. This makes sense, since $\mathbb{P}(x_A\text{ is }A)=0$, and hence $x_A$ is impossible to be selected with respect to $A$. For any $x\in \Omega\backslash\{x_A\}$, we define a random variable $\xi_{x,A}$ to reflect the aforementioned experiment with two outcomes $x$ and "nothing" as follows:
\[ \xi_{x, A}=\left\{\begin{array}{ll} x  & \text{,with the probability } \mathbb{P}(x \text{ is } A),\\ x_A & \text{,with the probability } 1-\mathbb{P}(x \text{ is } A).\end{array}\right.\]
As a convention,  let $\xi_{x_A,A}$ be the constant variable  equals 1. For any $x\in \Omega$, we interpret $\mathbb{E}(\xi_{x,A})$ as a normalized amount of $x$  relative to $A$ and $x_A$. Indeed, we have that:
\[\mathbb{E}(\xi_{x,A})=x\mathbb{P}(x \text{ is } A)+x_A(1-\mathbb{P}(x \text{ is } A))=x_A+(x-x_A)\mathbb{P}(x \text{ is } A).\] 
In other words, $\mathbb{E}(\xi_{x,A})$ is a point between $x_A$ and $x$. Hence, the greater $\mathbb{P}(x\text{ is }A)$, the more $\mathbb{E}(\xi_{x,A})$ is closer to $x$.

Now, assume that we want to have a specific element $x$ as the result of Experiment~($\star$). Then, 
 we consider a random variable $\xi_{X,A}$ with the following property:
\[\mathbb{P}(\xi_{X,A}=\alpha|X=x)=\mathbb{P}(\xi_{x,A}=\alpha)\] 
for any $x\in \Omega$.
It follows that:
\begin{align*}
\mathbb{P}(\xi_{X,A}=\alpha)&=\sum_{x\in \Omega}\mathbb{P}(\xi_{X,A}=\alpha|X=x)\mathbb{P}(X=x)=\sum_{x\in \Omega}\mathbb{P}(\xi_{x,A}=\alpha)\mathbb{P}(X=x).
\end{align*}
Thus, if $\alpha\notin \Omega$, then $\mathbb{P}(\xi_{X,A}=\alpha)=0$. Also, if $\alpha=x$ for some $x\in \Omega\backslash\{x_A\}$, then we have that $\mathbb{P}(\xi_{X,A}=\alpha)=\mathbb{P}(x\text{ is } A)\mathbb{P}(X=x)$.  We also have that
\begin{align*}
\mathbb{P}(\xi_{X,A}=x_A)&=\sum_{x\in \Omega}\mathbb{P}(\xi_{x,A}=x_A)\mathbb{P}(X=x)\\
&=\sum_{x\in \Omega}(1-\mathbb{P}(x \text{ is } A))\mathbb{P}(X=x)\\
&=1-\mathbb{P}(\Omega \text{ is } A).
\end{align*}
We note that $\xi_{X,A}=x_A$ means to select nothing in Experiment ($\star$). 
It follows that:
\[\mathbb{E}(\xi_{X,A})=x_A+\sum_{x\in \Omega}(x-x_A)\mathbb{P}(x\text{ is } A)\mathbb{P}(X=x).\]
We interpret $\mathbb{E}(\xi_{X,A})$ as the mean value of a randomly selected element of $\Omega$ to be $A$, relative to $x_A$. 

Now, let $\Omega$ and $\Omega'$ be two subsets of $\mathbb{R}$, and $A$ and $B$ be two fuzzy attributes of $\Omega$ and $\Omega'$, respectively. Also, we assume that $X:\Omega\to\mathbb{R}$ and $Y:\Omega'\to\mathbb{R}$ are the inclusion random variables. To obtain more friendly formulas, We investigate  different conditional distributions formed of $\xi_{X,A}$, $\xi_{Y,B}$, $X$ and $Y$. To do so, first note that we have the following simple lemma:
\begin{Lemma}\label{help}
Let $Z$ be a real valued random variable and $z_0\in\mathbb{R}$. Then, we have that 
\[\mathbb{E}(Z|E)=z_0+\sum_{Z=z}(z-z_0)\mathbb{P}(Z=z|E).\]
\end{Lemma}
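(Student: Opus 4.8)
The plan is to reduce everything to the definition of conditional expectation together with the normalization of the conditional distribution. Since the statement lives in the discrete setting of this section, I would begin by writing the conditional expectation of $Z$ given the event $E$ as the sum $\mathbb{E}(Z|E)=\sum_{Z=z}z\,\mathbb{P}(Z=z|E)$ taken over the values $z$ in the support of $Z$.

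Next, I would expand the right-hand side of the claimed identity by linearity of the sum, splitting each summand $(z-z_0)\mathbb{P}(Z=z|E)$ into $z\,\mathbb{P}(Z=z|E)-z_0\,\mathbb{P}(Z=z|E)$. This rewrites the right-hand side as $z_0+\sum_{Z=z}z\,\mathbb{P}(Z=z|E)-z_0\sum_{Z=z}\mathbb{P}(Z=z|E)$, separating the part that will reproduce $\mathbb{E}(Z|E)$ from the part built out of $z_0$.

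Then the key step is to invoke the fact that the conditional distribution of $Z$ given $E$ is itself a genuine probability distribution, so that $\sum_{Z=z}\mathbb{P}(Z=z|E)=1$. Substituting this, the standalone $z_0$ and the term $-z_0\sum_{Z=z}\mathbb{P}(Z=z|E)=-z_0$ cancel, and the remaining middle sum is exactly $\mathbb{E}(Z|E)$, which establishes the identity.

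The only real obstacle here is bookkeeping rather than mathematics: if the support of $Z$ is infinite, I must ensure that the series $\sum_{Z=z}z\,\mathbb{P}(Z=z|E)$ converges absolutely, so that the termwise splitting and regrouping used above are legitimate. This is guaranteed precisely when $Z$ is integrable given $E$, which is implicitly assumed whenever the symbol $\mathbb{E}(Z|E)$ is written down, and in the finite-support case it is automatic. Hence the statement is genuinely elementary and the argument is essentially a one-line algebraic manipulation once the normalization $\sum_{Z=z}\mathbb{P}(Z=z|E)=1$ is brought in.
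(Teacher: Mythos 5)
Your argument is correct and is exactly the standard computation the paper has in mind: the paper states this as a ``simple lemma'' and omits the proof entirely, so the intended justification is precisely your expansion $\sum_{Z=z}(z-z_0)\mathbb{P}(Z=z|E)=\mathbb{E}(Z|E)-z_0\sum_{Z=z}\mathbb{P}(Z=z|E)$ combined with the normalization $\sum_{Z=z}\mathbb{P}(Z=z|E)=1$. Your remark about absolute convergence when the support is infinite is a sensible (and correct) bit of care that the paper does not bother to mention.
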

Now, we investigate the aforementioned conditional distributions one by one as it appears in the following bullet points.
\\[3pt]

\noindent$\bullet\;\mathbb{P}(\xi_{y,B}=\beta|X=\alpha)$:\\[3pt]

Naturally, we have the following:
\[\mathbb{P}(\xi_{y,B}=\beta|X=\alpha)=\left\{\begin{array}{ll}
	\mathbb{P}(y\text{ is }B|X=\alpha)&,\beta=y\neq y_B\\
	\mathbb{P}(\neg(y\text{ is }B)|X=\alpha)&,\beta=y_B\\
	0&,\text{otherwise}
\end{array}\right..
\]
Now, it follows from Lemma \ref{help} that
\[\mathbb{E}(\xi_{y,B}|X=\alpha)=y_B+(y-y_B)\mathbb{P}(y\text{ is }B|X=\alpha).\]

\noindent$\bullet\;\mathbb{P}(X=\alpha|\xi_{y,B}=\beta)$:\\[3pt]

We  simply have a friendly formula as below:
\[\mathbb{P}(X=\alpha|\xi_{y,B}=\beta)=\left\{\begin{array}{ll}
	\mathbb{P}(X=\alpha|y\text{ is }B)&,\beta=y\neq y_B\\
		\mathbb{P}(X=\alpha|\neg(y\text{ is }B))&,\beta=y_B\\
		0&,\text{otherwise}
\end{array}\right..
\]
Note that by Bayes rule, we have that
\begin{align*}
	\mathbb{P}(X=\alpha|\neg(y\text{ is }B))&=\frac{\mathbb{P}(\neg(y\text{ is }B)|X=\alpha)\mathbb{P}(X=\alpha)}{\mathbb{P}(\neg(y\text{ is }B))}\\
	&=\frac{\left(1-\mathbb{P}(y\text{ is }B|X=\alpha)\right)\mathbb{P}(X=\alpha)}{1-\mathbb{P}(y\text{ is }B)}\\
	&=\frac{\mathbb{P}(X=\alpha)-\mathbb{P}(X=\alpha|y\text{ is }B)}{1-\mathbb{P}(y\text{ is }B)}.
\end{align*}
Thus, for $\beta=y\neq y_B$, we have that
\[\mathbb{E}(X|\xi_{y,B}=y)=\mathbb{E}(X|y\text{ is }B)=\sum_{x\in \Omega}x\mathbb{P}(X=x|y\text{ is }B).\]
Further, for $\beta=y_B$, we have that
\begin{align*}
	\mathbb{E}(X|\xi_{y,B}=y_B)=\mathbb{E}(X|\neg(y\text{ is }B))&=\sum_{x\in \Omega}x\frac{\mathbb{P}(X=x)-\mathbb{P}(X=x|y\text{ is }B)}{1-\mathbb{P}(y\text{ is }B)}\\
	&=\frac{\mathbb{E}(X)-\mathbb{E}(X|y\text{ is }B)}{1-\mathbb{P}(y\text{ is }B)}.
\end{align*}

\noindent$\bullet\;\mathbb{P}(\xi_{X,A}=\alpha|Y=\beta)$:\\

Let $\alpha\in \Omega\backslash\{x_A\}$. We use the total law of probability as follows:
\begin{align*}
\mathbb{P}(\xi_{X,A}=\alpha|Y=\beta)&=\sum_{x\in \Omega}\mathbb{P}(\xi_{X,A}=\alpha|X=x,Y=\beta)\mathbb{P}(X=x|Y=\beta)\\
&=\sum_{x\in \Omega}\mathbb{P}(\xi_{x,A}=\alpha|X=x,Y=\beta)\mathbb{P}(X=x|Y=\beta)\\
&=\mathbb{P}(\xi_{\alpha,A}=\alpha|X=\alpha,Y=\beta)\mathbb{P}(X=\alpha|Y=\beta).
\end{align*}
Now, it follows from Lemma \ref{help} that
\begin{align*}
	\mathbb{E}(\xi_{X,A}|Y=\beta)&=\sum_{x\in \Omega}x\mathbb{P}(\xi_{X,A}=x|Y=\beta)\\
	&=x_A+\sum_{x\in \Omega}(x-x_A)\mathbb{P}(x\text{ is }A|X=x,Y=\beta)\mathbb{P}(X=x|Y=\beta).
\end{align*}
Thus, if $\xi_{X,A}$ is independent of $Y$ given $X$, then we have that
\[\mathbb{P}(\xi_{X,A}=\alpha|Y=\beta)=\mathbb{P}(\alpha\text{ is }A)\mathbb{P}(X=\alpha|Y=\beta),\]
and hence we have that
\[\mathbb{E}(\xi_{X,A}|Y=y)=x_A+\sum_{x\in \Omega}(x-x_A)\mathbb{P}(x\text{ is }A)\mathbb{P}(X=x|Y=y).\]

\noindent$\bullet\; \mathbb{P}(\xi_{y, B}=\beta|\xi_{x,A}=\alpha)$:\\[3pt]

 Assume that $x\in \Omega$ and $y\in \Omega'$.  We reformulate the probability mass function of the conditional distribution $\xi_{y,B}$ given $\xi_{x,A}$ as follows:
\[ \mathbb{P}(\xi_{y,B}=\beta|\xi_{x,A}=\alpha)=\left\{\begin{array}{ll}\mathbb{P}(y \text{ is } B|x \text{ is } A)&,\beta=y\neq y_B, \alpha=x\neq x_A\\

\mathbb{P}(y \text{ is } B|\neg(x \text{ is } A))&,\beta=y\neq y_B, \alpha=x_A\\

\mathbb{P}(\neg(y \text{ is } B)|x \text{ is } A)&,\beta=y_B, \alpha=x\neq x_A\\

\mathbb{P}(\neg(y \text{ is } B)|\neg(x \text{ is } A))&,\beta=y_B, \alpha=x_A\\

0&,\text{otherwise}\end{array}\right.\]

Recall that in the above reformulation, by $\mathbb{P}(\neg(y\text{ is }‌B))$ and $\mathbb{P}(\neg(y\text{ is } B)|x\text{ is }A)$, we mean $1-\mathbb{P}(y\text{ is } B)$ and $1-\mathbb{P}(y\text{ is } B|x\text{ is }A)$, respectively. Also, it follows from Bayes rule that
\begin{align*}
\mathbb{P}(y\text{ is } B|\neg(x\text{ is }A))&=\frac{\mathbb{P}(\neg(x\text{ is } A)|y\text{ is }B)\mathbb{P}(y\text{ is }B)}{\mathbb{P}(\neg(x\text{ is }A))}\\
&=\frac{(1-\mathbb{P}(x\text{ is } A|y\text{ is }B))\mathbb{P}(y\text{ is }B)}{1-\mathbb{P}(x\text{ is }A)}\\
&= \frac{\mathbb{P}(y\text{ is }B)-\mathbb{P}(y\text{ is } B)|x\text{ is }A)\mathbb{P}(x\text{ is }A)}{1-\mathbb{P}(x\text{ is }A)}.
\end{align*}
We obtain the expected value in this case as follows:
\[\mathbb{E}(\xi_{y,B}|\xi_{x,A}=\alpha)=\left\{\begin{array}{ll}
	y_A+(y-y_A)\mathbb{P}(y\text{ is }B|x\text{ is }A)&,\alpha=x\neq x_A\\
	y_A+(y-y_A)\mathbb{P}(y\text{ is }B|\neg(x\text{ is }A))&, \alpha=x_A\\
	0&,\text{otherwise}
\end{array}\right..\]

\noindent$\bullet\;\mathbb{P}(Y=\beta|\xi_{X,A}=\alpha)$:\\[3pt]

For $\alpha\in \Omega\backslash\{x_A\}$, we obviously have that
\[\mathbb{P}(Y=\beta|\xi_{X,A}=\alpha)=\mathbb{P}(Y=\beta|(\alpha\text{ is }A)\&(X=\alpha)).\]
Now, assume that $\alpha=x_A$. It follows from the Bayes rule that
\[\mathbb{P}(Y=\beta|\xi_{X,A}=x_A)=\frac{\mathbb{P}(\xi_{X,A}=x_A|Y=\beta)\mathbb{P}(Y=\beta)}{\mathbb{P}(\xi_{X,A}=x_A)}.\]
Since we have  $\mathbb{P}(\xi_{X,A}=x_A|Y=\beta)$, then \\[-6pt]
\begin{center}
\scalebox{.85}{$\mathbb{P}(Y=\beta|\xi_{X,A}=x_A)=
	\dfrac{\left(1-\sum_{x\in \Omega}\mathbb{P}(x\text{ is }A|X=x, Y=\beta)\mathbb{P}(X=x|Y=\beta)\right)\mathbb{P}(Y=\beta)}{1-\mathbb{P}(\Omega\text{ is }A)}$.}
\end{center}

Now, by using the Bayes rule, we have that\\[-7pt]
\begin{center}
\scalebox{.85}{$\mathbb{P}(Y=\beta|\xi_{X,A}=x_A)=
	\dfrac{\mathbb{P}(Y=\beta)-\sum_{x\in \Omega}\mathbb{P}(x\text{ is }A|X=x, Y=\beta)\mathbb{P}(Y=\beta|X=x)\mathbb{P}(X=x)}{1-\mathbb{P}(\Omega\text{ is }A)}
$.}
\end{center}
If we once again use the Bayes rule, then \\[-7pt]
\begin{center}
\scalebox{.85}{$\mathbb{P}(Y=\beta|\xi_{X,A}=x_A)=
	\dfrac{\mathbb{P}(Y=\beta)-\sum_{x\in \Omega}\mathbb{P}(Y=\beta|(x\text{ is }A)\&( X=x))\mathbb{P}(x\text{ is }A)\mathbb{P}(X=x)}{1-\mathbb{P}(\Omega\text{ is }A)}$.}
\end{center}

Now, the expected value in this case for $\alpha\neq x_A$ is
\begin{align*}
	\mathbb{E}(Y|\xi_{X,A}=\alpha)&=\sum_{\beta\in \Omega'}\beta\mathbb{P}(Y=\beta|\xi_{X,A}=\alpha)\\
	&=\sum_{\beta\in \Omega'}\beta\mathbb{P}(Y=\beta|(\alpha\text{ is }A)\& (X=\alpha)).
\end{align*}
Also, for $\alpha=x_A$, we have that
\begin{align*}
	\mathbb{E}(Y|\xi_{X,A}=x_A)&=\sum_{\beta\in \Omega'}\beta\mathbb{P}(Y=\beta|\xi_{X,A}=x_A)\\
	&=\frac{\sum_{\beta\in \Omega'}\beta\mathbb{P}(Y=\beta)}{1-\mathbb{P}(\Omega\text{ is }A)}\\
	&-\frac{\sum_{\beta\in \Omega'}\sum_{x\in \Omega}\beta\mathbb{P}(Y=\beta|(x\text{ is }A)\&( X=x))\mathbb{P}(x\text{ is }A)\mathbb{P}(X=x)}{1-\mathbb{P}(\Omega\text{ is }A)}.
\end{align*}
Hence,\\[-7pt]
\begin{center}
\scalebox{.85}{$	\mathbb{E}(Y|\xi_{X,A}=x_A)=\dfrac{\mathbb{E}(Y)-\sum_{\beta\in \Omega'}\sum_{x\in \Omega}\beta\mathbb{P}(Y=\beta|(x\text{ is }A)\&( X=x))\mathbb{P}(x\text{ is }A)\mathbb{P}(X=x)}{1-\mathbb{P}(\Omega\text{ is }A)}$.}
\end{center}
\noindent$\bullet\;\mathbb{P}(X=x|\xi_{X,A}=\alpha)$:\\[3pt]

This could be obtained from the previous case, although it has a special character, and hence we independently investigate it.
It follows from the Bayes rule that
\[\mathbb{P}(X=x|\xi_{X,A}=\alpha)=\frac{\mathbb{P}(\xi_{X,A}=\alpha|X=x)\mathbb{P}(X=x)}{\mathbb{P}(\xi_{X,A}=\alpha)}=\frac{\mathbb{P}(\xi_{x,A}=\alpha)\mathbb{P}(X=x)}{\mathbb{P}(\xi_{X,A}=\alpha)}.\]
Thus, if $\alpha\neq x_A$, then
\begin{equation}\label{X|xi1}
	\mathbb{P}(X=x|\xi_{X,A}=\alpha)=\left\{\begin{array}{ll}
		1 &, \alpha=x\\
		0 &, \alpha\neq x
	\end{array}
	\right.,
\end{equation}
and if $\alpha=x_A$, then
\begin{equation*}
	\mathbb{P}(X=x|\xi_{X,A}=\alpha)=\left\{\begin{array}{ll}
		\displaystyle\frac{\mathbb{P}(X=x_A)}{1-\mathbb{P}(\Omega\text{ is } A)} &, x=x_A\\[12pt]
		\displaystyle	\frac{\mathbb{P}(\neg(x\text{ is } A))\mathbb{P}(X=x)}{1-\mathbb{P}(\Omega\text{ is } A)} &, x\neq x_A
	\end{array}
	\right..
\end{equation*}
We note that $\mathbb{P}(\neg(x_A\text{ is } A))=1-\mathbb{P}(x_A\text{ is } A)=1$. Hence, we can remove the first expression ($x=x_A$), which results only one expression as follows:
\begin{equation}\label{X|xi2}
	\mathbb{P}(X=x|\xi_{X,A}=x_A)=
	\frac{\mathbb{P}(\neg(x\text{ is }A))\mathbb{P}(X=x)}{1-\mathbb{P}(\Omega\text{ is } A)}.
\end{equation}
Therefore, for $\alpha\in \Omega\backslash\{x_A\}$, we have that
\[\mathbb{E}(X|\xi_{X,A}=\alpha)=\sum_{x\in \Omega}x\mathbb{P}(X=x|\xi_{X,A}=\alpha)=\alpha,\]
and for $\alpha=x_A$, we have that
\begin{align*}
	\mathbb{E}(X|\xi_{X,A}=x_A)&=\sum_{x\in \Omega}x\mathbb{P}(X=x|\xi_{X,A}=x_A)=\sum_{x\in \Omega}x	\frac{\mathbb{P}(\neg(x\text{ is }A))\mathbb{P}(X=x)}{1-\mathbb{P}(\Omega\text{ is } A)}\\
	&=\frac{1}{1-\mathbb{P}(\Omega\text{ is }A)}\sum_{x\in \Omega}x(1-\mathbb{P}(x\text{ is }A))\mathbb{P}(X=x)\\
	&=\frac{1}{1-\mathbb{P}(\Omega\text{ is }A)}\left(\sum_{x\in \Omega}x\mathbb{P}(X=x)-\sum_{x\in \Omega}x\mathbb{P}(x\text{ is }A)\mathbb{P}(X=x)\right)\\
	&=\frac{\mathbb{E}(X)-\mathbb{E}(\xi_{X,A})+x_A(1-\mathbb{P}(\Omega\text{ is }A))}{1-\mathbb{P}(\Omega\text{ is }A)}=x_A+\frac{\mathbb{E}(X)-\mathbb{E}(\xi_{X,A})}{1-\mathbb{P}(\Omega\text{ is }A)}.
\end{align*}

\noindent$\bullet\;\mathbb{P}(\xi_{Y,B}=\beta|\xi_{x,A}=\alpha)$:\\[3pt]

For $\beta\in \Omega'\backslash\{y_B\}$, by applying the total law of probability over the values of $Y$, we have that
\begin{align*}
	\mathbb{P}(\xi_{Y,B}=\beta|\xi_{x,A}=\alpha)&=\sum_{y\in \Omega'}\mathbb{P}(\xi_{Y,B}=\beta|\xi_{x,A}=\alpha, Y=y)\mathbb{P}(Y=y|\xi_{x,A}=\alpha)\\
	&=\sum_{y\in \Omega'}\mathbb{P}(\xi_{y,B}=\beta|\xi_{x,A}=\alpha, Y=y)\mathbb{P}(Y=y|\xi_{x,A}=\alpha)\\
	&=\mathbb{P}(\xi_{\beta,B}=\beta|\xi_{x,A}=\alpha, Y=\beta)\mathbb{P}(Y=\beta|\xi_{x,A}=\alpha).
\end{align*}
It follows that $\mathbb{P}(\xi_{Y,B}=\beta|\xi_{x,A}=\alpha)$ equals
\[	\left\{\begin{array}{ll}
	\mathbb{P}(\beta\text{ is }B|(\alpha\text{ is }A)\&(Y=\beta))\mathbb{P}(Y=\beta|\alpha\text{ is }A)&, x=\alpha\neq x_A\\
	\mathbb{P}(\beta\text{ is }B|(\neg(\alpha\text{ is }A))\&(Y=\beta))\mathbb{P}(Y=\beta|\neg(\alpha\text{ is }A))&, \alpha= x_A\\
	0&,\text{otherwise}
	\end{array}\right..
	\]
Hence, $\mathbb{E}(\xi_{Y,B}|\xi_{x,A}=\alpha)$ equals
\[\Small\left\{\begin{array}{ll}
	y_B+\sum_{\beta\in \Omega'}(\beta-y_B)\beta\mathbb{P}(\beta\text{ is }B|(\alpha\text{ is }A)\&(Y=\beta))\mathbb{P}(Y=\beta|\alpha\text{ is }A)&, x=\alpha\neq x_A\\
	y_B+\sum_{\beta\in \Omega'}(\beta-y_B)\beta\mathbb{P}(\beta\text{ is }B|(\neg(\alpha\text{ is }A))\&(Y=\beta))\mathbb{P}(Y=\beta|\neg(\alpha\text{ is }A))&, \alpha= x_A\\
	0&,\text{otherwise}
\end{array}\right..\]

Thus, if selecting $\beta$ as $B$ and $Y=\beta$ are  independent given $\alpha$ is $A$, and  also selecting $Y=\beta$ and selecting $\alpha$ as $A$ are independent, then 
$\mathbb{P}(\xi_{Y,B}=\beta|\xi_{x,A}=\alpha)$ equals
\[	\left\{\begin{array}{ll}
	\mathbb{P}(\beta\text{ is }B|\alpha\text{ is }A)\mathbb{P}(Y=\beta)&, x=\alpha\neq x_A\\
	\mathbb{P}(\beta\text{ is }B|\neg(\alpha\text{ is }A))\mathbb{P}(Y=\beta)&, \alpha= x_A\\
	0&,\text{otherwise}
\end{array}\right..
\]
Hence, $\mathbb{E}(\xi_{Y,B}|\xi_{x,A}=\alpha)$ equals
\[\left\{\begin{array}{ll}
	y_B+\sum_{\beta\in \Omega'}(\beta-y_B)\beta\mathbb{P}(\beta\text{ is }B|\alpha\text{ is }A)\mathbb{P}(Y=\beta)&, x=\alpha\neq x_A\\
	y_B+\sum_{\beta\in \Omega'}(\beta-y_B)\beta\mathbb{P}(\beta\text{ is }B|\neg(\alpha\text{ is }A))\mathbb{P}(Y=\beta)&, \alpha= x_A\\
	0&,\text{otherwise}
\end{array}\right..\]

\noindent$\bullet\;\mathbb{P}(\xi_{y,B}=\beta|\xi_{X,A}=\alpha)$:\\[3pt]

We have that:
\begin{align*}
	\mathbb{P}(\xi_{y,B}=\beta|\xi_{X,A}=\alpha)&=\sum_{x\in \Omega}\mathbb{P}(\xi_{y,B}=\beta|\xi_{X,A}=\alpha, X=x)\mathbb{P}(X=x|\xi_{X,A}=\alpha)\\
	&=\sum_{x\in \Omega}\mathbb{P}(\xi_{y,B}=\beta|\xi_{x,A}=\alpha, X=x)\mathbb{P}(X=x|\xi_{X,A}=\alpha).
\end{align*}
This implies that $\mathbb{P}(\xi_{y,B}=\beta|\xi_{X,A}=\alpha)$ equals
\[\left\{\begin{array}{ll}\mathbb{P}(\xi_{y,B}=\beta|\xi_{\alpha,A}=\alpha, X=\alpha)&, \alpha\neq x_A\\[4pt]
	
 \displaystyle\frac{\sum_{x\in \Omega}\mathbb{P}(\xi_{y,B}=\beta|(\neg(x\text { is }A))\& (X=x))\mathbb{P}(\neg(x\text{ is }A))\mathbb{P}(X=x)}{1-\mathbb{P}(\Omega\text{ is }A)}&, \alpha= x_A\end{array}\right.	.\]

 Hence, for $y\neq y_B$, $\mathbb{P}(\xi_{y,B}=y|\xi_{X,A}=\alpha)$ equals
\[\left\{\begin{array}{ll}
\mathbb{P}(y \text{ is } B| (\alpha \text{ is } A)\& (X=\alpha))&, \alpha\neq x_A\\[3pt]

\frac{\sum_{x\in \Omega}f_y(x)}{1-\mathbb{P}(\Omega\text{ is }A)}&, \alpha=x_A\\

0&,\text{otherwise}\end{array}\right.,\]
where
\[f_y(x)=
	\mathbb{P}(y\text{ is }B|(\neg(x\text{ is } A))\& (X=x))\mathbb{P}(\neg(x\text{ is }A))\mathbb{P}(X=x).\]
Thus, for $\alpha\neq x_A$, we have that
\[\mathbb{E}(\xi_{y,B}|\xi_{X,A}=\alpha)=y_B+(y-y_B)\mathbb{P}(y \text{ is } B| (\alpha \text{ is } A)\& (X=\alpha)).\]
Also, for $\alpha=x_A$, we have that 
\begin{align*}
\mathbb{E}(\xi_{y,B}|\xi_{X,A}=\alpha)=y_A+(y-y_A)\frac{\sum_{x\in \Omega}f_y(x)}{1-\mathbb{P}(\Omega\text{ is }A)}.
\end{align*}

\noindent$\bullet\;\mathbb{P}(\xi_{Y,B}=\beta|\xi_{X,A}=\alpha)$:\\[3pt]

By the total law of probability, we have that
\begin{align*}
\mathbb{P}(\xi_{Y,B}=\beta|\xi_{X,A}=\alpha)&=\sum_{y\in \Omega'}\mathbb{P}(\xi_{Y,B}=\beta|Y=y,\xi_{X,A}=\alpha)\mathbb{P}(Y=y)\\
&=\sum_{y\in \Omega'}\mathbb{P}(\xi_{y,B}=\beta|Y=y,\xi_{X,A}=\alpha)\mathbb{P}(Y=y).
\end{align*}
It follows for $\beta\neq y_B$ that 
\[\mathbb{P}(\xi_{Y,B}=\beta|\xi_{X,A}=\alpha)=\mathbb{P}(\xi_{\beta,B}=\beta|Y=\beta,\xi_{X,A}=\alpha)\mathbb{P}(Y=\beta).\]
It follows from the previous part that for $\alpha\neq x_A$, we have that
\begin{align*}
	\mathbb{P}(\xi_{\beta,B}=\beta|Y=\beta,\xi_{X,A}=\alpha)&=\mathbb{P}(\beta\text{ is }B|(\alpha\text{ is }A)\&(Y=\beta)\&(X=\alpha))
\end{align*}
and for $\alpha=x_A$, we have that
{\Small \begin{align*}
	&\mathbb{P}(\xi_{\beta,B}=\beta|Y=\beta,\xi_{X,A}=x_A)=\frac{\sum_{x\in \Omega}f_{\beta}(x)}{1-\mathbb{P}(\Omega\text{ is }A|Y=\beta)},\\
		&\scriptsize  f_{\beta}(x)=
			\mathbb{P}(\beta\text{ is }B|(\neg(x\text{ is } A))\& (X=x)\&(Y=\beta))\mathbb{P}(\neg(x\text{ is }A)|X=x,Y=\beta)\mathbb{P}(X=x|Y=\beta).
\end{align*}}
Therefore, by applying the Bayes rule, $	\mathbb{P}(\xi_{Y,B}=\beta|\xi_{X,A}=\alpha)$ equals
\begin{align*}
	&\small \left\{\begin{array}{ll}
		\mathbb{P}(\beta\text{ is }B|(\alpha\text{ is } A)\& (X=\alpha)\&(Y=\beta))\mathbb{P}(Y=\beta)&, \alpha\neq x_A\\[7pt]
		\frac{\sum_{x\in \Omega}g_{\beta}(x)}{1-\mathbb{P}(\Omega\text{ is }A|Y=\beta)}&,\alpha=x_A
	\end{array}\right.,
\end{align*}
where, we have that
{\tiny\[ g_{\beta}(x)=
	\mathbb{P}(\beta\text{ is }B|(\neg(x\text{ is } A))\& (X=x)\&(Y=\beta))\mathbb{P}(\neg(x\text{ is }A)|X=x,Y=\beta)\mathbb{P}(Y=\beta|X=x)\mathbb{P}(X=x).\]}
 Thus, for $\alpha\neq x_A$, we have that:
\begin{align*}
	\mathbb{E}(\xi_{Y,B}|\xi_{X,A}=\alpha)&=\sum_{y\in \Omega'}y \mathbb{P}(\xi_{Y,B}=y|\xi_{X,A}=\alpha)\\
	&=y_B+\sum_{y\in \Omega'}(y-y_B)\mathbb{P}(\xi_{Y,B}=\beta|\xi_{X,A}=\alpha)\\
	&=y_B+\sum_{y\in \Omega'}(y-y_B)	\mathbb{P}(\beta\text{ is }B|(\alpha\text{ is } A)\& (X=\alpha)\&(Y=\beta))\mathbb{P}(Y=\beta).
\end{align*}
Also, for $\alpha=x_A$, we have that
\begin{align*}
	\mathbb{E}(\xi_{Y,B}|\xi_{X,A}=x_A)&=y_B+\sum_{y\in \Omega'}(y-y_B)\mathbb{P}(\xi_{Y,B}=\beta|\xi_{X,A}=x_A)\\
	&=y_B+\sum_{y\in \Omega'}\left(\frac{(y-y_B)\sum_{x\in \Omega}g_{y}(x)}{1-\mathbb{P}(\Omega\text{ is }A|Y=y)}\right).
\end{align*}

\section{Some Models of our PFL Framework}\label{Some Models of our PFL Framework}
In this section, we introduce several different models of our framework. Let $\Omega$ and $\Omega'$ be two arbitrary subsets of $\mathbb{R}$, and $X$ and $Y$ be their corresponding inclusion random variables.  Also, let $A$ and $B$ be two proper fuzzy attributes of $X$ and $Y$, respectively. Let's fix the base elements $x_A$ and $y_B$ with respect to $A$ and $B$, respectively. Recall that $\mathbb{P}(x_A\text{ is }A)=\mathbb{P}(y_B\text{ is }B)=0$. We can construct the main part of a model of our PFL framework by defining the following subjective probabilities:
\[ \mathbb{P}(x \text{ is } A),\quad \mathbb{P}(y \text{ is } B| x \text{ is } A),\quad\forall\;x\in \Omega,\,y\in \Omega'.\]
However, probabilities such as $\mathbb{P}(X=x|y\text{ is }B)$ remain undefined. Because, these probabilities are not subjective, and they depend on their underlying random experiments. 

Note that $\mathbb{P}(y \text{ is } B| x \text{ is } A)$ could come from any distribution. In our case, to calculate $\mathbb{P}(y \text{ is } B| x \text{ is } A)$, we focus on the fuzzy point of view and introduce \textit{standard and non-standard models} (see below). Now,
we define standard models as follows. Let $T$ be a $t$-norm (for several examples of $t$-norms, see Section \ref{tnorms}). 
Then, for any $x_0\in \Omega$ and $y_0\in \Omega'$, we define:
\begin{align*}
	\mathbb{P}_T(y_0\text{ is } B|x_0\text{ is } A)&=\frac{T\left(\mathbb{P}(y_0\text{ is }B), \mathbb{P}(x_0\text{ is }A)\right)}{\mathbb{P}(x_0 \text{ is }A)}.
\end{align*}
We note that 
\begin{align*}
	\mathbb{P}_T(\neg(y_0\text{ is } B)|x_0\text{ is } A)&=1-\mathbb{P}_T(y_0\text{ is } B|x_0\text{ is } A),
\end{align*}
and $\mathbb{P}_T(y_0\text{ is } B|\neg(x_0\text{ is } A))$ and $\mathbb{P}_T(\neg(y_0\text{ is } B)|\neg(x_0\text{ is } A))$ are determined by the Bayes rule and the aforementioned two probability formulas. 

Note that a $t$-norm $T$ is commutative and associative, and hence the order of applying the $t$-norm on values is not important. Hence, $T(\alpha_1,\ldots,\alpha_l)$ has only one meaning.  In general, we define:
\[\mathbb{P}_T\left(\bigwedge_{i=1}^n(y_i\text{ is }B_i)\left|\bigwedge_{j=1}^m(x_i\text{ is }A_i)\right)\right.=\frac{T\left(\mathbb{P}(z_1\text{ is }C_1),\ldots, \mathbb{P}(z_{m+n}\text{ is } C_{m+n})\right)}{T(\mathbb{P}(z_{n+1}\text{ is }C_{n+1}),\ldots, \mathbb{P}(z_{m+n}\text{ is } C_{m+n}))},\]
where we have that $z_i=y_i$ and $C_i=B_i$ for $1\le i\le n$, and  $z_j=x_j$ and $C_j=A_j$ for $n+1\le j\le n+m$. In the above expression, by the symbol "$\wedge$" we mean the logical "and".
When there is no ambiguity, we write $\mathbb{P}$ instead of $\mathbb{P}_T$. Therefore, to obtain the subjective part of standard models, we only need to define $\mathbb{P}(x\text{ is }A)$ and fix a $t$-norm. It is worth mentioning that the conditional distribution $\xi_{y,B}$ given $\xi_{x,A}=\alpha$ satisfies Golden Property  with respect to $\xi_{y,B}=y_B$ (see Section \ref{Golden Property} for the definition of this property). We note that the following definition does not determine a joint distribution:
\[\mathbb{P}(\xi_{x,A}=\alpha, \xi_{y,B}=\beta)=T(\mathbb{P}(\alpha \text{ is }A), \mathbb{P}(\beta \text{ is }B)).\]
Indeed, by the conditional probability formulas we defined before, we have that 
\begin{align*}
	\mathbb{P}(\xi_{x,A}=\alpha, \xi_{y,B}=\beta)&=\mathbb{P}(\xi_{x,A}=\alpha| \xi_{y,B}=\beta)\mathbb{P}(\xi_{y,B}=\beta)\\
	&=\mathbb{P}(\xi_{y,B}=\beta| \xi_{x,A}=\alpha)\mathbb{P}(\xi_{x,A}=\alpha).
\end{align*}
It follows that
\begin{align*}
	\mathbb{P}(\xi_{x,A}=x, \xi_{y,B}=y)&=T(\mathbb{P}(\alpha \text{ is }A), \mathbb{P}(\beta \text{ is }B)),\\
	\mathbb{P}(\xi_{x,A}=x_A, \xi_{y,B}=y)&=\mathbb{P}(\xi_{y,B}=y)- \mathbb{P}(\xi_{x,A}=x, \xi_{y,B}=y),\\
	\mathbb{P}(\xi_{x,A}=x, \xi_{y,B}=y_B)&=\mathbb{P}(\xi_{x,A}=x)- \mathbb{P}(\xi_{x,A}=x, \xi_{y,B}=y),\\
	\mathbb{P}(\xi_{x,A}=x_A, \xi_{y,B}=y_B)&=\mathbb{P}(\xi_{y,B}=y_B)- \mathbb{P}(\xi_{x,A}=x, \xi_{y,B}=y_B).
\end{align*}
We provide two generalizations of standard models in Subsection \ref{generlizedmodel} and Subsection~\ref{randomgeneralizedmodel}.
\subsection{Classic Model}
Let $\widetilde{A}$ be the fuzzy subset of $\Omega$ associated to $A$. We define the {\it cardinality} of $\widetilde{A}$ as follows:
\[ \lVert \widetilde{A}\rVert=\sum_{x\in \Omega}\mu_A(x).\]
In the sequel, we assume that any considered fuzzy attribute has a non-empty support (i.e., there exists $x\in \Omega$ for which $\mu_A(x)>0$ for any $\Omega\subseteq\mathbb{R}$ and a considered fuzzy attribute $A$ of $\Omega$). Now, we define:
\[ \mathbb{P}(x_0 \text{ is } A)=\frac{\mu_A(x_0)}{\lVert \widetilde{A}\rVert}=\frac{\mu_A(x_0)}{\sum_{x\in \Omega}\mu_A(x)}\]
for any $x_0\in \Omega$. Therefore, we have that:

\begin{align*}
\mathbb{P}(\Omega\text{ is } A)&=\frac{1}{\lVert \widetilde{A}\rVert}\sum_{x\in \Omega}\mu_A(x)\mathbb{P}(X=x),\\
\mathbb{E}(\xi_{X,A})&=x_A+\frac{1}{\lVert \widetilde{A}\rVert}\sum_{x\in \Omega}(x-x_A)\mu_A(x)\mathbb{P}(X=x).
\end{align*}
In this model, we see that  the following equality does not necessarily hold:
\[\mathbb{P}(x_0 \text{ is } A)+\mathbb{P}(x_0 \text{ is } not\,A)=1.\]
To deal with the above equality, we provide the following proposition.
\begin{Proposition}
In the classic model, for any $x\in \Omega$, we have that
\[\mathbb{P}(x \text{ is } A)+\mathbb{P}(x \text{ is } not\,A)=1\]
if and only if $|\Omega|=2$ (i.e., $\Omega$ has two elements) and $\widetilde{A}$ is a crisp singleton.
\end{Proposition}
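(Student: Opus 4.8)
The plan is to translate the identity into a single algebraic relation in the membership values and then exploit the quantifier ``for any $x$'' together with the properness hypothesis. First I would record the two ingredients supplied by the classic model. By definition $\mathbb{P}(x\text{ is }A)=\mu_A(x)/\lVert\widetilde A\rVert$. Reading $not\,A$ as the fuzzy attribute whose membership degree is $1-\mu_A$ (the standard fuzzy complement), the same model gives $\mathbb{P}(x\text{ is }not\,A)=(1-\mu_A(x))/\sum_{z\in\Omega}(1-\mu_A(z))$. Writing $n=|\Omega|$ and $s=\lVert\widetilde A\rVert$, the denominator of the second expression is $\sum_{z\in\Omega}(1-\mu_A(z))=n-s$, and both $s>0$ and $n-s>0$ hold by the standing non-empty-support assumption, so all denominators are legitimate.

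Next I would clear denominators in the target equation $\mu_A(x)/s+(1-\mu_A(x))/(n-s)=1$. Multiplying by $s(n-s)$ and simplifying reduces it to the linear relation
\[\mu_A(x)\,(n-2s)=s\,(n-s-1),\]
which must hold for every $x\in\Omega$ at once. This is the crux: the right-hand side is independent of $x$, so the behaviour of the identity is governed entirely by the coefficient $n-2s$.

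I would then split into two cases. If $n-2s\neq 0$, the displayed relation forces $\mu_A(x)$ to equal the constant $s(n-s-1)/(n-2s)$ for all $x$; but properness of $A$ (namely $\mathrm{Supp}(\phi_A)\subsetneqq\Omega$) guarantees some $x$ with $\mu_A(x)=0$, so this constant is $0$, whence $s=\sum_{z\in\Omega}\mu_A(z)=0$, contradicting non-empty support. Hence this case cannot occur. If instead $n-2s=0$, i.e.\ $s=n/2$, the relation becomes $0=s(n-s-1)$; since $s>0$ we get $n-s-1=0$, and combining with $s=n/2$ yields $n=2$ and $s=1$. Thus $\Omega=\{x_1,x_2\}$ with $\mu_A(x_1)+\mu_A(x_2)=1$, and properness (some membership value equals $0$) then forces one value to be $0$ and the other to be $1$, i.e.\ $\widetilde A$ is a crisp singleton.

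For the converse I would simply substitute: if $n=2$ and $\widetilde A$ is a crisp singleton then $s=1$ and $n-s=1$, and a direct evaluation shows that both the element of membership $1$ and the element of membership $0$ satisfy the identity. The only delicate point in the whole argument is the use of the properness hypothesis: it is exactly what eliminates the spurious constant-membership solutions and upgrades the numerical conclusion ``$n=2$, $s=1$'' (which by itself would permit, e.g., memberships $0.6$ and $0.4$) to the genuinely crisp singleton asserted in the statement.
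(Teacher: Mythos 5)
Your proof is correct. Both you and the paper reduce the identity to the same pointwise relation $\mu_A(x)\bigl(|\Omega|-2\lVert\widetilde A\rVert\bigr)=\lVert\widetilde A\rVert\bigl(|\Omega|-\lVert\widetilde A\rVert-1\bigr)$, but you then diverge: the paper sums this relation over all $x\in\Omega$, obtains the quadratic $|\Omega|^2-(2+\lVert\widetilde A\rVert)|\Omega|+2\lVert\widetilde A\rVert=0$, solves it to get $|\Omega|=\lVert\widetilde A\rVert$ or $|\Omega|=2$, discards the first root via properness, and then returns to the pointwise relation to finish the $|\Omega|=2$ case. You instead observe that the right-hand side is independent of $x$ and split on whether the coefficient $|\Omega|-2\lVert\widetilde A\rVert$ vanishes: if not, $\mu_A$ is constant, and properness (which supplies an element of membership $0$) collapses everything to $\lVert\widetilde A\rVert=0$, a contradiction; if so, the numerical constraints force $|\Omega|=2$ and $\lVert\widetilde A\rVert=1$ directly. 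Your route avoids the quadratic and the summation step entirely and is somewhat more economical; the paper's summation trick is a standard way to decouple the $x$-dependence but costs an extra round trip back to the pointwise equation. One tiny attribution slip: the positivity of $|\Omega|-\lVert\widetilde A\rVert$ follows from properness (the element with $\mu_A=0$ contributes $1$ to that sum) rather than from the non-empty-support assumption, which only gives $\lVert\widetilde A\rVert>0$; this does not affect the validity of the argument.
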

\begin{proof}
Let $x\in \Omega$. then, we have that
\[\mathbb{P}(x \text{ is } A)+\mathbb{P}(x \text{ is } not\,A)=\frac{\mu_A(x)}{\lVert \widetilde{A}\rVert}+\frac{\mu_{not\,A}(x)}{\lVert \widetilde{not\,A}\rVert},\]
while $\mu_{not\,A}(x)=1-\mu_A(x)$ and
\[\lVert \widetilde{not\,A}\rVert=\sum_{x'\in \Omega}\mu_{not\,A}(x')=\sum_{x'\in \Omega}(1-\mu_{A}(x'))=|\Omega|-\lVert \widetilde{A}\rVert.\]
One could easily see that if   $ \widetilde{A}$ is a crisp singleton subset of $\Omega$, then the desired equality in the proposition is satisfied. Conversely, for any $x\in \Omega$, we have that
\[ \frac{\mu_A(x)}{\lVert \widetilde{A}\rVert}+\frac{1-\mu_A(x)}{|\Omega|-\lVert \widetilde{A}\rVert}=1,\]
which implies that
\begin{equation}\label{classic}
    \mu_A(x)(|\Omega|-2\lVert \widetilde{A}\rVert))=\lVert \widetilde{A}\rVert(|\Omega|-\lVert \widetilde{A}\rVert-1).
\end{equation}
Now, we have that
\[\sum_{x\in \Omega}\mu_A(x)(|\Omega|-2\lVert \widetilde{A}\rVert)=\sum_{x\in \Omega}\lVert \widetilde{A}\rVert(|\Omega|-\lVert \widetilde{A}\rVert-1),\]
and hence, 
\[\lVert \widetilde{A}\rVert(|\Omega|-2\lVert \widetilde{A}\rVert)=|\Omega|\lVert \widetilde{A}\rVert(|\Omega|-\lVert \widetilde{A}\rVert-1).\]
Thus, $|\Omega|^2-(2+\lVert \widetilde{A}\rVert)|\Omega|+2\lVert \widetilde{A}\rVert=0$, and hence $|\Omega|=\frac{2+\lVert \widetilde{A}\rVert\pm\left|\lVert \widetilde{A}\rVert-2\right|}{2}$. Therefore, $|\Omega|=\lVert \widetilde{A}\rVert$ or $|\Omega|=2$. In the first case,  for any $x\in \Omega$, we have $\mu_A(x)=1$, which implies that $\mathbb{P}(x_A\text{ is }A)=\frac{1}{|\Omega|}$, a contradiction with the fact that $A$ is proper (as we assumed in the previous section). If $|\Omega|=2$, then it follows from Equation (\ref{classic}) that 
\[2\mu_A(x_i)(1-\lVert \widetilde{A}\rVert))=\lVert \widetilde{A}\rVert(1-\lVert \widetilde{A}\rVert),\]
for $i=1,2$, where $\Omega=\{x_1,x_2\}$.
This implies that $\lVert\widetilde{A}\rVert=1$ or $\lVert\widetilde{A}\rVert=2\mu_A(x_i)$ for $i=1,2$. Consequently, $\lVert\widetilde{A}\rVert=1$ or $\mu_A(x_1)=\mu_A(x_2)$. First, consider the case with $\lVert\widetilde{A}\rVert=1$. Then, it follows from $\mu_A(x_A)=0$ that $\widetilde{A}$ is a crisp singleton. To investigate the second case,  we note that $\mu_A(x_1)=\mu_A(x_2)$ implies that $\lVert\widetilde{A}\rVert=0$, since $x_A=x_1$ or $x_A=x_2$ with $\mu_A(x_A)=0$. However, $\lVert\widetilde{A}\rVert=0$ is impossible, since we have assumed in this model that $\lVert\widetilde{A}\rVert\neq 0$. Hence, the only acceptable case is that $\widetilde{A}$ is a crisp singleton.
\end{proof}

Now, to have a standard model, we provide the formula for the conditional probability $\mathbb{P}(y\text{ is }B|x\text{ is }A)$. Let $T$ be a $t$-norm.
For any $x_0\in \Omega$ and $y_0\in \Omega'$, we have that:
\begin{align*}
 \mathbb{P}(y_0\text{ is } B|x_0\text{ is } A)&=\frac{T\left(\mathbb{P}(y_0\text{ is }B), \mathbb{P}(x_0\text{ is }A)\right)}{\mathbb{P}(x_0 \text{ is }A)}=\frac{T\left(\frac{\mu_B(y_0)}{\lVert \widetilde{B}\rVert},\frac{\mu_A(x_0)}{\lVert \widetilde{A}\rVert}\right)}{\frac{\mu_A(x_0)}{\lVert \widetilde{A}\rVert}}.
 \end{align*}
A similar model is the  Zadeh's PFL introduced in  \cite{zadeh1996fuzzy} and \cite{zadeh1968probability}. However, on the contrary to  Zadeh who associated probabilities to fuzzy events of a set $\Omega$,  we associate probabilities to crisp events of $\Omega$ with respect to a fixed fuzzy attribute of $\Omega$. In Section \ref{Relationship Between Zadeh's PFL and our Theory}, we clarify the relationship between our PFL theory and Zadeh's PFL.

\subsection{Classic Probability Based Model}

In this model, the chance for an element $x_0$ to be selected as $A$ depends on both $\mu_A(x_0)$ and $\mathbb{P}(X=x_0)$. Indeed, we define:
\[\mathbb{P}(x_0\text{ is } A)=\mu_A(x_0)\mathbb{P}(X=x_0).\]
It follows that:
\[\mathbb{P}(\Omega\text{ is } A)=\sum_{x\in \Omega}\mu_A(x)\mathbb{P}(X=x)^2.\]
Let $T$ be a $t$-norm. Then, to obtain a standard model,  we have that
\[ \mathbb{P}(y_0\text{ is } B|x_0\text{ is } A)=\frac{T(\mu_B(y_0)\mathbb{P}(Y=y_0), \mu_A(x_0)\mathbb{P}(X=x_0))}{\mu_A(x_0)\mathbb{P}(X=x_0)}.\]
However, a non-standard conditional probability could be defined as follows:
\[\mathbb{P}(y_0\text{ is } B|x_0\text{ is } A)=\frac{T(\mu_B(y_0), \mu_A(x_0))}{\mu_A(x_0)}\mathbb{P}(Y=y_0|X=x_0).\]
One could see that for the above non-standard conditional probability, the Bayes rule is satisfied.
\subsection{Simple Fuzzy Model}
In this model, the chance for an element $x_0$ to be selected as $A$ is only depended on $\mu_A(x_0)$. Indeed, we define:
\[\mathbb{P}(x_0\text{ is } A)=\mu_A(x_0).\]
It follows that:
\[\mathbb{P}(\Omega\text{ is } A)=\sum_{x\in \Omega}\mu_A(x)\mathbb{P}(X=x).\]
Let $T$ be a $t$-norm. Then we define
\[ \mathbb{P}(y_0\text{ is }B|x_0\text{ is } A)=\frac{T(\mu_B(y_0), \mu_A(x_0))}{\mu_A(x_0)}.\]
\begin{Example}\label{exreproductive}
	The probabilities of reaching reproductive capability of a certain species in different days after the birth is given in Table \ref{speciestable}. The bar chart of these probability values is provided in Figure \ref{barchart}. For simplicity, the small probability values for more than 199 days are ignored and assumed to be zero. We assume that $\Omega$ is the set of all integers from 0 to 199 reflecting the days order. Let us assume that $X$ is the inclusion random variable associated to $\Omega$. Consider the fuzzy attributes "early", "normal" and "late" for $\Omega$ according to Figure \ref{fuzzyreproductive}. Denote "early", "normal" and "late" by $A$, $B$ and $C$, respectively. So, it follows from Figure \ref{fuzzyreproductive} that
	\begin{align*}
		\mu_A(x)&=\left\{\begin{array}{ll}
			1 & ,0\le x\le 40\\
			\dfrac{100-x}{60}& , 40\le x\le 100\\
			 0&,\text{otherwise}\end{array} \right.,\\
		\mu_B(x)&=\left\{\begin{array}{ll}
			\dfrac{x}{80}& , 0\le x\le 80
			\\
			1 &, 80\le x\le 120\\
			\dfrac{200-x}{80} &, 120\le x\le 200\\ 
			0&,\text{otherwise}\end{array} \right.,\\
		\mu_C(x)&=\left\{\begin{array}{ll}
			\dfrac{x-100}{60}&, 100\le x\le 160\\
			1 & ,160\le x\le 200\\
			 0&,\text{otherwise}\end{array} \right..\\
	\end{align*}
	Thus, for instance, we have that:
	\[\mathbb{P}(57\text{ is } A)=\mu_A(57)=\frac{43}{60},\qquad \mathbb{P}(57\text{ is } B)=\mu_B(57)=\frac{57}{80}.\]
	Now, we have that
	\begin{align*}
		\mathbb{P}(\Omega\text{ is } A)&=\sum_{x\in \Omega}\mu_A(x)\mathbb{P}(X=x)\approx 0.2057917.\\
		\mathbb{P}(\Omega\text{ is } B)&=\sum_{x\in X}\mu_B(x)\mathbb{P}(X=x)
		=0.878925\\
		\mathbb{P}(\Omega\text{ is } C)&=\sum_{x\in X}\mu_C(x)\mathbb{P}(X=x)\approx 0.1967083
	\end{align*}
	Now, we consider the  base points $x_A=x_C=100$ and $x_B=200$. It follows that
	\begin{align*}
		\mathbb{E}(\xi_{X,A})&=x_A+\sum_{x\in \Omega}(x-x_A)\mu_A(x)\mathbb{P}(X=x)\approx 93.0572083,\\
		\mathbb{E}(\xi_{X,B})&=x_B+\sum_{x\in \Omega}(x-x_B)\mu_B(x)\mathbb{P}(X=x)= 110.649525, \\
		\mathbb{E}(\xi_{X,C})&=x_C+\sum_{x\in X}(x-x_C)\mu_C(x)\mathbb{P}(X=x)\approx 108.1719583.
	\end{align*}
It follows that occurring a day in $\Omega$  (other than base elements) as $normal$ in Experiment~($\star$) is more probable than occurring a day as $early$ or $high$. Because, $\mathbb{E}(\xi_{X,B})$ is far enough from $x_B=200$, while $\mathbb{E}(\xi_{X,A})$ and $\mathbb{E}(\xi_{X,C})$ are close to $x_A=x_C=100$.

	By considering the minimum $t$-norm, we have that:
	\begin{align*}
		\mathbb{P}(57\text{ is }B| 57\text{ is }A)&=\frac{\min\{\mu_B(57),\mu_A(57)\}}{\mu_A(57)}=\frac{\min\{\frac{57}{80},\frac{43}{60}\}}{\frac{43}{60}}= \frac{\frac{57}{80}}{\frac{43}{60}}=\frac{171}{172},\\
		\mathbb{P}(57\text{ is }A| \neg(57\text{ is }B))&=\frac{\mathbb{P}(57\text{ is }A)-\mathbb{P}(57\text{ is }A|57\text{ is }B)\mathbb{P}(57\text{ is }B)}{1-\mathbb{P}(57\text{ is }B)}\\
		&=\frac{\mu_A(57)-\min\{\mu_A(57), \mu_B(57)\}}{1-\mu_B(57)}=\frac{\frac{43}{60}-\frac{57}{80}}{1-\frac{57}{80}}=\frac{1}{69}.
	\end{align*}
We note that $\mathbb{P}(57\text{ is }A|57\text{ is }B)=1$.
	The probability value 1 here, means that when we certainly select $57$ as early, then we certainly select it as normal. Note that by changing the model  and the $t$-norm, these probability values might change as well (for instance, see Generalized Models and Random Generalized Models in Subsections \ref{generlizedmodel} and \ref{randomgeneralizedmodel}). 
	
	We also have that
	\begin{align*}
		\mathbb{P}(\xi_{X,B}=57|57\text{ is }A)&=\mathbb{P}(57\text{ is }B|(57\text{ is }A)\&(X=57))\mathbb{P}(X=57|57\text{ is }A).
	\end{align*}
	Now, let suppose selecting $57$ as $B$ and $X=57$ are  independent given "57 is $A$", and  also selecting $X=57$ and selecting 57 as $A$ are independent. Then, we have that
	\[\mathbb{P}(\xi_{X,B}=57|57\text{ is }A)=\mathbb{P}(57\text{ is }B|57\text{ is }A)\mathbb{P}(X=57)=\frac{171}{172}\times 0.005=0.00497093.\]

The Python code for this example is available  in \href{https://github.com/joseffaghihi/A-Fundamental-Probabilistic-Fuzzy-Logic-Framework-Suitable-for-Causal-Reasoning.git}{this}  Github repository.

	\begin{table}
		\Small
		\begin{center}
			\begin{tabular}{|l|c|c|c|c|c|c|c|c|c|c|c|c|c|c|c|}
				\hline
				Center of Interval & 5 & 15 & 25 & 35 & 45 & 55 &65 & 75 &85 & 95\\
				\hline
				Probability &‌0 & 0.0015 & 0.002 & 0.005 & 0.02 & 0.05 &0.085 & 0.105 & 0.13 &0.135 \\
				\hline
				Center of Interval & 105 & 115 & 125 & 135 & 145 & 155 &165 & 175 &185 & 195\\
				\hline
				Probability &‌0.12 & 0.1 & 0.075 & 0.06 & 0.04 & 0.03 &0.02 & 0.014 & 0.0065 &0.001 \\
				\hline
			\end{tabular}
			\caption{\Small Reproductive capability of a specific species.\\ Each probability value is associated to the interval with the given center and the length 10 days. For instance, the first interval is $[0,9]$. For the sake of simplicity, it is assumed that each interval is uniformly distributed.}\label{speciestable}
		\end{center}
	\end{table}
	
	\pgfplotstableread[row sep=\\,col sep=&]{
		interval & prob \\
		5    & 0  \\
		15   & 0.0015 \\
		25   & 0.002 \\
		35   & 0.005  \\
		45   & 0.02 \\
		55   & 0.05\\
		65   & 0.085\\
		75   & 0.105 \\
		85   & 0.13\\
		95  & 0.135\\
		105 & 0.12\\
		115 & 0.1\\
		125 & 0.075\\
		135 & 0.06\\
		145 & 0.04\\
		155 & 0.03\\
		165 & 0.02\\
		175 & 0.014\\
		185 & 0.0065\\
		195 & 0.001\\
	}\mydata
	
	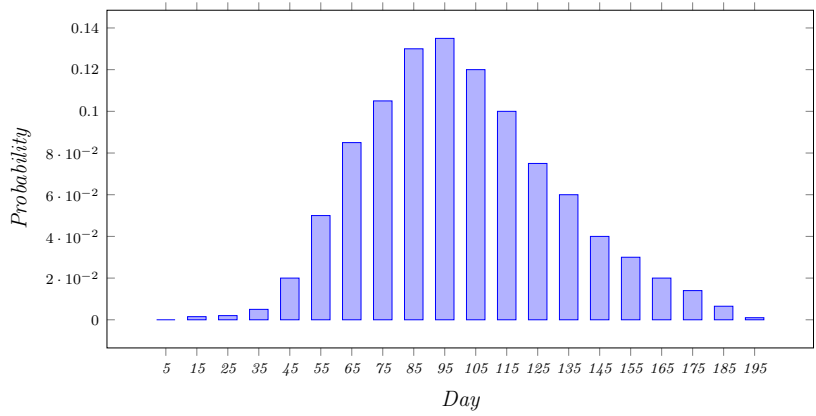
\begin{figure}
		\begin{center}
			\begin{tikzpicture}[scale=.7]
				\begin{axis}[
					ybar,
					symbolic x coords={5,15,25,35,45, 55, 65, 75, 85, 95, 105, 115, 125,135, 145,155, 165, 175, 185, 195},
					xtick=data, width=15cm, height=8cm, font=\tiny, xlabel ={\normalsize{Day}}, ylabel={\normalsize{Probability}}
					]
					\addplot table[x=interval,y=prob]{\mydata};
				\end{axis}
			\end{tikzpicture}
			\caption{\small The bar chart of Table \ref{speciestable}. Obviously, the probability values are not symmetrically distributed.}\label{barchart}
		\end{center}
	\end{figure}
	
	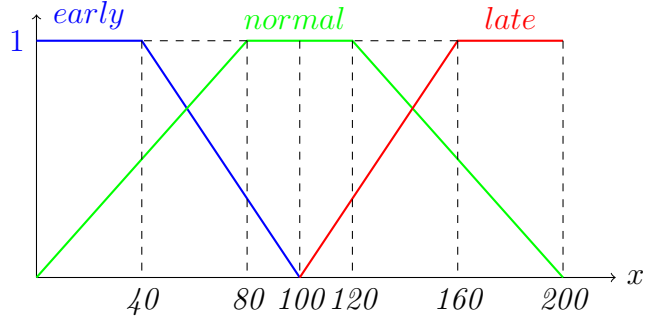
\begin{figure}
		\begin{center}
			\begin{tikzpicture}[scale=.7]
				\draw[->](0,0)--(11,0)node[right]{$ x $};
				\draw[->](0,0)--(0,5);
				\draw[blue, thick] (0,4.5)node[left]{$1$} --node[above]{$early$}(2,4.5)--(5,0);
				\draw[green, thick] (0,0) --(4,4.5)node[above]{$\qquad\quad normal$}-- (6,4.5)--(10,0);
				\draw[red, thick] (5,0) --(8,4.5)-- node[above]{$late$}(10, 4.5);
				\draw[dashed](5,4.5) --(5,0)node[below]{100};
				\draw[dashed](10,4.5) --(10,0)node[below]{200};
				\draw[dashed] (8,4.5)-- (6, 4.5);
				\draw[dashed] (4,4.5)-- (2, 4.5);
				\draw[dashed](2,4.5) --(2,0)node[below]{40};
				\draw[dashed](4,4.5) --(4,0)node[below]{80};
				\draw[dashed](6,4.5) --(6,0)node[below]{120};
				\draw[dashed](8,4.5)  --(8,0)node[below]{160};
			\end{tikzpicture}
		\end{center}
		\caption{\small Fuzzy attributes "$early$", "$normal$" and "$late$" for $\Omega$ described in Example \ref{exreproductive}.}\label{fuzzyreproductive}
	\end{figure}
\end{Example}

\subsection{Relative Fuzzy Model}\label{RFM}
Let $A$, $B$ and $C$ be three fuzzy subsets of $\Omega$. In this model, for any $x_0\in\Omega$, we define:
\[\mathbb{P}(x_0\text{ is }A)=\frac{\mu_A(x_0)}{\lVert x_0\rVert},\quad \lVert x_0\rVert:=\mu_A(x_0)+\mu_B(x_0)+\mu_C(x_0).\]
This model is useful, when we intend to select an element of $\Omega$ as $A$, $B$ or $C$. The standard conditional distributions are obtained as follows:
\[\mathbb{P}(y_0\text{ is }B|x_0\text{ is }A)=\frac{T\left(\frac{\mu_B(y_0)}{\lVert y_0\rVert},\frac{\mu_A(x_0)}{\lVert x_0\rVert} \right)}{\frac{\mu_A(x_0)}{\lVert x_0\rVert}}.\]
It follows that the above conditional probability coincides with the corresponding formula in Simple Fuzzy Model, when we use the minimum $t$-norm. 

\subsection{Membership Degree Scaled Model}
In this model, we scale values of $X$ with respect to their membership degrees. Indeed, we define:
\[\mathbb{P}(x_0\text{ is }A)=\mathbb{P}(X=\mu_A(x_0)x_0).\]
Thus, the standard conditional probabilities are defined as follows:
\[\mathbb{P}(y_0\text{ is }B|x_0\text{ is }A)=\frac{T\left(\mathbb{P}(Y=\mu_B(y_0)y_0),\mathbb{P}(X=\mu_A(x_0)x_0)\right)}{\mathbb{P}(X=\mu_A(x_0)x_0)}.\]
We can also define a non-standard conditional probability as follows:
\[\mathbb{P}(y_0\text{ is }B|x_0\text{ is }A)=\mathbb{P}(Y=y_0\mu_B(y_0)|X=x_0\mu_A(x_0)).\]

\subsection{Generalized Membership Degree Based Models}
In these models we replace $\mu_A(x_0)$ in the previous models by $\mu_A^{r_{x_0,A}}(x_0)$,
where $r_{x_0,A}$ is a non-negative real number. A particular case is to consider
$\mu_A^r(x_0)$
for any $x_0\in \Omega$ and a non-negative number $r$. Then, by the approaches explained in other models, we are able to define all subjective joint and conditional distributions by replacing any membership degree $\mu_D(x_0)$ with $\mu_D(x_0)^{r_{x_0,D}}$ for any fuzzy attribute $D$ of $\Omega$ and $x_0\in\Omega$.

 Note that for any $x\in \Omega$ with $r_{x,A}>1$ and $\mu_A(x)\notin\{0,1\}$, we have that $\mu_A(x)>\mu_A^{r_{x,A}}(x)$. Further, for any $x\in \Omega$ with $r_{x,A}<1$ and $\mu_A(x)\notin\{0,1\}$, we have that $\mu_A(x)<\mu_A^{r_{x,A}}(x)$. Therefore, when the membership degrees are large, by selecting $r_{x,A}>1$, we can have smaller values. 
Similarly, when we want to have larger values of membership degrees, we select $r_{x,A}<1$.

\subsection{Generalized Standard Models}\label{generlizedmodel}

When we use $t$-norms (e.g., in standard models), we might need to normalize the probability of selecting an element  with respect to a fuzzy attribute. Hence, we call a model of our framework  a {\it generalized standard model} if it satisfies the following equality:
\[\mathbb{P}(y_0\text{ is }B|x_0\text{ is }A)=\frac{T(r\mathbb{P}(y_0\text{ is }B), \mathbb{P}(x_0\text{ is }A))}{\mathbb{P}(x_0\text{ is }A)},\]
where $r$ is a positive real number, which could be fixed or depended on $A$ and $B$ or depended on $x_0$, $y_0$, $A$ and $B$. 

\subsection{Random Generalized Standard Models}
\label{randomgeneralizedmodel}
These models are obtained when we use a random variable instead of the real number $r$ in generalized standard models. For instance, assume that we would like to double the probabilities of selecting  elements with respect to a fuzzy attribute $B$  in $70$ percent of times. Hence, instead of $r$, we can use a Bernoulli random variable  that  only depends on $A$ and $B$  with the success probability of $0.7$. More precisely, we have that

\[\mathbb{P}(y_0\text{ is }B|x_0\text{ is }A)=\frac{T((Z+1)\mathbb{P}(y_0\text{ is }B), \mathbb{P}(x_0\text{ is }A))}{\mathbb{P}(x_0\text{ is }A)},\]
where $Z$ is a Bernoulli random variable with the success probability of $0.7$.
\section{Fuzzy Average Treatment Effect}\label{FATE}

Assume that we intend to measure the causal effect of a discrete (but not necessarily binary) intervention $T$  on a Bernoulli outcome $Y$. For instance, $T$ could be a type of medical treatment and $Y$ be the outcome of $T$. Indeed, $Y=1$ when the treatment is successful and otherwise $Y=0$. We also denote the potential outcome corresponding to $T=t$ by $Y(t)$. A significant way to measure the causal effect of $T$ on $Y$ is to see the difference between the outcomes when $T$  is $high$ and when $T$ is $low$. We might also be  interested in $medium$ treatments.  Thus, we assume that $low$, $medium$ and $high$ are fuzzy attributes of $T$, and then we use our PFL theory as follows. Given a statistical population, we define:
\[Y(A):=\frac{1}{\mathbb{P}(T\text{ is }A)}\sum_{t\neq t_A}Y(t)\mathbb{P}(\xi_{T,A}=t),\]
where $A\in\{low, medium, high\}$. 

Now, we define the \textit{fuzzy average treatment effects (FATE)} as follows:
\begin{align*}
	\mathrm{FATE}_{l}^{h}&:=\mathbb{E}(Y(high))-\mathbb{E}(Y(low)),\\
	\mathrm{FATE}_{l}^{m}&:=\mathbb{E}(Y(medium))-\mathbb{E}(Y(low)),\\
	\mathrm{FATE}_{m}^{h}&:=\mathbb{E}(Y(high))-\mathbb{E}(Y(medium)).
\end{align*}
Obviously, we have that
\[\mathrm{FATE}_{l}^{h}=\mathrm{FATE}_{l}^{m}+\mathrm{FATE}_{m}^{h}.\]
To deal with the fundamental problem of Causal Inference, there are some assumptions in classic causal inference such as  the ignorability condition (see Section \ref{cuasalinference}).  One could create a fuzzy version of  the ignorability condition using similar assumptions. Our fuzzy ignorability condition is the following assumption:

\begin{itemize}
	\item
	$Y(t)$ and $\xi_{A,T}$ are independent for any $t\in T$ and $A\in\{low, high\}$.
\end{itemize}  
Thus, to compute $\mathrm{FATE}_l^h$, we have that
\begin{align*}
	\mathbb{E}(Y(high))&=\frac{1}{\mathbb{P}(T\text{ is }high)}\sum_{t\neq t_{high}}\mathbb{E}(Y(t))\mathbb{P}(\xi_{T,high}=t)\\
	&=\frac{1}{\mathbb{P}(T\text{ is }high)}\sum_{t\neq t_{high}}\mathbb{E}(Y(t)|\xi_{T,high}=t)\mathbb{P}(\xi_{T,high}=t)\\
	&=\frac{1}{\mathbb{P}(T\text{ is }high)}\sum_{t\neq t_{high}}\mathbb{P}(Y(t)=1|\xi_{T,high}=t)\mathbb{P}(\xi_{T,high}=t)\\
	&=\frac{1}{\mathbb{P}(T\text{ is }high)}\sum_{t\neq t_{high}}\mathbb{P}(Y=1|\xi_{T,high}=t)\mathbb{P}(\xi_{T,high}=t)\\
	&=\frac{1}{\mathbb{P}(\xi_{T,high}\neq t_{high})}\sum_{t\neq t_{high}}\mathbb{P}(\xi_{T,high}=t|Y=1)\mathbb{P}(Y=1)\\
	&=\frac{\mathbb{P}(\xi_{T,high}\neq t_{high}|Y=1)\mathbb{P}(Y=1)}{\mathbb{P}(\xi_{T,high}\neq t_{high})}\\
	&=\mathbb{P}(Y=1|\xi_{T,high}\neq t_{high})\\
	&=\mathbb{E}(Y|\xi_{T,high}\neq t_{high}),
\end{align*}
which is a statistical formula. Similarly, we can obtain $\mathbb{E}(Y(low))$ as follows:
\[\mathbb{E}(Y(low))=\mathbb{E}(Y|\xi_{T,low}\neq t_{low}),\]
which implies that
\[\mathrm{FATE}_l^h=\mathbb{E}(Y|\xi_{T,high}\neq t_{high})-\mathbb{E}(Y|\xi_{T,low}\neq t_{low}).\]
In practice, we suggest to fuzzify the treatment into three fuzzy subsets $low$, $medium$ and $high$ in such a way that 
\[\mathbb{P}(t\text{ is }low)+\mathbb{P}(t\text{ is }medium)+\mathbb{P}(t\text{ is }high)=1\]
for any $t\in \mathrm{Supp}(T)\backslash\{t_{high}\}$ (for instance, see Relative Fuzzy Model described in Subsection \ref{RFM}). This follows that
\[\mathbb{P}(T\text{ is }low)+\mathbb{P}(T\text{ is }medium)+\mathbb{P}(T\text{ is }high)=1.\]
Then, for any treatment $t\in \mathrm{Supp}(T)\backslash\{t_{high}\}$, we randomly assign $t$ to a proportion $\mathbb{P}(\xi_{T,high}=t)$ of the entire sample. Thus, a proportion  $\mathbb{P}(T\text{ is }low)+\mathbb{P}(T\text{ is }medium)$ of the entire sample is untreated. Next, we randomly assign  $\xi_{T,medium}$ to a proportion  $\mathbb{P}(T\text{ is }medium)$ of the entire sample, but only to untreated units. Thus, a proportion $\mathbb{P}(T\text{ is }low)$ of the entire sample remains untreated. This proportion of the entire sample could be treated using $\xi_{T,low}$.

In the following, we provide an example of FATE.

\begin{Example}
	Assume that we are given a uniformly distributed treatment $T$ whose set of values is $\{0,1,\ldots,9\}$. We randomly generate the binary potential outcomes of a sample of 10000 people in such a way that the probability of happening $Y(t)=1$ is more probable when $t$ increases. The head and the tail of the sample are shown in Figure~\ref{sample}. Let the fuzzy attributes "$low$", "$medium$" and "$high$" be defined like the ones in Figure~\ref{lmh}. Then, by considering Simple Fuzzy Model (i.e., $\mathbb{P}(x\text{ is }A)=\mu_A(x)$) we have that
	\[\mathrm{FATE}_l^m=0.227241,\qquad \mathrm{FATE}_l^h=0.598788,\]
	while by considering the ignorability condition and applying the procedure explained just before this example, we get the followings:
	\begin{align*}
		&\mathbb{E}(Y|\xi_{T,med}\neq t_{med})-\mathbb{E}(Y|\xi_{T,low}\neq t_{low})\approx 0.227027,\\
		&\mathbb{E}(Y|\xi_{T,high}\neq t_{high})-\mathbb{E}(Y|\xi_{T,low}\neq t_{low})\approx 0.598198,
	\end{align*}
	which are approximately equal to $\mathrm{FATE}_l^m$ and $\mathrm{FATE}_l^h$, respectively. These results show that the treatment $T$ in our context could be considered as a  probable cause of the outcome $Y$. 
	
The Python code for this example is available  in \href{https://github.com/joseffaghihi/A-Fundamental-Probabilistic-Fuzzy-Logic-Framework-Suitable-for-Causal-Reasoning.git}{this}  Github repository.
	\begin{figure}
		\begin{center}
			\includegraphics[scale=.5]{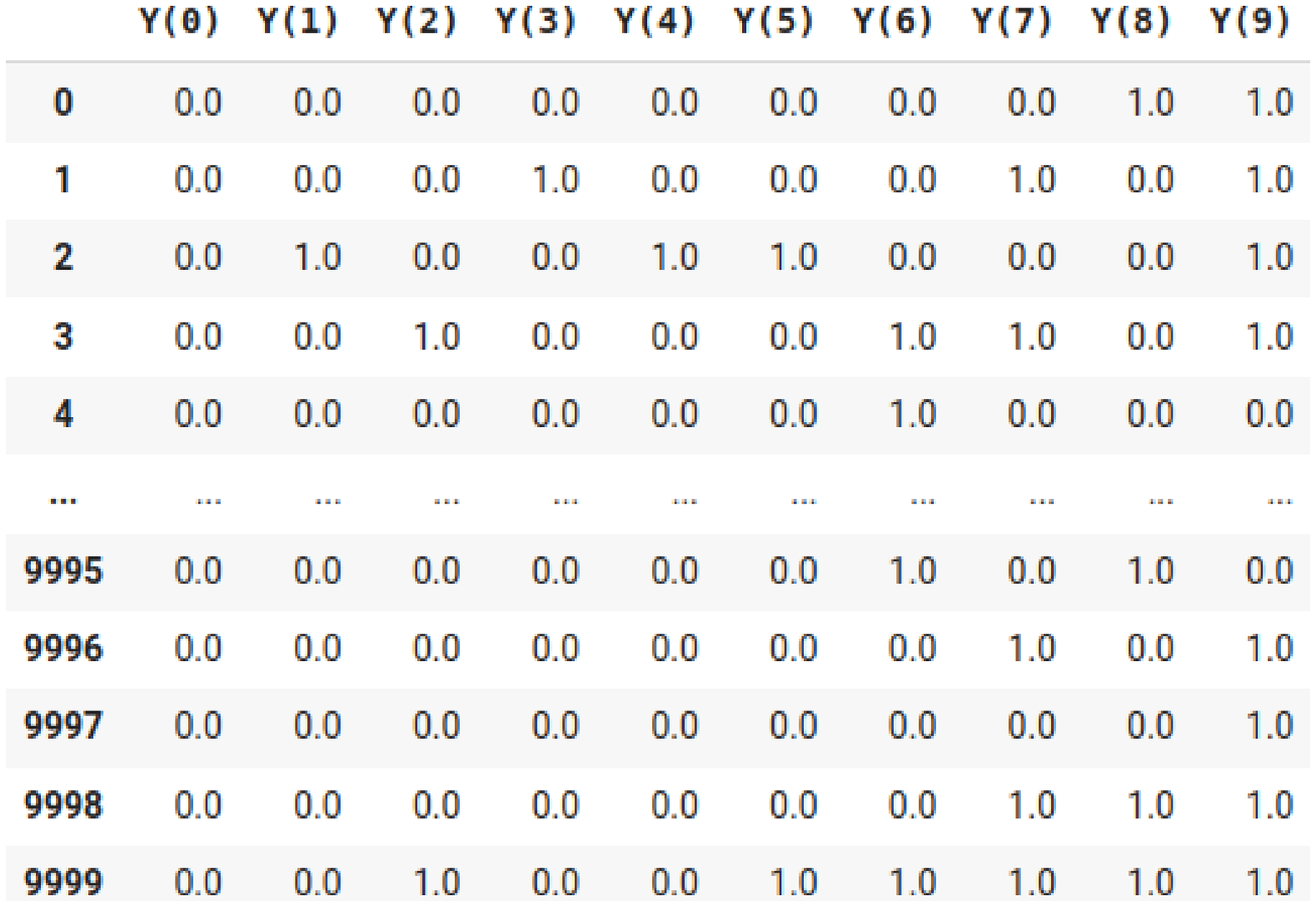}
			
			\caption{\small The head and the tail of the random sample of binary potential outcomes in such a way that the probability of the $Y(t)=1$ gets higher when $t$ increases. }\label{sample}
		\end{center}
	\end{figure}
	
	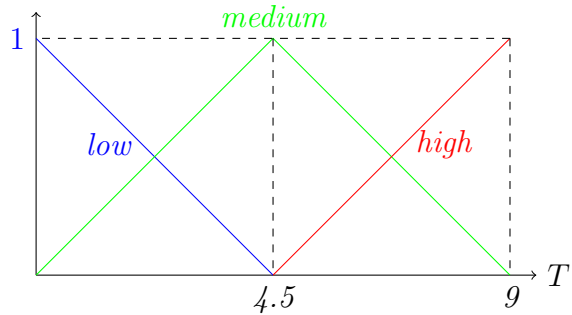
\begin{figure}
		\begin{center}
			\begin{tikzpicture}[scale=.7]
				\draw[->](0,0)--(9.5,0)node[right]{$ T $};
				\draw[->](0,0)--(0,5);
				\draw[blue] (0,4.5)node[left]{$1$} -- (2,2.5) node[left]{low}--(4.5,0);
				\draw[green] (0,0) --(4.5,4.5)node[above]{medium}-- (9,0);
				\draw[red] (4.5,0) -- (7,2.5)node[right]{high}--(9, 4.5);
				\draw[dashed](4.5,4.5) --(4.5,0)node[below]{4.5};
				\draw[dashed](9,4.5) --(9,0)node[below]{9};
				\draw[dashed] (9,4.5)-- (0, 4.5);
			\end{tikzpicture}
		\end{center}
		\caption{\small Fuzzy attributes "$low$", "$medium$" and "$high$" for $T=\{0,1,\ldots,9\}$.}\label{lmh}
	\end{figure}
\end{Example}

\section{Measure Theoretical Approach}\label{Measure Theoretical Approach}
Let $(\Omega, \mathscr{F}, \mathbb{P})$ be a probability space, where $\Omega\subseteq\mathbb{R}$ is the set of possible outcomes, $\mathscr{F}$ is a $\sigma$-algebra over $\Omega$ consisting of all events, and $\mathbb{P}$ is a measure on $(\Omega,\mathscr{F})$ with $\mathbb{P}(\Omega)=1$. We assume that $\mathscr{F}$ is the Borel $\sigma$-algebra on $\Omega$. Now, let $X:\Omega\to\mathbb{R}$ be the identity random variable, and $A$ be a fuzzy attribute of $\Omega$. Hence, the probability measure induced by $X$, denoted by $\mathbb{P}_X$, coincides with $P$. For the sake of simplicity, in the sequel, we denote all probability measures and their derived notations (e.g., $\mathbb{P}_X$)  by $\mathbb{P}$, when the underlying measurable space $(\Omega,\mathscr{F})$ is known. For any $x\in \Omega$, assume that $\mathbb{P}(x\text{ is } A)$ is the chance of selecting $x$ as $A$ in the random experiment whose sample space is $\{\text{Selecting $x$ as $A$},\text{Not selecting $x$ as $A$}\}$, and it is measurable as a function of $x$. Considering Experiment ($\star$), We have the following assumption for any $x\in \Omega$:
\[\mathbb{P}(\Omega\text{ is }A|X=x)=\mathbb{P}(x\text{ is } A).\]
It follows that
\[\mathbb{P}(\Omega\text{ is } A)=\int_{\mathbb{R}}\mathbb{P}(\Omega\text{ is }A|X=x)\,\mathrm{d}\mathbb{P}_X(x)=\int_{\mathbb{R}}\mathbb{P}(x\text{ is } A)\,\mathrm{d}\mathbb{P}_X(x).\]
where, $\mathbb{P}_X$ is the probability measure on $\mathbb{R}$ induced by $X$. 
Let $\xi_{X,A}$ be the random variable associated to Experiment ($\star$). Similar to the above, we consider the following assumption for any $x\in \Omega$ and $E\in\mathscr{F}$:
\[ \mathbb{P}(\xi_{X,A}\in E|X=x)=\mathbb{P}(\xi_{x,A}\in E).\]
Let $E\in\mathscr{F}$. Then, by the total law of probability, we have that:
\begin{align*}
    \mathbb{P}(\xi_{X,A}\in E)&=\int_{\mathbb{R}}\mathbb{P}(\xi_{X,A}\in E|X=x)\,\mathrm{d}\mathbb{P}_X(x)=\int_{\mathbb{R}}\mathbb{P}(\xi_{x,A}\in E)\,\mathrm{d}\mathbb{P}_X(x)\\
    &=\int_E\mathbb{P}(\xi_{x,A}\in E)\,\mathrm{d}\mathbb{P}_X(x)+\int_{{\mathbb{R}}\backslash E}\mathbb{P}(\xi_{x,A}\in E)\,\mathrm{d}\mathbb{P}_X(x).
\end{align*}
If $x_A\in E$, then:
\begin{align*}
    \int_E\mathbb{P}(\xi_{x,A}\in E)\,\mathrm{d}\mathbb{P}_X(x)&=\int_E\left(\mathbb{P}(\xi_{x,A}=x)+\mathbb{P}(\xi_{x,A}=x_A)\right) \,\mathrm{d}\mathbb{P}_X(x)=\int_E\,\mathrm{d}\mathbb{P}_X(x),\\
   \int_{\mathbb{R}\backslash E}\mathbb{P}(\xi_{x,A}\in E)\,\mathrm{d}\mathbb{P}_X(x)&=\int_{\mathbb{R}\backslash E}\mathbb{P}(\xi_{x,A}=x_A)\,\mathrm{d}\mathbb{P}_X(x)\\
   &=\int_{\mathbb{R}\backslash E}\left(1-\mathbb{P}(x\text{ is } A)\right)\,\mathrm{d}\mathbb{P}_X(x).
\end{align*}
Further, if $x_A\notin E$, then:
\begin{align*}
    &\int_E\mathbb{P}(\xi_{x,A}\in E)\,\mathrm{d}\mathbb{P}_X(x)=\int_E\mathbb{P}(\xi_{x,A}=x) \,\mathrm{d}\mathbb{P}_X(x)=\int_{ E}\mathbb{P}(x\text{ is } A)\,\mathrm{d}\mathbb{P}_X(x),\\
   & \int_{\mathbb{R}\backslash E}\mathbb{P}(\xi_{x,A}\in E)\,\mathrm{d}\mathbb{P}_X(x)=0.
\end{align*}
It follows from $\int_{\mathbb{R}}\,\mathrm{d}\mathbb{P}_X(x)=1$ that:
\[\mathbb{P}(\xi_{X,A}\in E)=\left\{\begin{array}{ll}
 1-\displaystyle \int_{\mathbb{R}\backslash E}\mathbb{P}(x\text{ is } A)\,\mathrm{d}\mathbb{P}_X(x)   &,x_A\in E  \\
   \displaystyle \int_{ E}\mathbb{P}(x\text{ is } A)\,\mathrm{d}\mathbb{P}_X(x)  & , x_A\notin E
\end{array}\right..\]
We can rewrite the above formula as follows:
\begin{equation*}
	\mathbb{P}(\xi_{X,A}\in E)=\int_{ E}\mathbb{P}(x\text{ is }A)\,\mathrm{d}\mathbb{P}_X(x)+\alpha\delta_{x_A}(E),\quad \alpha=1-\int_{ \mathbb{R}}\mathbb{P}(x\text{ is }A)\,\mathrm{d}\mathbb{P}_X(x),
\end{equation*}
where $\delta_{x_A}$ is the Dirac measure with respect to $x_A$.
The above formula equals the following as well:
\begin{equation*}
    \mathbb{P}(\xi_{X,A}\in E)=\int_{ E}\mathbb{P}(\xi_{x,A}=x)\,\mathrm{d}\mathbb{P}_X(x)+\beta\delta_{x_A}(E),\quad \beta=1-\int_{ \mathbb{R}}\mathbb{P}(\xi_{x,A}=x)\,\mathrm{d}\mathbb{P}_X(x),
\end{equation*}

Now, we determine the expected value of $\xi_{X,A} $ in the following theorem.
\begin{Theorem}\label{expectedvalueingeneral}
\[\mathbb{E}(\xi_{X,A})=x_A+\int_{\mathbb{R}}(x-x_A)\mathbb{P}(x\text{ is }A)\,\mathrm{d}\mathbb{P}_X(x).\]
\end{Theorem}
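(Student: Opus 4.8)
The plan is to read off the law of $\xi_{X,A}$ from the formula established immediately above the statement and then integrate the identity map against it. Recall that we obtained
\[
\mathbb{P}(\xi_{X,A}\in E)=\int_{E}\mathbb{P}(x\text{ is }A)\,\mathrm{d}\mathbb{P}_X(x)+\alpha\,\delta_{x_A}(E),\qquad \alpha=1-\int_{\mathbb{R}}\mathbb{P}(x\text{ is }A)\,\mathrm{d}\mathbb{P}_X(x).
\]
In other words, the distribution $\nu$ of $\xi_{X,A}$ decomposes as $\nu=\mu+\alpha\,\delta_{x_A}$, where $\mu$ is the measure given by $\mathrm{d}\mu(x)=\mathbb{P}(x\text{ is }A)\,\mathrm{d}\mathbb{P}_X(x)$. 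Assuming $\mathbb{E}(X)$ is finite so that the relevant integrals converge, the expectation is by definition $\mathbb{E}(\xi_{X,A})=\int_{\mathbb{R}}t\,\mathrm{d}\nu(t)$, and the whole proof is the evaluation of this integral.

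Next I would exploit linearity of the integral over the two pieces of $\nu$. The Dirac part contributes $\alpha\int_{\mathbb{R}}t\,\mathrm{d}\delta_{x_A}(t)=\alpha\,x_A$, while the part coming from $\mu$ contributes $\int_{\mathbb{R}}x\,\mathbb{P}(x\text{ is }A)\,\mathrm{d}\mathbb{P}_X(x)$. Substituting the value of $\alpha$ yields $\alpha\,x_A=x_A-x_A\int_{\mathbb{R}}\mathbb{P}(x\text{ is }A)\,\mathrm{d}\mathbb{P}_X(x)$, and combining the two contributions merges the $x$ and $x_A$ terms into a single integrand $(x-x_A)\mathbb{P}(x\text{ is }A)$, which is exactly the claimed identity.

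An equivalent route, which I would mention as a cross-check, conditions on $X$ first. Since $\mathbb{P}(\xi_{X,A}\in E\mid X=x)=\mathbb{P}(\xi_{x,A}\in E)$, the single-element computation already carried out in the discrete case gives $\mathbb{E}(\xi_{X,A}\mid X=x)=\mathbb{E}(\xi_{x,A})=x_A+(x-x_A)\mathbb{P}(x\text{ is }A)$; integrating this against $\mathrm{d}\mathbb{P}_X$ and using $\int_{\mathbb{R}}\mathrm{d}\mathbb{P}_X(x)=1$ reproduces the formula via the tower property. This confirms the continuous statement is the natural limit of the discrete one.

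The computation itself is routine; the only genuine point requiring care is integrability. I would assume that $X$ has a finite first moment, so that $\int_{\mathbb{R}}|x|\,\mathbb{P}(x\text{ is }A)\,\mathrm{d}\mathbb{P}_X(x)<\infty$ and the splitting of $\int t\,\mathrm{d}\nu$ into its two pieces is legitimate; since $0\le\mathbb{P}(x\text{ is }A)\le 1$, this integrability is immediate from finiteness of $\mathbb{E}(X)$. I expect this to be the only obstacle, and a mild one.
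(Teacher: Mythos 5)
Your proposal is correct and follows essentially the same route as the paper: both read off the law of $\xi_{X,A}$ from the decomposition displayed just before the theorem (absolutely continuous part with density $\mathbb{P}(x\text{ is }A)$ against $\mathbb{P}_X$ plus the atom $\alpha\,\delta_{x_A}$), integrate the identity map against each piece, and substitute the value of $\alpha$ to merge the terms into $(x-x_A)\mathbb{P}(x\text{ is }A)$. The paper phrases this as integrating $x\,\mathbb{P}(\xi_{x,A}=x)$ against the measure $\mathbb{P}_X+\beta\delta_{x_A}$, which is the same computation; your explicit attention to integrability is a point the paper leaves implicit.
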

\begin{proof}
Let $E\in\mathscr{F}$. Then, we have that
\begin{align*}
\int_{ E}\mathbb{P}(\xi_{x,A}=x)\,\mathrm{d}(\mathbb{P}_X+\beta\delta_{x_A})&=\int_{ E}\mathbb{P}(\xi_{x,A}=x)\,\mathrm{d}\mathbb{P}_X+\beta\int_{ E}\mathbb{P}(\xi_{x,A}=x)\,\mathrm{d}\delta_{x_A}\\
&=\int_E\mathbb{P}(\xi_{x,A}=x)\,\mathrm{d}\mathbb{P}_X+\beta\underbrace{\mathbb{P}(\xi_{x_A,A}=x_A)}_{=1}\delta_{x_A}(E)\\
&=\mathbb{P}(\xi_{X,A}\in E).
\end{align*}
It follows that 
\begin{align*}
    \mathbb{E}(\xi_{X,A})&=\int_{\mathbb{R}}x\,\mathrm{d}\mathbb{P}_{\xi_{X,A}}=\int_{\mathbb{R}}x\mathbb{P}(\xi_{x,A}=x)\,\mathrm{d}(\mathbb{P}_X+\beta\delta_{x_A})\\
    &=\int_{\mathbb{R}}x\mathbb{P}(\xi_{x,A}=x)\,\mathrm{d}\mathbb{P}_X+\beta\int_{\mathbb{R}}x\mathbb{P}(\xi_{x,A}=x)\,\mathrm{d}\delta_{x_A}\\
    &=\int_{\mathbb{R}}x\mathbb{P}(\xi_{x,A}=x)\,\mathrm{d}\mathbb{P}_X+\beta x_A\mathbb{P}(\xi_{x_A,A}=x_A)\\
    &=\int_{\mathbb{R}}x\mathbb{P}(\xi_{x,A}=x)\,\mathrm{d}\mathbb{P}_X+ x_A\left(1-\int_{\mathbb{R}}\mathbb{P}(\xi_{x,A}=x)\,\mathrm{d}\mathbb{P}_X\right)\\
    &=x_A+\int_{\mathbb{R}}(x-x_A)\mathbb{P}(\xi_{x,A}=x)\,\mathrm{d}\mathbb{P}_X\\
     &=x_A+\int_{\mathbb{R}}(x-x_A)\mathbb{P}(x\text{ is }A)\,\mathrm{d}\mathbb{P}_X.
\end{align*}
\end{proof}
In the following theorem, we provide the  classic continuous distributions of $\xi_{X,A}$. To do so, we initially assume that $\mathbb{P}(x\text{ is }A)$ and the probability density function (pdf) of $X$ are continuous functions of the variable $x$, and hence they are Riemann integrable.

\begin{Theorem}\label{theoremcont}
Let $X$ be a classic continuous distribution whose  pdf is $f_X(x)$. Then, $\xi_{X,A}$ has a generalized pdf, denoted by $f_{\xi_{X,A}}$, and it is given by the following:
\[ f_{\xi_{X,A}}(t)=\mathbb{P}(t\text{ is } A)f_X(t)+\alpha\delta(t-x_A),\]
where $\delta(x-x_A)$ is the Dirac delta function, and we have that
\[\alpha=1-\int_{ -\infty}^{\infty}\mathbb{P}(x\text{ is }A)f_X(x)\,\mathrm{d}x.\]
\end{Theorem}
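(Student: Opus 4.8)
The plan is to reduce the statement to the measure-theoretic identity established immediately before the theorem, namely
\[\mathbb{P}(\xi_{X,A}\in E)=\int_{E}\mathbb{P}(x\text{ is }A)\,\mathrm{d}\mathbb{P}_X(x)+\alpha\,\delta_{x_A}(E),\qquad \alpha=1-\int_{\mathbb{R}}\mathbb{P}(x\text{ is }A)\,\mathrm{d}\mathbb{P}_X(x),\]
and then to translate it into the language of generalized densities from Appendix \ref{gpdf}. First I would invoke the hypothesis that $X$ is a classic continuous random variable, so that its induced measure factors as $\mathrm{d}\mathbb{P}_X(x)=f_X(x)\,\mathrm{d}x$; this turns the absolutely continuous part into $\int_E \mathbb{P}(x\text{ is }A)f_X(x)\,\mathrm{d}x$ and simultaneously converts $\alpha$ into $1-\int_{-\infty}^{\infty}\mathbb{P}(x\text{ is }A)f_X(x)\,\mathrm{d}x$, exactly as claimed. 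The continuity of $\mathbb{P}(x\text{ is }A)$ and $f_X$ guarantees that this integrand is Riemann integrable, which is precisely what the generalized-pdf framework demands.

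Next I would represent the atomic term $\alpha\,\delta_{x_A}(E)$ as an integral against a generalized density. Appealing to the sifting property of the Dirac delta function, $\int_E \delta(t-x_A)\,\mathrm{d}t$ equals $1$ when $x_A\in E$ and $0$ otherwise, so it coincides with the Dirac measure value $\delta_{x_A}(E)$. Substituting, the probability becomes
\[\mathbb{P}(\xi_{X,A}\in E)=\int_{E}\bigl(\mathbb{P}(t\text{ is }A)f_X(t)+\alpha\,\delta(t-x_A)\bigr)\,\mathrm{d}t,\]
and by the definition of a generalized pdf this identifies $f_{\xi_{X,A}}(t)=\mathbb{P}(t\text{ is }A)f_X(t)+\alpha\,\delta(t-x_A)$.

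To close the argument I would verify that this is a bona fide generalized pdf. Setting $E=\mathbb{R}$ gives total mass $\int_{-\infty}^{\infty}\mathbb{P}(t\text{ is }A)f_X(t)\,\mathrm{d}t+\alpha=1$ directly from the definition of $\alpha$, and nonnegativity holds since $0\le\mathbb{P}(t\text{ is }A)\le 1$, $f_X\ge 0$, and $\alpha\ge 0$ (the last because $\alpha$ is the probability mass $\mathbb{P}(\xi_{X,A}=x_A)$ of selecting nothing, recovered by taking $E=\{x_A\}$, where the absolutely continuous part vanishes). I expect the only real obstacle to be not a computation but the careful bridging between the Dirac measure $\delta_{x_A}$ and the Dirac delta symbol $\delta(t-x_A)$: one must appeal to the precise definition of a generalized distribution in Appendix \ref{gpdf} so that this final identification is rigorous rather than merely formal, and so that the mixed (atom-plus-continuous) nature of $\xi_{X,A}$ is correctly captured by a single generalized density.
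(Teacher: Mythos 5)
Your proposal is correct and follows essentially the same route as the paper: both start from the identity $\mathbb{P}(\xi_{X,A}\in E)=\int_E\mathbb{P}(x\text{ is }A)\,\mathrm{d}\mathbb{P}_X(x)+\alpha\,\delta_{x_A}(E)$ established just before the theorem, substitute $\mathrm{d}\mathbb{P}_X=f_X(x)\,\mathrm{d}x$ using the continuity hypotheses, and invoke the symbolic Dirac-delta conventions of Appendix \ref{gpdf} to encode the atom at $x_A$ as the term $\alpha\,\delta(t-x_A)$. The only divergence is presentational: the paper first secures existence of the generalized pdf by observing that $\mathbb{P}_{\xi_{X,A}}$ is absolutely continuous with respect to $\lambda+\beta\delta_{x_A}$ (via Radon--Nikodym) and then extracts the formula by computing the distribution function $F(t)=\int_{-\infty}^t\mathbb{P}(x\text{ is }A)f_X(x)\,\mathrm{d}x+\alpha U_{x_A}(t)$ and taking its symbolic derivative, whereas you exhibit the candidate density directly and verify it on arbitrary Borel sets via the sifting property --- both arguments are legitimate once the Appendix-\ref{gpdf} identification of $\int_E\delta(t-x_A)\,\mathrm{d}t$ with $\delta_{x_A}(E)$ is in place, which is exactly the point you correctly flag as requiring care.
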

\begin{proof}
We have that $\mathrm{d}\mathbb{P}_X=f(x)\,\mathrm{d}x$. First, we note that 
\[\alpha=1-\int_{ \mathbb{R}}\mathbb{P}(x\text{ is }A)\,\mathrm{d}\mathbb{P}_X(x)=1-\int_{ -\infty}^{\infty}\mathbb{P}(x\text{ is }A)f_X(x)\,\mathrm{d}x.\]
Now, we show that $\xi_{X,A}$ has a generalized pdf (see Appendix \ref{gpdf}). To do so, it follows from the proof of Theorem \ref{expectedvalueingeneral} that 
\[\mathbb{P}(\xi_{X,A}\in E)=\int_{ E}\mathbb{P}(\xi_{x,A}=x)\,\mathrm{d}(\mathbb{P}_X+\beta\delta_{x_A}),\]
while $\mathbb{P}_X$ is absolutely continuous (see Appendix \ref{continuousrandomvariables}). Hence, $\mathbb{P}_X+\beta\delta_{x_A}$ is absolutely continuous with respect to $\lambda+\beta\delta_{x_A}$, where $\lambda$ is the Lebesgue measure on $\mathbb{R}$. It follows that $\xi_{X,A}$ is absolutely continuous with respect to $\lambda+\beta\delta_{x_A}$, and hence it has a generalized pdf. 

Denote the distribution function of $\xi_{X,A}$ by $F$. Then, we have that
\begin{align*}
     F(t)=\mathbb{P}(\xi_{X,A}\le t)&=\int_{-\infty}^t\mathbb{P}(x\text{ is } A)f_X(x)\,\mathrm{d}x+\alpha\delta_{x_A}(-\infty,t]\\
     &=\int_{-\infty}^t\mathbb{P}(x\text{ is } A)f_X(x)\,\mathrm{d}x+\alpha U_{x_A}(t),
\end{align*}

where $U_{x_A}$ is the Heaviside step function.
It follows that
\[ f_{\xi_{X,A}}(t)=\frac{\mathrm{d}F}{\mathrm{d}x}(t)=\mathbb{P}(t\text{ is } A)f_X(t)+\alpha\delta(t-x_A).\]
\end{proof}

 By a similar process (i.e., taking the total law of probability over the values of $X$ and using the Bayes rule), we can define the conditional probabilities $\mathbb{P}(\xi_{X,A}\in E'|E)$ and $\mathbb{P}(E|\xi_{X,A}\in E')$ that we described in the discrete case in Section \ref{frame}. For instance, we have that
\begin{align*}
	&\mathbb{P}(\xi_{X,A}\in E'|E)=\left\{\begin{array}{ll}
		1-\displaystyle \int_{\mathbb{R}\backslash E'}\mathbb{P}(x\text{ is } A|X=x,E)\,\mathrm{d}\mathbb{P}^{E}_X   &,x_A\in E'  \\
		\displaystyle \int_{ E'}\mathbb{P}(x\text{ is } A|X=x,E)\,\mathrm{d}\mathbb{P}^{E}_X  & , x_A\notin E'
	\end{array}\right.,\\
&\mathbb{E}(\xi_{X,A}|E)=x_A+\int_{\mathbb{R}}(x-x_A)\mathbb{P}(x\text{ is }A|X=x, E)\,\mathrm{d}\mathbb{P}^{E}_X,
\end{align*}
where $\mathbb{P}^E$ is the probability measure on $(\Omega,\mathscr{F})$ given $E$, and $\mathbb{P}_X^E$ is the probability measure induced by $X$ on $\mathbb{R}$ with respect to $\mathbb{P}^E$.

Now, we have the following theorem, which is similar to Theorem \ref{theoremcont} but in conditional case. In this theorem, we assume that $f_{X|E}(t|E)$ and $\mathbb{P}(t\text{ is } A|X=t,E)$ are continuous functions of $t$. 

\begin{Theorem}\label{cond}
	Let $X$ be a classic continuous distribution whose probability density function (pdf) is $f_X(x)$. Then, the generalized pdf of $\xi_{X,A}$ given $E$, denoted by $f_{\xi_{X,A}|E}$, is as follows:
	\[ f_{\xi_{X,A}|E}(t|E)=\mathbb{P}(t\text{ is } A|X=t,E)f_{X|E}(t|E)+\alpha^E\delta(t-x_A),\] 
	where, $\alpha^E=1-\int_{ -\infty}^{\infty}\mathbb{P}(x\text{ is }A|X=x,E)f_{X|E}(x|E)\,\mathrm{d}x$.
\end{Theorem}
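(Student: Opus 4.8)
The plan is to run the argument of Theorem \ref{theoremcont} again, but with every object replaced by its analogue conditioned on the event $E$. Concretely, I would replace $\mathbb{P}_X$ by the conditional measure $\mathbb{P}_X^E$, the density $f_X$ by $f_{X|E}(\cdot\mid E)$, the selection probability $\mathbb{P}(x\text{ is }A)$ by $\mathbb{P}(x\text{ is }A\mid X=x, E)$, and the normalizing constant $\alpha$ by $\alpha^E$. Since $X$ has a classic continuous distribution, the conditional law of $X$ given $E$ is absolutely continuous as well, so $\mathrm{d}\mathbb{P}_X^E = f_{X|E}(x\mid E)\,\mathrm{d}x$; this is precisely what lets me pass from abstract integrals against $\mathbb{P}_X^E$ to Riemann integrals against $f_{X|E}$.

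The starting point is the conditional formula for $\mathbb{P}(\xi_{X,A}\in E'\mid E)$ displayed just before the theorem, which was obtained exactly as in the unconditional case by taking the total law of probability over the values of $X$ and applying the Bayes rule. Reading off the constant from the case $x_A\notin E'$ and comparing with the normalization of a probability measure confirms that $\alpha^E=1-\int_{-\infty}^{\infty}\mathbb{P}(x\text{ is }A\mid X=x, E)f_{X|E}(x\mid E)\,\mathrm{d}x$. As in Theorem \ref{theoremcont}, I would then rewrite this conditional law as an integral against $\mathbb{P}_X^E+\alpha^E\delta_{x_A}$; because $\mathbb{P}_X^E$ is absolutely continuous, the law of $\xi_{X,A}$ given $E$ is absolutely continuous with respect to $\lambda+\alpha^E\delta_{x_A}$, and hence admits a generalized pdf in the sense of Appendix \ref{gpdf}.

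To identify that generalized pdf, I would compute the conditional distribution function by specializing to $E'=(-\infty,t]$, splitting according to whether $x_A\le t$, and obtain
\[F(t\mid E)=\int_{-\infty}^{t}\mathbb{P}(x\text{ is }A\mid X=x, E)f_{X|E}(x\mid E)\,\mathrm{d}x+\alpha^E U_{x_A}(t),\]
where $U_{x_A}$ is the Heaviside step function. Differentiating in $t$ then yields the claim: the continuity hypotheses on $\mathbb{P}(t\text{ is }A\mid X=t, E)$ and $f_{X|E}(t\mid E)$ let me invoke the fundamental theorem of calculus for the integral term, producing $\mathbb{P}(t\text{ is }A\mid X=t,E)f_{X|E}(t\mid E)$, while the derivative of the Heaviside term produces $\alpha^E\delta(t-x_A)$.

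I expect the only genuinely delicate point to be the bookkeeping in the conditional version of the absolute-continuity step, namely checking that conditioning on $E$ commutes with the total-law-of-probability decomposition over $X$, i.e. that the identity $\mathbb{P}(\xi_{X,A}\in E'\mid X=x, E)=\mathbb{P}(\xi_{x,A}\in E'\mid X=x, E)$ is the correct conditional analogue of the defining assumption on $\xi_{X,A}$. Once that identity is in place, everything else is a direct transcription of the proof of Theorem \ref{theoremcont}, with the regularity assumptions stated in the theorem serving precisely to justify the termwise differentiation of $F(t\mid E)$.
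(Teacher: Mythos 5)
Your proposal is correct and is essentially the proof the paper intends: the paper states Theorem \ref{cond} without a written proof, remarking only that it is ``similar to Theorem \ref{theoremcont} but in conditional case,'' and your argument is exactly that transcription --- replacing $\mathbb{P}_X$, $f_X$, $\mathbb{P}(x\text{ is }A)$ and $\alpha$ by their conditional analogues, invoking the absolute continuity of $\mathbb{P}_X^E$ (which the paper's appendix on conditional continuous random variables guarantees), and differentiating the conditional distribution function to produce the density term and the Dirac term. The delicate point you flag, namely that the conditional total-law decomposition rests on the identity $\mathbb{P}(\xi_{X,A}\in E'\mid X=x,E)=\mathbb{P}(\xi_{x,A}\in E'\mid X=x,E)$, is precisely the assumption implicit in the displayed formula for $\mathbb{P}(\xi_{X,A}\in E'\mid E)$ that the paper gives just before the theorem, so your treatment matches the paper's.
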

The above theorem follows the following  corollaries:

\begin{Corollary}
	Considering the assumption of Theorem \ref{cond}, if $\xi_{x,A}$ and $X$ are independent given $E$, then we have that
	\[ f_{\xi_{X,A}|E}(t|E)=\mathbb{P}(t\text{ is } A|E)f_{X|E}(t|E)+\alpha^E\delta(t-x_A),\] where, $\alpha^E=1-\int_{ -\infty}^{\infty}\mathbb{P}(x\text{ is }A|E)f_{X|E}(x|E)\,\mathrm{d}x$.
\end{Corollary}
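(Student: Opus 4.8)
The plan is to derive this directly from Theorem~\ref{cond}, using the independence hypothesis only to collapse the conditional probability $\mathbb{P}(t\text{ is }A|X=t,E)$ appearing there into $\mathbb{P}(t\text{ is }A|E)$. The guiding observation is that the event ``$x$ is $A$'' is by definition the event $\{\xi_{x,A}=x\}$, so its probability is controlled entirely by the random variable $\xi_{x,A}$; hence any independence statement about $\xi_{x,A}$ transfers immediately to this event.

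First I would invoke Theorem~\ref{cond}, which already supplies
\[f_{\xi_{X,A}|E}(t|E)=\mathbb{P}(t\text{ is }A|X=t,E)\,f_{X|E}(t|E)+\alpha^E\delta(t-x_A),\]
together with $\alpha^E=1-\int_{-\infty}^{\infty}\mathbb{P}(x\text{ is }A|X=x,E)f_{X|E}(x|E)\,\mathrm{d}x$. The entire task then reduces to simplifying the factor $\mathbb{P}(x\text{ is }A|X=x,E)$ in both places.

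Next I would apply the hypothesis that $\xi_{x,A}$ and $X$ are independent given $E$. Writing ``$x$ is $A$'' as $\{\xi_{x,A}=x\}$ and using that conditioning on $X=x$ (beyond $E$) cannot affect an event determined by a variable independent of $X$ given $E$, I obtain, for $\mathbb{P}_X$-almost every $x$,
\[\mathbb{P}(x\text{ is }A|X=x,E)=\mathbb{P}(\xi_{x,A}=x|X=x,E)=\mathbb{P}(\xi_{x,A}=x|E)=\mathbb{P}(x\text{ is }A|E).\]
Substituting this equality into the density term and into the integrand defining $\alpha^E$ yields precisely the claimed generalized pdf and the stated expression for $\alpha^E$, so the corollary follows by a single substitution.

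The only point requiring care — more a subtlety than a genuine obstacle — is the ``diagonal'' nature of the step, where the element selected as $A$ coincides with the conditioned value of $X$. One must read the hypothesis ``$\xi_{x,A}$ and $X$ are independent given $E$'' as holding for each fixed index $x$, so that the regular conditional probability satisfies $\mathbb{P}(\xi_{x,A}=x\mid X=x',E)=\mathbb{P}(\xi_{x,A}=x\mid E)$ for all $x'$, and in particular for $x'=x$; this is the standard interpretation already used implicitly throughout Section~\ref{frame}. Granting it, no further computation beyond that of Theorem~\ref{cond} is needed.
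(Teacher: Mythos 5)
Your proposal is correct and matches the paper's (implicit) argument: the corollary is stated as an immediate consequence of Theorem \ref{cond}, obtained by using the conditional independence of $\xi_{x,A}$ and $X$ given $E$ to replace $\mathbb{P}(t\text{ is }A|X=t,E)$ by $\mathbb{P}(t\text{ is }A|E)$ in both the density term and the integrand defining $\alpha^E$. Your remark on the ``diagonal'' reading of the independence hypothesis is a sensible clarification of a point the paper leaves tacit, but it does not change the route.
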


\begin{Corollary}
	Considering the assumption of Theorem \ref{cond}, if $E$ is the event of "$\xi_{X,B}=x_B$", and $B$ is a fuzzy attribute of $X$, then  we have that
		\[ f_{\xi_{X,A}|E}(t|E)=
			\mathbb{P}(t\text{ is }A|\neg(t\text{ is }B))f_{X|E}(t|E)+\alpha\delta(t-x_A),\]
		where, $\alpha^E=1-\int_{ -\infty}^{\infty}\mathbb{P}(x\text{ is }A|\neg(x\text{ is }B))f_{X|E}(x|E)\,\mathrm{d}x$.
\end{Corollary}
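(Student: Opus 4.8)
The plan is to derive this Corollary directly from Theorem \ref{cond} by specializing the conditioning event to $E=\{\xi_{X,B}=x_B\}$ and simplifying the coefficient $\mathbb{P}(t\text{ is }A|X=t,E)$ appearing in the general formula. Theorem \ref{cond} already supplies
\[f_{\xi_{X,A}|E}(t|E)=\mathbb{P}(t\text{ is }A|X=t,E)\,f_{X|E}(t|E)+\alpha^E\delta(t-x_A),\]
so the only genuine work is to establish the identity $\mathbb{P}(t\text{ is }A|X=t,\xi_{X,B}=x_B)=\mathbb{P}(t\text{ is }A|\neg(t\text{ is }B))$. The formula for the normalising constant then follows by substituting this identity into the integral defining $\alpha^E$ in Theorem \ref{cond}, which reproduces exactly the stated $\alpha^E$ (the coefficient written $\alpha$ in the display being this same $\alpha^E$).

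First I would unwind the event $\{\xi_{X,B}=x_B\}$ on the fibre $\{X=t\}$. By the defining property of $\xi_{X,B}$ in Section \ref{frame}, namely $\mathbb{P}(\xi_{X,B}\in\cdot\,|X=x)=\mathbb{P}(\xi_{x,B}\in\cdot\,)$, the variable $\xi_{X,B}$ coincides with $\xi_{t,B}$ once $X=t$ is fixed. Since $\xi_{t,B}=x_B$ is precisely the outcome ``nothing is selected as $B$'' in Experiment~($\star$), we obtain the equality of events $\{X=t\}\cap\{\xi_{X,B}=x_B\}=\{X=t\}\cap\{\neg(t\text{ is }B)\}$. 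Consequently $\mathbb{P}(t\text{ is }A|X=t,\xi_{X,B}=x_B)=\mathbb{P}(t\text{ is }A|X=t,\neg(t\text{ is }B))$.

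Next I would drop the now-redundant conditioning on $X=t$. The $A$-selection experiment for the element $t$ (the variable $\xi_{t,A}$, equivalently the event ``$t$ is $A$'') is by construction independent of the first-step selection $X$; this is the same modelling assumption $\mathbb{P}(\xi_{X,A}=\alpha|X=x)=\mathbb{P}(\xi_{x,A}=\alpha)$ used throughout Section \ref{frame}. Hence the joint law of the pair of events $(``t\text{ is }A",\ ``t\text{ is }B")$ is unaffected by conditioning on $X=t$, giving $\mathbb{P}(t\text{ is }A|X=t,\neg(t\text{ is }B))=\mathbb{P}(t\text{ is }A|\neg(t\text{ is }B))$. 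Plugging this back into both the leading coefficient and the integral $\alpha^E$ of Theorem \ref{cond} yields the asserted generalized pdf, after one checks that the continuity hypotheses of Theorem \ref{cond} hold for the measure $\mathbb{P}^E$ and the density $f_{X|E}$ so that the differentiation step transfers verbatim.

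The main obstacle is the second step: making precise that the $A$-selection of $t$ is independent of the first-step draw $X$, and that this independence survives conditioning on the event $\neg(t\text{ is }B)$ rather than merely on $X=t$. One must be careful that it is the independence between the attribute-selection experiments and the first-step selection that is being invoked, not any (generally false) independence between the $A$- and $B$-selections of the same element; in the standard models the latter are deliberately coupled through the chosen $t$-norm. I would therefore record this independence as an explicit hypothesis, consistent with the defining relations of $\xi_{X,A}$ and $\xi_{X,B}$, so that the conditioning on $X=t$ can legitimately be suppressed while the dependence of the $A$-selection on $\neg(t\text{ is }B)$ is retained.
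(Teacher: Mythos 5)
Your derivation is essentially the intended one: the paper states this corollary without proof as a direct specialization of Theorem \ref{cond}, and your key step --- the event identity $\{X=t\}\cap\{\xi_{X,B}=x_B\}=\{X=t\}\cap\{\neg(t\text{ is }B)\}$ on the fibre $\{X=t\}$, which converts the coefficient $\mathbb{P}(t\text{ is }A|X=t,E)$ of Theorem \ref{cond} into $\mathbb{P}(t\text{ is }A|X=t,\neg(t\text{ is }B))$ --- is exactly what is needed. One caveat: your assertion that the $A$-selection of $t$ is independent of the first-step draw $X$ ``by construction'' is not supported by the defining relation $\mathbb{P}(\xi_{X,A}\in\cdot\,|X=x)=\mathbb{P}(\xi_{x,A}\in\cdot\,)$; that relation fixes the conditional law of $\xi_{X,A}$ given $X$ but does not make the second-step event ``$t$ is $A$'' independent of $X$ --- indeed Theorem \ref{cond} deliberately retains the conditioning on $X=t$, and the first Corollary makes precisely this independence an explicit additional hypothesis rather than a consequence of the construction. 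You recover correctness with your closing remark that this independence should be recorded as an explicit hypothesis (equivalently, one may read the paper's $\mathbb{P}(t\text{ is }A|\neg(t\text{ is }B))$ as shorthand still carrying the implicit conditioning on $X=t$), so the argument stands.
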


Let's finish this section by the following example:

\begin{Example}
	Let $\Omega$ denotes the rainfall in a specific region and $\Omega'$ be collected water on the ground at a certain place located in the region. Consider the fuzzy attributes $"high"$ and $"wet"$  for  $\Omega$  and   $\Omega'$, respectively. Also, assume that $X$ and $Y$ are the inclusion random variables corresponding to $\Omega$ and $\Omega'$, respectively.  One could think of being $"wet"$ for a specific $y_0$  not depending on the joint distribution of the following events:
	\begin{itemize}
		\item $X=x$,
		\item $Y=y$, and
		\item  selecting $x'$ as $"high"$
		\end{itemize}  
	 for any $x,x'\in\mathrm{Supp}(X)$ and $y_0,y\in\mathrm{Supp}(Y)$. Here, since all conditions are  removed, the above formulas for conditional probabilities are not applicable. Hence, consider the case that we drop the third one (i.e. selecting $x'$ as "high").
	  However, for greater values of $X=x$ and $Y=y$, since grater values of $Y$ are more likely available, we might select $y_0$ with a lower probability as $"wet"$. Also, being $"high"$ depends on $X$, since for instance what is considered high rainfall in London is significantly different from what is considered high rainfall in Lut desert (located in Iran). Hence, dependencies of the above random variables come from the subjective decisions that might be made by criteria such as the required precision. Therefore, by considering $x_{high}=y_{wet}=0$ and the aforementioned independence in selecting $y_0$ as $"wet"$ (without the third one), we have the followings:
	\begin{align*}
		&\mathbb{P}(y_0 \text{ is } wet|Y=y,X=x)=\mathbb{P}(y_0\text{ is }wet),\\
		&f_{\xi_{Y,wet}}(y_0)=\mathbb{P}(y_0\text{ is }wet)f_Y(y_0),\\
		&\mathbb{E}(\xi_{Y,wet})=\int_{-\infty}^{\infty}y\mathbb{P}(y\text{ is }wet)f_Y(y)\,\mathrm{d}y,\\
		&f_{\xi_{Y,wet}|X}(y_0|x)=\mathbb{P}(y_0\text{ is }wet)f_{Y|X}(y_0|x),\\
		&f_{Y|\xi_{X,high}}(y_0|x)=f_{Y|E}(y_0|E),\quad E=(x\text{ is }high)\& (X=x),
	\end{align*}
where $y_0\in\mathrm{Supp}(Y)\backslash\{0\}$ and $x\in\mathrm{Supp}(X)\backslash\{0\}$.
Now, if we assume that selecting $x$ as "high" is independent from the joint distribution of $X$ and $Y$ as well, then the last  formula can be written as follows:
\begin{align*}
	&f_{Y|\xi_{X,high}}(y_0|x)=f_{Y|X}(y_0|x).
\end{align*}
\end{Example}

\section{Relationship Between Zadeh's PFL and our Theory}\label{Relationship Between Zadeh's PFL and our Theory}
Zadeh's PFL is different from our theory in nature. Indeed, Zadeh's theory is a probability theory for fuzzy events, while our theory is about the probability of selecting crisp elements with a certain fuzzy attribute. Anyway, several aspects of Zadeh's theory could be interpreted in our theory. Consider a probability  space $(\Omega,\mathscr{F}, P)$, where $\Omega\subseteq\mathbb{R}$ and $\mathscr{F}$ is the Borel $\sigma$-algebra on $\Omega$. Let $X:\Omega\to\mathbb{R}$ be the inclusion random variable, and let $A$ be a fuzzy attribute of $\Omega$. Zadeh defined the probability of $\widetilde{A}$ (i.e. the fuzzy subset of $\Omega$ associated to $A$) as follows:
\[ P(\widetilde{A})=\int_{\mathbb{R}}\mu_A(x)\,\mathrm{d}P_X,\]
where $P_X$ is the probability function induced by $X$.
 The above probability in our Simple Fuzzy Model is equal to $\mathbb{P}(\Omega\text{ is } A)$ (see Section \ref{Measure Theoretical Approach}). Zadeh also defined a mean for $P(\widetilde{A})$ as follows:
\[m(\widetilde{A})=\frac{1}{P(\widetilde{A})}\int_{\mathbb{R}}x\mu_A(x)\,\mathrm{d}P_X.\]
Note that $m(\widetilde{A})$ in our Simple Fuzzy Model could be obtained as follows:
\[ m(\widetilde{A})=\frac{\mathbb{E}\left(\xi_{X,A}\right)}{\mathbb{P}(\Omega\text{ is } A)},\]
where $x_A=0$.

\section{Conclusion}\label{Conclusion}

One major problem with Deep Learning Algorithms is that they fail when it comes to reasoning (see \cite{faghihi2020association}, \cite{madan2021fast}, and \cite{pearl2018book}). Hence, it has been suggested in \cite{faghihi2020cog} and \cite{faghihi2020association}  to use Non-Classical Logics to equip the machine with reasoning. In this paper, we constructed a PFL framework that can solve basic problems but fundamental in some different types of reasoning such as Causal Reasoning. Our PFL suggests a probabilistic screening criterion to feed a machine with datapoints satisfying a specific fuzzy attribute at a time (see Section \ref{Problem Setting and Some Motivations}). Further, it provides an assignment mechanism to calculate the fuzzy average treatment effect for Causal Inference. Our Fuzzy Causal Inference framework is flexible enough to integrate the vagueness of the selection criteria into different Causal Inference concepts such as ignorability condition. Note that as we mentioned in Section \ref{Related Works}, the previous PFL frameworks have at least one of the following drawbacks:
 \begin{itemize}
 	\item The lack of a precise mathematical setting.
 	\item
 	Providing a framework for which events are just fuzzy, and hence it is not possible to have crisp events while the criteria of the random selection are fuzzy.
 	\item
 Computationally expensive.
 \end{itemize}
Our PFL framework solves all of the above drawbacks. 

In the near future, we intend to create an integration theory of Fuzzy Logic and Causal Inference by using our PFL framework. Although we very briefly touched the concepts of Causal Inference in this paper,  we prepared the ground for the fully integration of PFL and Causal Inference in general. To demonstrate this, we showed an  application of our PFL theory in Causal Inference. 

We recommend other researchers to take part in developing the PFL framework suggested in this paper. For instance, the followings are recommended:
\begin{itemize}
	\item
	Working on the probabilistic aspects of our PFL theory such as studying the independence, dependence and correlation of random variables, and studying  famous inequalities and theorems  in Probability Theory.
	\item Constructing new models of our PFL capable of processing specific needs for specific applications in the areas  such as Medicine, Market Research, Social Studies and Control Systems. 
	\item Working on the philosophical aspects of our PFL theory. A notable work in our research  was to interpret "selecting {\it nothing }" as "selecting a specific element". This could be controversial for some researchers and philosophers.
\end{itemize}

\section*{Acknowledgments} 

The research of Amir Saki has been supported by the Institute for Research in Fundamental Sciences (IPM).


\bibliographystyle{amsplain}
\bibliography{References}

\appendix
\section{Topological Spaces}\label{1}
Let $f:\mathbb{R}\to\mathbb{R}$ be a function. Then, $f$ is continuous if and only if $f^{-1}(I)$ is a union of open intervals for any open interval $I\subseteq\mathbb{R}$. Continuous functions are important for many mathematical purposes such as integration. To generalize continuity on arbitrary spaces (not only $\mathbb{R}$), one common and popular way is to define the so called open sets. 

\subsection{Definition of a Topology}

Let $\Omega$ be a set. A collection $\tau$ of subsets of $\Omega$ is called a topology on $\Omega$, whenever 
\begin{itemize}
	\item 
	$\emptyset, \Omega\in\tau$, 
	\item
	$U,V\in\tau$ implies that $U\cap V\in\tau$, and
	\item
	for any family $\{U_i\}_{i\in I}$ of elements $\tau$, we have that $\bigcup_{i\in I}U_i\in\tau$.
\end{itemize}
Then,  $(\Omega, \tau)$ (or simply $\Omega$) is called a topological space. Moreover, each element of $\tau$ is called an open subset of $\Omega$. The complement of each open subset of $\Omega$ is called a closed subset of $\Omega$. 

\subsection{Base for a Topology}

Let $(\Omega,\tau)$ be a topological space. A subcollection $\mathcal{B}=\{B_i\}_{i\in I}$ of $\tau$ is called a base for $\tau$, if 
\begin{itemize}
	\item 
	$\bigcup_{i\in I}B_i=X$, and
	\item
	for any $\omega\in B_i\cap B_j$, there exists $l\in I$ with $\omega\in B_l\subseteq B_i\cap B_j$ for any $i,j\in I$. 
\end{itemize}
A topology could be uniquely determined by its base. Indeed, if $\mathcal{B}$ is a base for the topology $\tau$, then $\tau$ is the collection of arbitrary unions of the elements of $\mathcal{B}$. Conversely, any collection $\mathcal{B}$ of subsets of a set $\Omega$ satisfying the above conditions uniquely defines a topology on $\Omega$. 

Let $\Omega$ be a set. Then, $\tau_1=\{\emptyset, \Omega\}$ and $\tau_2=\mathcal{P}(\Omega)$ are the smallest and the largest topologies defined on $\Omega$, respectively. The standard topology on $\mathbb{R}$ is the collection of arbitrary unions of open intervals. 
\subsection{Continuous Functions}
Let $(\Omega,\tau)$ and $(\Omega',\tau')$ be two topological spaces. A function $f:(\Omega,\tau)\to(\Omega',\tau')$ is called continuous, whenever $f^{-1}(U)\in\tau$ for any $U\in\tau'$. One could see that $f:(\Omega,\tau)\to(\Omega',\tau')$ is continuous if and only if $f^{-1}(B')\in\tau$ for any $B'\in\mathcal{B}'$, where $\mathcal{B}'$ is a base for $\tau'$. 

\subsection{Product Topology}

Let $(\Omega, \tau)$ and $(\Omega',\tau')$ be two topological spaces. Then, the product topology on the Cartesian product $\Omega\times \Omega'$ is the topology whose base is $\{U\times U':U\in\tau, U'\in\tau'\}$. Note that if $\mathcal{B}$ and $\mathcal{B}'$ are bases of $\tau$ and $\tau'$, respectively, then $\{B\times B':B\in\mathcal{B}, B'\in\mathcal{B}'\}$ is the base for the product topology on $\Omega\times \Omega'$. Now, inductively, one could define the product topology on the Cartesian product $\Omega_1\times\cdots\times\Omega_n$ of topological spaces $(\Omega_i,\tau_i)$ for any $i=1\ldots,n$. For instance, the product topology of the standard topology on $\mathbb{R}$ with itself on $\mathbb{R}^2$ has the following base:
\[ \{I\times J: I\text{ and }J\text{ are  open intervals in }\mathbb{R}\}.\]
In general, the following set (i.e., the set of all hypercubes in $\mathbb{R}^n$) is a base for the standard topology on $\mathbb{R}^n$:
\[\{I_1\times\cdots\times I_n:I_i\text{'s  are  open intervals in }\mathbb{R}^n\}.\]
In the sequel, we consider the standard topology on $\mathbb{R}^n$.
Note that it could be shown that any open set in $\mathbb{R}^n$ could be written as a countable union of disjoint hypercubes. 
\subsection{Extended Real Line}\label{extendedrealline}

Let $\overline{\mathbb{R}}=\mathbb{R}\cup\{\pm\infty\}$. We define the following arithmetic:
\begin{align*}
	&a\times \infty=\infty\times a=-((-a)\times\infty)=-(\infty\times(-a))=\infty,\\
	& b+\infty=\infty+\infty=\infty\times\infty=\infty,\qquad \frac{b}{\infty}=0.
	\end{align*}
for any $a,b\in\mathbb{R}$ with $a>0$. Indeed, there are similar identities for $-\infty$. Note that we leave $\pm(\infty-\infty)$, $0\times (\pm\infty)$ and $(\pm\infty)\times 0$ undefined. One could see that the following set is a base for a topology on $\overline{\mathbb{R}}$ called standard base:
\[\mathcal{B}=\{(a,\infty]:a\in\mathbb{R}\}\cup\{[-\infty,b):a\in\mathbb{R}\}\cup\{(a,b):a,b\in\mathbb{R},\;a<b\},\]
which is called the standard topology on $\overline{\mathbb{R}}$. We define $\overline{\mathbb{R}^n}:=\overline{\mathbb{R}}^n$ for any positive integer $n$, and we equip it by the product topology. It is worth mentioning that each open set in $\overline{\mathbb{R}}^n$ could be written as a countable union of disjoint elements of the form of $E_1\times\cdots\times E_n$, where $E_i\in\mathcal{B}$ for any $1\le i\le n$.

\vspace*{.4cm}
Note that several important concepts such as connected spaces, compact spaces, local properties, quotient spaces and several other concepts are discussed in Topology as a branch of mathematics. To study more in General Topology, readers are referred to the following textbooks \cite{armstrong2013basic}, \cite{munkres2000topology} and \cite{prasolov1998intuitive}.

\section{Measure Theory}\label{MeasureTheory}
In this appendix, we provide some definitions and theorems required for explaining the measure theoretical point of view of Probability Theory (see Appendix \ref{Probability Theory (Measure Theoretical Point of View)}). To see more details, we refer readers to study a standard textbook such as \cite{bogachev2007measure} and \cite{folland1999real}.
\subsection{$\sigma$-Algebras}

Let $\Omega$ be a set. A $\sigma$-algebra on $\Omega$ is a collection $\mathscr{F}$ of subsets of $\Omega$ which satisfies the following properties:
\begin{enumerate}
	\item 
	$\Omega\in\mathscr{F}$, 
	\item
	$E\in\mathscr{F}$ implies that $\Omega\backslash\{E\}\in\mathscr{F}$ for any $E\in\mathscr{F}$, and 
	\item
	if $\{E_i\}_{i=1}^{\infty}$ is a family of the elements of $\mathscr{F}$ then $\bigcup_{i=1 }^{\infty}E_i\in\mathscr{F}$.
\end{enumerate}
Let $\Omega$ be a set and $\mathscr{F}$ be a $\sigma$-algebra on $\Omega$. It follows from these properties that $\emptyset$ and the intersection of any countable family of elements of a $\sigma$-algebra on $\Omega$ belong to the $\sigma$-algebra as well. We sometimes refer to elements of a $\sigma$-algebra on $\Omega$ as measurable subsets of $\Omega$.

For any set $\Omega$, the collection $\{\emptyset, \Omega\}$ and the powerset of $\Omega$ are the smallest and largest $\sigma$-algebras defined on $\Omega$. The $\sigma$-algebra generated by a collection $\mathcal{S}$ of subsets of $\Omega$ is defined to be the intersection of all $\sigma$-algebras on $\Omega$ containing $\mathcal{S}$. Note that the $\sigma$-algebra generated by all open subsets of $\mathbb{R}^n$ is called the Borel $\sigma$-algebra and is denoted by $\mathscr{B}(\mathbb{R}^n)$. The $\sigma$-algebra generated by each of the following sets is $\mathscr{B}(\mathbb{R})$:
\begin{align*}
	\mathcal{B}_1&=\{(a,b):a,b\in\mathbb{R},\;a<b\},&\mathcal{B}_2&=\{(a,b]:a,b\in\mathbb{R},\;a<b\},\\
	\mathcal{B}_3&=\{[a,b):a,b\in\mathbb{R},\;a<b\},&
	\mathcal{B}_4&=\{[a,b]:a,b\in\mathbb{R},\;a\le b\},\\
	\mathcal{B}_5&=\{(a,\infty):a\in\mathbb{R}\}\cup\{(-\infty,b]:b\in\mathbb{R}\},&&\\
	\mathcal{B}_6&=\{[a,\infty):a\in\mathbb{R}\}\cup\{(-\infty,b]:b\in\mathbb{R}\},&&\\
	\mathcal{B}_7&=\{(a,\infty):a\in\mathbb{R}\}\cup\{(-\infty,b):b\in\mathbb{R}\},&&\\
	\mathcal{B}_8&=\{[a,\infty):a\in\mathbb{R}\}\cup\{(-\infty,b):b\in\mathbb{R}\}.&&
\end{align*}

 Indeed, the Borel $\sigma$-algebra is defined for any topological space. Especially, for any integer $n$, we have the Borel $\sigma$-algebra on $\overline{\mathbb{R}}^n$, denoted by $\mathscr{B}(\overline{\mathbb{R}}^n)$(see Appendix~\ref{1} for the definition of $\overline{\mathbb{R}}^n$ and the the standard topology on it).  

\subsection{Measurable Spaces and Measurable Functions}

For a set $\Omega$  and a $\sigma$-algebra $\mathscr{F}$  on $\Omega$, the pair $(\Omega,\mathscr{F})$ is called a measurable space. Let $(\Omega,\mathscr{F})$ and $(\Omega',\mathscr{F}')$ be two measurable spaces. A function $f:(\Omega,\mathscr{F})\to(\Omega',\mathscr{F}')$ is called measurable if $f^{-1}(E')\in\mathscr{F}$ for any $E'\in\mathscr{F}'$. If $\Omega$ and $\Omega'$ are both topological spaces and $\mathscr{F}$ and $\mathscr{F}'$ are the Borel $\sigma$-algebras on $\Omega$ and $\Omega'$, respectively, then any continuous function from $(\Omega,\mathscr{F})$ to $(\Omega',\mathscr{F}')$ is measurable (but the converse is wrong). One could see that the composition of two measurable functions is also measurable. As a convention, by a real valued measurable function $f:(\Omega,\mathscr{F})\to \overline{\mathbb{R}}$, we mean $f:(\Omega,\mathscr{F})\to(\overline{\mathbb{R}},\mathscr{B}(\overline{\mathbb{R}}))$ is measurable. We also consider the similar convention for multi-valued functions into $\overline{\mathbb{R}}^n$.

\subsection{Indicator Functions and Simple Functions}
Let $(\Omega,\mathscr{F})$ be a measurable space and $E\in\mathscr{F}$. Then, the indicator function of $E$, denoted by $\mathbbm{1}_E$, is defined as follows:
\[\mathbbm{1}_E:\Omega\to\mathbb{R},\qquad \mathbbm{1}_E(\omega)=\left\{\begin{array}{ll}1&,\omega\in E\\0&,\omega\notin E\end{array}\right..\]
Let $E_1,\ldots, E_n\in\mathscr{F}$ be pairwise disjoint and $r_1,\ldots,r_n\in\mathbb{R}$. Then, $\sum_{i=1}^nr_i\mathbbm{1}_{E_i}$ is called a simple function. One could see that the sum, product and composition of two simple functions is a simple function as well. Note that a simple function (and hence an indicator function) is measurable.

A famous result states that any non-negative measurable function $g:\Omega\to\overline{\mathbb{R}}$ is the point-wise limit of a sequence of simple functions defined on $\Omega$. This result plays a key role in the sequel. 
\subsection{Measures}

Let $(\Omega,\mathscr{F})$ be a measurable space. A function $\mu:\mathscr{F}\to[0,\infty]$ is called a measure, whenever it satisfies the following conditions:
\begin{enumerate}
	\item 
	$\mu(\emptyset)=0$, and
	\item
	$\mu\left(\bigcup_{i=1}^{\infty}E_i\right)=\sum_{i=1}^{\infty}\mu(E_i)$ for any countable family $\{E_i\}_{i=1}^{\infty}$ of pairwise disjoint elements $\mathscr{F}$. 
\end{enumerate}
One could see that a measure $\mu$ defined on a measurable space $(\Omega,\mathscr{F})$ is increasing (i.e., $\mu(E)\le \mu(E')$ if $E\subseteq E'$ for any $E,E'\in\mathscr{F}$).
If $\mu$ is a measure defined on the measurable space $(\Omega,\mathscr{F})$, then $(\Omega,\mathscr{F},\mu)$ is called a measure space. 

As an example, assume that $\mathscr{F}$ is the powerset of $\Omega$ and $\omega$ is an arbitrary element of $\Omega$. Then,  the Dirac measure with respect to $\omega$ is defined as follows:
\[\delta_{\omega}(E)=\left\{\begin{array}{ll}1 &,\omega\in E\\ 0&,\omega\notin E\end{array}\right.\]
for any subset $E$ of $\Omega$. 

\subsection{Continuity Property of Measures}

Let $(\Omega,\mathscr{F},\mu)$ be a measure space and $\{E_n\}_{n=1}^{\infty}$ be a family of elements of $\mathscr{F}$ with $E_1\subseteq E_2\subseteq\cdots$. Let us assume that $E=\bigcup_{n=1}^{\infty}E_n$ (i.e., $E_n\uparrow E$). Then, obviously $E$ is measurable. Now, we show that $\mu(E)=\lim_{N\to\infty}\mu(E_N)$. To see this,   assume that
\[B_1=E_1,\qquad B_n=E_n\backslash E_{n-1},\quad n=2,3,\ldots.\]
Then, $\{B_n\}_{n=1}^{\infty}$ is a family of pairwise disjoint measurable sets with $E=\bigcup_{n=1}^{\infty}B_n$. Thus, we have that
\begin{align*}
	\mu(E)=\mu\left(\bigcup_{n=1}^{\infty}B_n\right)=\sum_{n=1}^{\infty}\mu(B_n)=\lim_{N\to\infty}\sum_{n=1}^{N}\mu(B_n)=\lim_{N\to\infty}\mu\left(\bigcup_{n=1}^NB_n\right)=\lim_{N\to\infty}\mu(E_N).
\end{align*}
Similarly, one could see that if $E_1\supseteq E_2\supseteq\cdots $,  $E=\bigcap_{n=1}^{\infty} E_n$ (i.e., $E_n\downarrow E$), and $\mu(E_1)<\infty$, then $\mu(E)=\lim_{N\to\infty}\mu(E_N)$. 
\subsection{Almost Everywhere Properties}

Let $(\Omega,\mathscr{F},\mu)$ be a measure space and $\mathcal{P}$ be a property of elements in $\Omega$. It is said that $\mathcal{P}$ holds  almost everywhere in $\Omega$ if there exists $E\in\mathscr{F}$ with $\mu(E)=0$, and $\mathcal{P}$ holds for each $\omega\in\Omega\backslash\{E\}$. For instance, assume that the measure of any singleton is zero. If a property is satisfied for all elements of $\Omega$ except for a countable subset of $\Omega$, since the measure of any countable set is zero, then that property holds almost everywhere in $\Omega$. In the sequel, any equality or property involving measures holds almost everywhere but we might no longer mention it. For instance, when we say that "there is a unique function satisfying the property $\mathcal{P}$", we mean that if two functions satisfy $\mathcal{P}$, then they are almost everywhere equal.

\subsection{Lebesgue Measure}

Any measure defined on a Borel $\sigma$-algebra is called a Borel measure. A theorem guaranties that there is a unique Borel measure $\lambda$ on $\overline{\mathbb{R}}$ in such a way that $\lambda([a,b])=b-a$ for any real numbers $a$ and $b$ with $a\le b$. Similarly, there is a unique Borel measure $\lambda$ on $\overline{\mathbb{R}}^n$ such that $\lambda(H)=\mathrm{Vol}(H)$, where $H$ is a hypercube (i.e., a Cartesian product of $n$ intervals in $\overline{\mathbb{R}}$) and $\mathrm{Vol}(H)$ is the volume of $H$.  The aforementioned Borel measure on $\overline{\mathbb{R}}^n$ is called the Lebesgue measure.  For instance,  hypercubes in $\overline{\mathbb{R}}^2$ and $\overline{\mathbb{R}}^3$ are squares and cubes, respectively. Hence, intuitively the Lebesgue measures in $2$ and $3$ dimensions  represent  the area and the ordinary volume, respectively.

\subsection{Integrals}

Let $(\Omega,\mathscr{F},\mu)$ be a measure space and $s=\sum_{i=1}^nr_i\mathbbm{1}_{E_i}$ be a non-negative simple function.  The integral of $s$ on $\Omega$ is defined as follows:
\[\int_{\Omega}s\,\mathrm{d}\mu=\sum_{i=1}^nr_i\mu(E_i).\]
Now, assume that $f:\Omega\to[0,\infty]$ is a measurable function. Then, the integral of $f$ on $\Omega$ is defined as below:
\[\int_{\Omega}f\,\mathrm{d}\mu=\sup\left\{\int_{\Omega}s\,\mathrm{d}\mu: s\text{ is a simple function with } 0\le s\le f\right\}.\]
If $E\in\mathscr{F}$, then the integral of $f$ on $E$ is defined as follows: 
\[\int_Ef\,\mathrm{d}\mu=\int_{\Omega}f\mathbbm{1}_E\,\mathrm{d}\mu.\]
Note that if $f$ is only defined on $E\in\mathscr{F}$, then we can extend it to a measurable function $\tilde{f}:\Omega\to\mathbb{R}$ by the following setting:
\[\tilde{f}(x)=\left\{\begin{array}{ll}f(x)&,x\in E\\0&,x\notin E\end{array}\right.,\]
and we define
\[\int_Ef\,\mathrm{d}\mu=\int_E\tilde{f}\,\mathrm{d}\mu.\]
Now, let $f:\Omega\to\mathbb{R}$ be a measurable function. Assume that $f^+=\max\{f,0\}$ and $f^-=\max\{-f,0\}$. Both $f^+$ and $f^-$ are non-negative and measurable, and we have that $f=f^+-f^-$ and $|f|=f^++f^-$. We say that $f$ is integrable, if $\int_{\Omega}|f|\,\mathrm{d}\mu<\infty$. Now, for an integrable function $f:\Omega\to\mathbb{R}$, we define
\[\int_{E}f\,\mathrm{d}\mu=\int_Ef^+\,\mathrm{d}\mu-\int_Ef^-\,\mathrm{d}\mu.\]
To be more precise, sometimes we use the notation $\int_Ef(\omega)\,\mathrm{d}\mu(\omega)$ to clarify the variable we are taking the integral with respect to.

\subsection{Integrals of Multi-Values Functions}

Let $(\Omega,\mathscr{F},\mu)$ be a measure space and $f:\Omega\to\overline{\mathbb{R}}^n$ be non-negative (in each component) and measurable. Assume that $f(\omega)=(f_1(\omega),\ldots,f_n(\omega))$ for any $\omega\in\Omega$. For any $1\le i\le n$, note that $\pi_i:\overline{\mathbb{R}}^n\to\overline{\mathbb{R}}$ sending each $\omega\in\overline{\mathbb{R}}^n$ to its $i^{\text{th}}$ component is measurable, which implies that $f_i=\pi_i\circ f$ is measurable as well. In general,  $f$ is measurable if and only if $f_i$ is measurable for any $1\le i\le n$. Now, we define the integral of $f$ as follows:
\[\int_Ef\,\mathrm{d}\mu:=\left(\int_Ef_1\,\mathrm{d}\mu,\ldots,\int_Ef_n\,\mathrm{d}\mu\right).\]
Similarly,  an arbitrary function $f:\Omega\to\overline{\mathbb{R}}^n$ (not necessarily non-negative) is integrable if and only if $f_i$ is intgrable for any $1\le i\le n$. Thus, the above definition is considered for the integrable function $f$ as well.  
 Hence, many properties of the integration of the single valued functions (provided in the sequel) hold for multi-values functions as well.
\subsection{Some Properties of Integrals}
Let $(\Omega,\mathscr{F},\mu)$ be a measure space, and let $f,g:\Omega\to\overline{\mathbb{R}}$ be non-negative and measurable (or in general case, one could consider the integrable functions).
If $\{E_i\}_{i=1}^{\infty}$ is a family of pairwise disjoint measurable subsets of $\Omega$, then we have that
\[\int_{\bigcup_{i=1}^{\infty}E_i}f\,\mathrm{d}\mu=\sum_{i=1}^{\infty}\int_{E_i}f\,\mathrm{d}\mu.\]
Further, if $\lambda,\eta\in\mathbb{R}$, then for any measurable subset $E$ of $\Omega$, we have that 
\[\int_E(\lambda f+\eta g)\,\mathrm{d}\mu=\lambda \int_Ef\,\mathrm{d}\mu+\eta\int_Eg\,\mathrm{d}\mu.\]
Furthermore, if $\{f_n\}_{n=1}^{\infty}$ is a sequence of non-negative measurable real valued functions defined on $(\Omega,\mathscr{F})$, then we have that
\[\sum_{n=1}^{\infty}\int_Ef_n(\omega)\,\mathrm{d}\mu(\omega)=\int_{E}\sum_{n=1}^{\infty}f_n(\omega)\,\mathrm{d}\mu(\omega)\]
for any $E\in\mathscr{F}$.

\subsection{Measures Defined via Measures}

Let $(\Omega,\mathscr{F},\mu)$ be a measure space and $f:(\Omega,\mathscr{F})\to[0,\infty]$ be a non-negative measurable function. Define $\nu:\mathscr{F}\to[0,\infty]$  as follows:
\[\nu(E)=\int_Ef\,\mathrm{d}\mu\]
for any $E\in\mathscr{F}$. Then, $\nu$ is a measure on $(\Omega,\mathscr{F})$. This method to define new measures is common to define probability measures.

\subsection{Monotone Convergence Theorem}
Let $(\Omega,\mathscr{F}, \mu)$ be a measure space and $f:(\Omega,\mathscr{F})\to\overline{\mathbb{R}}$ be measurable and non-negative. If $\{s_n\}_{n=1}^{\infty}$ is a sequence of non-negative simple functions defined on $(\Omega,\mathscr{F})$ which converges to $f$ almost everywhere, then 
\[\lim_{n\to\infty}\int_{E}s_n(\omega)\,\mathrm{d}\mu(\omega)=\int_Ef(\omega)\,\mathrm{d}\mu(\omega)\]
for any $E\in\mathscr{F}$.

\subsection{Dominating Convergence Theorem}

Let $(\Omega,\mathscr{F},\mu)$ be a measure space, and $\{f_n\}_{n=1}^{\infty}$ be a sequence of non-negative measurable functions from $(\Omega,\mathscr{F})$ to $\overline{\mathbb{R}}^n$. If $\{f_n\}_{n=1}^{\infty}$ converges point-wise to $f$, and there exists an integrable function $g:(\Omega,\mathscr{F})\to\overline{\mathbb{R}}^n$ in such a way that 
$|f_n(\omega)|\le g(\omega)$ for any $\omega\in\Omega$, then we have that
\[\lim_{n\to\infty}\int_{E}f_n(\omega)\,\mathrm{d}\mu(\omega)=\int_E f(\omega)\,\mathrm{d}\mu(\omega)\]
for any $E\in\mathscr{F}$.

\subsection{Equality of Finite Borel Measures}

Let $\Omega$ be a topological space, and let $\mu_1$ and $\mu_2$ be two finite Borel measures on $\Omega$. Then, by \cite[Lemma 7.1.2]{bogachev2007measure}, if $\mu_1$ and $\mu_2$ are equal on all open sets in $\Omega$, then they are equal on all Borel sets in $\Omega$. Now, let's consider the case $\Omega=\mathbb{R}^n$. We know that any open set in $\mathbb{R}^n$ is a countable union of disjoint hypercubes, which implies that if two Borel measures on $\mathbb{R}^n$ are equal on all hypercubes, then they are equal on all open sets and hence equal on all Borel sets in $\Omega$. Similarly, if two Borel measures are equal on a base  $\mathcal{B}$ for the standard topology on $\overline{\mathbb{R}}^n$ (discussed in \ref{extendedrealline}), then they are equal on all Borel sets in $\overline{\mathbb{R}}^n$ .

\subsection{Computing Lebesgue Integrals via Riemann Integrals}
Consider the measure space $(\mathbb{R},\mathscr{B}(\mathbb{R}),\lambda)$, where $\lambda$ is the Lebesgue measure. Then, one could see that any Riemann integrable function $f:[a,b]\to\mathbb{R}$ is Lebesgue integrable, and we have that
\[\int_{[a,b]}f\,\mathrm{d}\mu=\int_a^bf\,\mathrm{d}x.\]
Also, if $f:\mathbb{R}\to\mathbb{R}$ is a non-negative Riemann integrable function (in the sense of improper integrability), then $f$ is Lebesgue integrable, and we have that 
\[\int_{\mathbb{R}}f\,\mathrm{d}\mu=\int_{-\infty}^{\infty}f\,\mathrm{d}x.\]
A similar statement holds when we consider any interval like $[a,\infty)$ and $(-\infty,b]$ for any real numbers $a$ and $b$. 

\subsection{Radon-Nikodym Theorem}

Let $(\Omega,\mathscr{F},\mu)$ be a measure space. Then, $\mu$ is called a finite measure, if $\mu(\Omega)<\infty$ (and hence $\mu(E)<\infty$ for any $E\in\mathscr{F}$). Further, $\mu$ is called a $\sigma$-finite measure if there is a countable family $\{E_i\}_{i=1}^{\infty}$ of measurable subsets of $\Omega$ with finite measure (i.e., $\mu(E_i)<\infty$ for any $i$) such that $\Omega=\bigcup_{i=1}^{\infty}E_i$. Obviously, a finite measure is a $\sigma$-finite measure. 

Now, assume that $(\Omega,\mathscr{F})$ is a measurable space, and $\mu$ and $\nu$ are two measures defined on this space. Then, we say that $\nu$ is absolutely continuous with respect to $\mu$, if $\mu(E)=0$ implies that $\nu(E)=0$ for any $E\in\mathscr{F}$. 

Now, we are ready to state the Radon-Nikodym theorem. Let $(\Omega,\mathscr{F})$ be a measurable space, and $\mu$ and $\nu$ be two $\sigma$-finite measures on this space. If $\nu$ is absolutely continuous with respect to $\mu$, then there exists a non-negative measurable function $f:\Omega\to[0,\infty]$ in such a way that
\[\nu(E)=\int_Ef\,\mathrm{d}\mu\]
for any $E\in\mathscr{F}$.

\subsection{Changing the Measure of Integration}

Let $(\Omega,\mathscr{F})$ be a measurable space. Also, let $\mu$ and $\nu$  be two measures on $(\Omega,\mathscr{F})$ such that $\nu$ is absolutely continuous with respect to $\mu$. Then, by Radon-Nikodym theorem, there exists a  measurable function $f:\Omega\to[0,\infty]$ satisfying the following:
\[\nu(E)=\int_Ef\,\mathrm{d}\mu\]
for any $E\in\mathscr{F}$. Now, if $g:\Omega\to[0,\infty]$ is a  measurable function, then for any $E\in\mathscr{F}$, we have that
\[\int_E g\,\mathrm{d}\nu=\int_Egf\,\mathrm{d}\mu.\]
To prove it, assume that $E,E'\in\mathscr{F}$. If $g=\mathbbm{1}_{E'}$, then 
\[\int_E\mathbbm{1}_{E'}\,\mathrm{d}\nu=\nu(E\cap E')=\int_{E\cap E'}f\,\mathrm{d}\mu=\int_E\mathbbm{1}_{E'}f\,\mathrm{d}\mu.\]
Hence, the desired identity holds for all indicator functions $g$. By the linearity of integrals, this identity holds for all simple functions $g$ as well. Now, by the monotone convergence theorem, we obtain the identity for all non-negative measurable functions $g$.

\subsection{Change of Variable Theorem}

Let $(\Omega,\mathscr{F},\mu)$ and $(\Omega',\mathscr{F}')$ be a measure space and a measurable space, respectively. Let $F:(\Omega,\mathscr{F})\to(\Omega',\mathscr{F}')$ be a measurable function. Then, $F$ induces a measure on $(\Omega',\mathscr{F}')$ as follows:
\[\mu_F:\mathscr{F}'\to[0,\infty],\quad \mu_F(E')=\mu(F^{-1}(E'))\]
for any $E'\in\mathscr{F}'$. Now, if $g:(\Omega',\mathscr{F}')\to[0,\infty]$ is  measurable, then we have that
\[\int_{\Omega}g\circ F(\omega)\,\mathrm{d}\mu(\omega)=\int_{\Omega'}g(\omega')\,\mathrm{d}\mu_F(\omega').\]

Naturally, the above theorem holds for integrable functions $g:(\Omega',\mathscr{F}')\to\overline{\mathbb{R}}^n$.
\section{Probability Theory (Measure Theoretic Point of View)}\label{Probability Theory (Measure Theoretical Point of View)}
In this appendix,  we provide a very brief description of the basics of the measure theoretical  Probability Theory. To deal with conditional probabilities, we provide two different points of view, and explain how these two points of view are consistent with each other. To study more in the measure theoretical Probability Theory, we recommend the following textbooks \cite{ash2000probability}, \cite{capinski2004measure} and \cite{ccinlar2011probability}.
 
\subsection{Probability Space}

A probability space is a measure space $(\Omega,\mathscr{F},P)$ such that $P(\Omega)=1$. If $(\Omega,\mathscr{F},P)$ is a probability space, then $\Omega$, $\mathscr{F}$ and $P$ are called the sample space, the event space and the probability measure, respectively. Further, each element of the event space is called an event. 

\subsection{Discrete Probability Spaces}

Let $(\Omega,\mathscr{F},P)$ be a probability space. If $\Omega$ is countable and $\mathscr{F}$ is the powerset of $\Omega$, then $(\Omega,\mathscr{F},P)$ is called a discrete probability space. Note that for any $E\in\mathscr{F}$, we have that
\[P(E)=P\left(\bigcup_{e\in E}\{e\}\right)=\sum_{e\in E}P(\{e\}).\]

\subsection{Random Variables and Random Vectors}

Let $(\Omega,\mathscr{F},P)$ be a probability space. A measurable function $X:(\Omega,\mathscr{F})\to\overline{\mathbb{R}}^n$ is called a random vector. A  random variable is obtained when in the above definition we assume that $n=1$. Let $X:(\Omega,\mathscr{F})\to\overline{\mathbb{R}}^n$ be a random vector.  We note that $X$ induces a probability measure $P_X$ on $(\overline{\mathbb{R}}^n,\mathscr{B}(\overline{\mathbb{R}}^n))$ as follows:
\[P_X(E')=P(X^{-1}(E'))\]
for any $E'\in\mathscr{B}(\overline{\mathbb{R}}^n)$. We define the support of $X$, denoted by $\mathrm{Supp}(X)$, to be the smallest closed set $V$ with $P_X(V)=1$. Especially, if $X:(\Omega,\mathscr{F})\to\overline{\mathbb{R}}$ is a random variable  which takes a finitely number of values (i.e., a simple random variable), we have that
\[\mathrm{Supp}(X)=\{r\in\overline{\mathbb{R}}:P_X(r)>0\}.\]
 If the set of values of a random vector $X:(\Omega,\mathscr{F})\to\overline{\mathbb{R}}^n$ is countable, then $X$ is called a discrete random vector. We note that the support of a discrete random variable can be obtained similar to the support of a simple random variable.  
\subsection{Joint Probability Distributions}

Let $(\Omega,\mathscr{F},P)$ be a probability space and $X:(\Omega,\mathscr{F})\to(\overline{\mathbb{R}}^n,\mathscr{B}(\overline{\mathbb{R}}^n))$ and $Y:(\Omega,\mathscr{F})\to(\overline{\mathbb{R}}^m,\mathscr{B}(\overline{\mathbb{R}}^m))$ be two random vectors, respectively. Also, let $\mathscr{B}(\overline{\mathbb{R}}^n)\otimes\mathscr{B}(\overline{\mathbb{R}}^m)$ be the $\sigma$-algebra generated by the set $\mathscr{B}(\overline{\mathbb{R}}^n)\times\mathscr{B}(\overline{\mathbb{R}}^m)$ on $\overline{\mathbb{R}}^n\times\overline{\mathbb{R}}^m$. One could see that $\overline{\mathbb{R}}^n\times\overline{\mathbb{R}}^m=\overline{\mathbb{R}}^{n+m}$ and $\mathscr{B}(\overline{\mathbb{R}}^n)\otimes\mathscr{B}(\overline{\mathbb{R}}^m)=\mathscr{B}(\overline{\mathbb{R}}^{n+m})$. Then, $(X,Y):(\Omega,\mathscr{F})\to(\overline{\mathbb{R}}^{n+m},\mathscr{B}(\overline{\mathbb{R}}^{n+m}))$ is a random vector defined by setting $(X,Y)(\omega)=(X(\omega),Y(\omega))$ for any $\omega\in\Omega$. The joint probability distribution of $X$ and $Y$ is determined by the probability measure induced by $(X,Y)$ (i.e., $P_{X,Y}(E)=P(\{\omega\in\Omega:(X(\omega),Y(\omega))\in E\})$ for any $E\in\mathscr{B}(\overline{\mathbb{R}^{n+m}})$). Two Random vectors $X$ and $Y$ defined on $(\Omega,\mathscr{F})$ are said to be independent if $P_{X,Y}(E'\times E'')=P_X(E')P_Y(E'')$ for any $E'\in\mathscr{B}(\overline{\mathbb{R}^{n}})$ and $E''\in\mathscr{B}(\overline{\mathbb{R}^{m}})$. Note that
\begin{align*}
	P_X(E')=P(\{\omega\in\Omega: X(\omega)\in E'\})&=P(\{\omega\in\Omega: (X(\omega),Y(\omega))\in E'\times\overline{\mathbb{R}}^m\})\\
	&=P_{X,Y}(E'\times\overline{\mathbb{R}}^m).
\end{align*}
Similarly, we have that $P_Y(E'')=P_{X,Y}(\overline{\mathbb{R}}^n\times E'')$.
One could generalize the above procedure to define the joint distribution of any finite number of random variables defined on $(\Omega,\mathscr{F},P)$. Further, as we explained above, the joint distribution of any non-empty subset of a finite set of random vectors defined on $(\Omega,\mathscr{F},P)$ could be obtained from the joint distribution of all  random vectors.

\subsection{Conditional Probability (Conditions on the Values of a Random Variable)}\label{con}

Let $(\Omega,\mathscr{F},P)$ be a probability space, $E\in\mathscr{F}$, and $X:(\Omega,\mathscr{F})\to\overline{\mathbb{R}}^n$ be a random vector. We define the measure $\mu_E:\mathscr{B}(\overline{\mathbb{R}}^n)\to[0,\infty]$ as follows:
\[\mu_E(E')=P(X^{-1}(E')\cap E)\]
for any $E'\in\mathscr{B}(\overline{\mathbb{R}}^n)$. Note that $P_X(E')=0$ implies that $\mu_E(E')=0$ for any $E'\in\mathscr{B}(\overline{\mathbb{R}}^n)$. Hence, $\mu_E$ is absolutely continuous with respect to $P_X$. Thus, it follows from the Radon-Nikodym theorem that there exists a unique measurable function $f_E:\overline{\mathbb{R}}^n\to[0,\infty]$ such that
\[\mu_E(E')=\int_{E'}f_E(x)\,\mathrm{d}P_X(x)\]
for any $E'\in\mathscr{B}(\overline{\mathbb{R}}^n)$. It is a common notation to denote $f_E(x)$ by $P(E|X=x)$ for any possible or impossible value $X=x$. Therefore, we have that
\begin{equation}\label{conddef}
	P((X\in E')\cap E)=\int_{E'}P(E|X=x)\,\mathrm{d}P_X(x)
\end{equation}
for any $E\in\mathscr{F}$ and $E'\in\mathscr{B}(\overline{\mathbb{R}}^n)$. The function $P(\,\cdot\,|X=x):\mathscr{F}\to[0,\infty]$ is called a conditional probability function conditioned on $X=x$. As a result, we have the following equation for any $x_0\in\overline{\mathbb{R}}^n$:
\begin{equation}\label{conditionalformula}
P(E\cap(X=x_0))=\int_{\{x_0\}}P(E|X=x)\,\mathrm{d}P_X(x)=P(E|X=x_0)P_X(x_0).
\end{equation}

Let's  see an example as follows. Assume that $X$ is a discrete random variable and $E\in\mathscr{F}$. Then, it follows from Equation (\ref{conditionalformula}) that we must have
\[P(E|X=x)=
	\dfrac{P((X=x)\cap E)}{P(X=x)},\quad x\in\mathrm{Supp}(X).\]
 For any $x\in\overline{\mathbb{R}}\backslash\mathrm{Supp}(X)$, we can assign any value to $P(E|X=x)$ (for instance, one could define $P(E|X=x)=0$). Now, we show that $P(\,\cdot\,|X=x)$ with the above definition satisfies Equation (\ref{conddef}). To do so, let  $E'\in\mathscr{B}(\overline{\mathbb{R}}^n)$. Then, we have that
\begin{align*}
	\int_{E'}P(E|X=x)\,\mathrm{d}P_X(x)&=\sum_{x\in E'}P(E|X=x)P_X(x)=\sum_{x\in E'}P(E\cap(X=x))\\
&=P\left(E\cap\left(\bigcup_{x\in E'}\{x\}\right)\right)=P(E\cap (X\in E')).
\end{align*}
Hence, $P(\,\cdot\,|X=x)$ satisfies Equation (\ref{conddef}).
One could see that for any $x\in\mathrm{Supp}(X)$, $P(\,\cdot\,|X=x)$ is a probability measure as well. 

If $X_i:(\Omega,\mathscr{F})\to\overline{\mathbb{R}}^{n_i}$ is a random vector for any $1\le i\le l$, then we know that $(X_1,\ldots,X_l)$ is a random vector. Hence, we can define
\[P(E|X_1=x_1,\ldots,X_n=x_n)=P(E|(X_1,\ldots,X_n)=(x_1,\ldots,x_n)).\]
Note that the concept of being independent is naturally defined here. Indeed, let $X:(\Omega,\mathscr{F})\to\overline{\mathbb{R}}^{k_1}$, $Y:(\Omega, \mathscr{F})\to\overline{\mathbb{R}}^{k_2}$ and $Z:(\Omega,\mathscr{F})\to\overline{\mathbb{R}}^{k_3}$ be three random vectors defined on $(\Omega,\mathscr{F})$. Then, $X$ and $Y$ are said to be independent given $Z=z$, whenever 
\[P((X,Y)\in E_1\times E_2|Z=z)=P(X\in E_1|Z=z)P(Y\in E_2|Z=z)\]
for any Borel sets $E_1$ and $E_2$ in $\overline{\mathbb{R}}^{k_1}$ and $\overline{\mathbb{R}}^{k_2}$, respectively.
\subsection{Expected Values of Random Variables}

Let $(\Omega,\mathscr{F},P)$ be a probability space and $X:(\Omega,\mathscr{F})\to\overline{\mathbb{R}}$ be a random variable. Then, the expected value of $X$ is defined as follows:
\[\mathbb{E}(X)=\int_{\Omega}X(\omega)\,\mathrm{d}P.\]
Thus, if $X$ is a discrete random variable, and $x_1,\ldots,x_n,\ldots,$ are all distinct values of $X$, then we have that
\begin{align*}
	\mathbb{E}(X)&=\int_{\bigcup_{i=1}^{\infty}X^{-1}(x_i)}X(\omega)\,\mathrm{d}P=\sum_{i=1}^{\infty}\int_{X^{-1}(x_i)}X(\omega)\,\mathrm{d}P\\
	&=\sum_{i=1}^{\infty}x_i\int_{X^{-1}(x_i)}\,\mathrm{d}P=\sum_{i=1}^{\infty}x_iP(X^{-1}(x_i))=\sum_{i=1}^{\infty}x_iP_X(x_i).
\end{align*}

\subsection{Expected Values of Random Vectors}

Let $(\Omega,\mathscr{F},P)$ be a probability space and $X:(\Omega,\mathscr{F})\to\overline{\mathbb{R}}^n$ be a random vector. Then, the expected value of $X$ is defined as follows:
\[\mathbb{E}(X)=\left(\mathbb{E}(X_1),\ldots,\mathbb{E}(X_n)\right),\]
where we assume that $X(\omega)=(X_1(\omega),\ldots,X_n(\omega))$ for any $\omega\in\Omega$. Note that $\pi_i:\overline{\mathbb{R}}^n\to\overline{\mathbb{R}}$ is measurable, and hence it is a random variable for any $1\le i\le n$. It follows that $X_i=\pi_i\circ X$ is a random variable as well for any $1\le i\le n$.
\subsection{Conditional Expectation and Conditional Probability (Conditions on a Sub $\sigma$-Algebra)}\label{consigma}

Let $(\Omega,\mathscr{F},P)$ be a probability space and $\mathscr{G}$ be a sub $\sigma$-algebra of $\mathscr{F}$. Also, let $X:(\Omega,\mathscr{F})\to\overline{\mathbb{R}}^n$ be a random vector. The expectation of $X$ with respect to $\mathscr{G}$, denoted by $\mathbb{E}(X|\mathscr{G})$, is a random vector from $(\Omega,\mathscr{G})$ to $\overline{\mathbb{R}}^n$ in such a way that 
\[\int_{E}\mathbb{E}(X|\mathcal{G})(\omega)\,\mathrm{d}P(\omega)=\int_{E}X(\omega)\,\mathrm{d}P(\omega)\]
for any $E\in\mathscr{G}$. Indeed, $\mathbb{E}(X|\mathscr{G})$ exists and it is almost every where unique. To see this, define the measure $\nu:\mathscr{G}\to[0,1]$ by setting \[\nu(E)=\int_{E}X(\omega)\,\mathrm{d}P(\omega)\] for any $E\in\mathscr{G}$. Note that $P(E)=0$ implies that $\nu(E)=0$ for any $E\in\mathscr{G}$. It follows that $\nu$ is absolutely continuous with respect to $P_{|_{\mathscr{G}}}:\mathscr{G}\to[0,\infty]$. Hence, there exists a unique measurable function $g:(\Omega,\mathscr{G})\to\overline{\mathbb{R}}^n$ for which we have that
$\nu(E)=\int_{E}g(\omega)\,\mathrm{d}P(\omega)$ for any $E\in\mathscr{G}$. Finally, $g$ is denoted by $\mathbb{E}(X|\mathscr{G})$.

Now, let $E\in \mathscr{G}$. The conditional probability of $E$ given $\mathscr{G}$ is defined as follows:
\[\mathbb{P}(E|\mathcal{G})=\mathbb{E}(\mathbbm{1}_{E}|\mathcal{G}).\]
Hence, $\mathbb{P}(E|\mathcal{G})$ is a  random variable defined on $(\Omega,\mathscr{G})$ such that 
\[\int_{E'}\mathbb{P}(E|\mathcal{G})\,\mathrm{d}P=\int_{E'}\mathbbm{1}_E\,\mathrm{d}P=P(E\cap E')\]
for any $E'\in\mathscr{G}$. We also define a probability measure  $P(\,\cdot\,|E'):\mathscr{G}\to[0,1]$ in such a way that
\[P(E|E')P(E')=P(E\cap E')\]
for any $E'\in\mathscr{G}$. It follows that 
\[P(E|E')P(E')=\int_{E'}\mathbb{P}(E|\mathcal{G})\,\mathrm{d}P\]
for any $E'\in\mathscr{G}$. 

Considering $P(\,\cdot\,|E')$ (defined above), being independent  given an event for two random variables is defined similar to the definition given in Appendix \ref{con}.
\subsection{ The Relationship Between the  Conditional Probability Points of View introduced in Appendices \ref{con} and \ref{consigma} }
Let $(\Omega,\mathscr{F},P)$ be a probability space and $X:(\Omega,\mathscr{F})\to\overline{\mathbb{R}}^n$ be a random vector. Then, it is easy to see that $\mathscr{G}=\{X^{-1}(E''):E''\in\mathscr{B}(\overline{\mathbb{R}}^n)\}$ is a sub $\sigma$-algebra of $\mathscr{F}$. We show that
\[\mathbb{P}(E|\mathscr{G})(\omega)=P(E|X=X(\omega))\]
 for almost every $\omega\in\Omega$, where $P(E|X=X(\omega))$ and $\mathbb{P}(E|\mathscr{G})$ are defined with respect to the points of view introduced in Appendices \ref{con} and \ref{consigma}, respectively.
To do so, first we note that $P(E|X=X(\omega))=P(E|X=\,\cdot\,)\circ X$ as the composition of two measurable functions is measurable. Hence, it follows from the Change of Variable theorem that
\[\int_{X^{-1}(E'')}P(E|X=X(\omega))\,\mathrm{d}P(\omega)=\int_{E''}P(E|X=x)\,\mathrm{d}P_X(x)=P(E\cap X^{-1}(E'')).\]
One could see that when $E$ is a specific value of a random variable on $(\Omega,\mathscr{F})$, then two concepts of  independence discussed in Appendices \ref{con} and \ref{consigma} coincide with each other. 
\subsection{Conditional Probabilities of Conditional Probabilities}\label{conofcon}

Let $(\Omega,\mathscr{F},P)$ be a probability space, $E, E_0\in\mathscr{F}$ and $X:(\Omega,\mathscr{F})\to\overline{\mathbb{R}}^n$ be a random vector. Let $P^E:=P(\,\cdot\,|E)$ be a conditional probability measure, and $P_X^E$ be the conditional probability measure induced by $X$ on $\overline{\mathbb{R}}^n$ (i.e., $P_X^E(E')=P(X^{-1}(E')|E)$ for any $E'\in\mathscr{B}(\overline{\mathbb{R}}^n)$). Then, by the same argument discussed in Appendix \ref{con}, there exists a unique measurable function $f^E_{E_0}:\overline{\mathbb{R}}^n\to[0,\infty]$ that 
\[P^E((X\in E')\cap E_0)=\int_{E'}f_{E_0}^E(x)\,\mathrm{d}P^E_X(x)\]
for any $E'\in\mathscr{B}(\overline{\mathbb{R}}^n)$. Now, 
we denote $f_{E_0}^E(x)$ by $P(E_0|E,X=x)$ for any $x\in\overline{\mathbb{R}}^n$. It follows that
\[P((X\in E')\cap E_0|E)=\int_{E'}P(E_0|E,X=x)\,\mathrm{d}P^E_X(x)\]
for any $E'\in\mathscr{B}(\overline{\mathbb{R}}^n)$. 

\subsection{Total law of Probability for Conditional Probabilities}
Similar to the assumptions in Appendix \ref{conofcon}, let $(\Omega,\mathscr{F},P)$ be a probability space, $E, E_0\in\mathscr{F}$ and $X:(\Omega,\mathscr{F})\to\overline{\mathbb{R}}^n$ be a random vector. Also, assume that $P^E:=P(\,\cdot\,|E)$ be a conditional probability measure, and $P_X^E$ be the conditional probability measure induced by $X$ on $\overline{\mathbb{R}}^n$ with respect to $P^E$. Then, by Appendix \ref{conofcon}, we have that
\[P(E_0|E)=P((X\in\overline{\mathbb{R}}^n)\cap E_0|E)=\int_{\overline{\mathbb{R}}^n}P(E_0|E,X=x)\,\mathrm{d}P_X^E(x).\]

\subsection{Relationship Between Conditional Probabilities of Conditional Probabilities and Conditional Probabilities Given the Joint Distribution of two Random Vectors}
Once again, let us assume that $(\Omega,\mathscr{F},P)$ is a probability space, $ E_0\in\mathscr{F}$, and $X:(\Omega,\mathscr{F})\to\overline{\mathbb{R}}^n$ and $Y:(\Omega,\mathscr{F})\to\overline{\mathbb{R}}^m$ are two random vectors. Also, assume that $P^E:=P(\,\cdot\,|E)$ is a conditional probability measure, and $P_X^E$ is the conditional probability measure induced by $X$ on $\overline{\mathbb{R}}^n$ with respect to $P^E$.
Note that if  $E=Y^{-1}(y)$ for some value $y\in\mathrm{Supp}(Y)$, then $P(E_0|Y=y,X=x)$ (defined in Appendix \ref{con}) coincides with $f_{E_0}^E(x)=P(E_0|E,X=x)$  (defined in Appendix~\ref{conofcon}). To see this, let $E=Y^{-1}(y)$ with $y\in\mathrm{Supp}(Y)$. For simplicity, we denote $P_X^E$ and $f_{E_0}^E$ by $P_X^{y}$ and $f_{E_0}^y$, respectively. Let $P(E_0|X=\,\cdot\,\,,\,Y=\cdot\,):\overline{\mathbb{R}}^{n}\times \overline{\mathbb{R}}^{m}\to[0,1]$ be a conditional probability function in the sense of Appendix \ref{con}. Assume that $P(E_0|X=\,\cdot\,\,,\,Y=\cdot\,)$ is integrable. It is enough to show that 
\[P(E_0\cap(X\in E')|Y=y)=\int_{E'}P(E_0|X=x,Y=y)\,\mathrm{d}P_X^y(x)\]
for any $E'\in\mathscr{B}(\overline{\mathbb{R}}^n)$. To prove the above equality, it is enough to show that for any $E'\in\mathscr{B}(\overline{\mathbb{R}}^n)$ and $E''\in\mathscr{B}(\overline{\mathbb{R}}^m)$, we have that
\[P(E_0\cap((X,Y)\in E'\times E''))=\int_{E''}\int_{E'}P(E_0|X=x,Y=y)\,\mathrm{d}P_X^y(x)\mathrm{d}P_Y(y),\]
since $P(E_0\cap(X\in E')|Y=\,\cdot\,)$ is  a unique function satisfying the following equality for any $E'\in\mathscr{B}(\overline{\mathbb{R}}^n)$ and $E''\in\mathscr{B}(\overline{\mathbb{R}}^m)$ (see Appendic \ref{con}):
\[P(E_0\cap((X,Y)\in E'\times E''))=\int_{E''}P(E_0\cap(X\in E')|Y=y)\,\mathrm{d}P_Y(y).\]
To do so, let $E'\in\mathscr{B}(\overline{\mathbb{R}}^n)$ and $E''\in\mathscr{B}(\overline{\mathbb{R}}^m)$. Note that it follows from Appendix~\ref{con} that
\[P(E_0\cap((X,Y)\in E'\times E''))=\int_{E'\times E''}P(E_0|X=x,Y=y)\,\mathrm{d}P_{X,Y}(x,y).\]
Thus, it is enough to show that 
\[\int_{E'\times E''}f(x,y)\,\mathrm{d}P_{X,Y}(x,y)=\int_{E''}\int_{E'}f(x,y)\,\mathrm{d}P_X^y(x)\mathrm{d}P_Y(y),\]
where $f(x,y)=P(E_0|X=x,Y=y)$.
To prove this, first assume that $f=\mathbbm{1}_{E_1\times E_2}$ for some $E_1\in\mathscr{B}(\overline{\mathbb{R}}^n)$ and $E_2\in\mathscr{B}(\overline{\mathbb{R}}^ m)$. Then, we have that
\begin{align*}
	\int_{E'\times E''}\mathbbm{1}_{E_1\times E_2}(x,y)\,\mathrm{d}P_{X,Y}(x,y)&=\int_{\overline{\mathbb{R}}^n\times\overline{\mathbb{R}}^m}\mathbbm{1}_{(E_1\cap E')\times (E_2\cap E'')}(x,y)\,\mathrm{d}P_{X,Y}(x,y)\\
	&=P_{X,Y}((E_1\cap E')\times (E_2\cap E'')),\\
	\int_{E''}\int_{E'}\mathbbm{1}_{E_1\times E_2}(x,y)\,\mathrm{d}P_X^y(x)\mathrm{d}P_Y(y)&=\int_{E''\cap E_2}\int_{\overline{\mathbb{R}}^n}\mathbbm{1}_{E_1\cap E'\times E_2}(x,y)\,\mathrm{d}P_X^y(x)\mathrm{d}P_Y(y)\\
	&=\int_{E''\cap E_2}\int_{\overline{\mathbb{R}}^n}\mathbbm{1}_{E_1\cap E'}(x)\,\mathrm{d}P_X^y(x)\mathrm{d}P_Y(y)\\
	&=\int_{E''\cap E_2}P(X\in E_1\cap E'|Y=y)\,\mathrm{d}P_Y(y)\\
	&=P((X,Y)\in(E_1\cap E')\times (E_2\cap E''))\\
	&=P_{X,Y}((E_1\cap E')\times (E_2\cap E'')).
\end{align*}
Hence, the equality holds for indicator functions. By the linearity of integrals, one could see that the equality holds for simple functions as well. Finally, by the monotone convergence theorem and dominating convergence theorem, the equality holds for any non-negative function. 

\subsection{Expected Value Given an Event}

Let $(\Omega,\mathscr{F},P)$ be a probability space and $E\in\mathscr{F}$. Also, let $X:(\Omega,\mathscr{F})\to\overline{\mathbb{R}}^n$ be a random vector. 
Assume that $\{E_i\}_{i=1}^{\infty}$ is a family of pairwise disjoint elements of $\mathscr{F}$ with $\Omega=\bigcup_{i=1}^{\infty}E_i$. Also, assume that $\mathscr{G}$ is the sub $\sigma$-algebra of $\mathscr{F}$ generated by $\{E_i\}_{i=1}^{\infty}$. One could see that 
\[\mathscr{G}=\left\{\bigcup_{i \in I}E_i:I\subseteq\mathbb{N}\right\}.\]
Thus, $\mathbb{E}(X|\mathscr{G})$ is constant on $E_i$, since $\mathbb{E}(X|\mathscr{G})^{-1}(r)$ must be in $\mathscr{G}$ for any $i\in\mathbb{N}$ and $r\in\overline{\mathbb{R}}$. It follows that 
\[\mathbb{E}(X|\mathscr{G})=\sum_{i=1}^{\infty}r_i\mathbbm{1}_{E_i}\]
for some real numbers $r_i$. In the aforementioned expression,  $r_i$ is denoted by $\mathbb{E}(X|E_i)$ for any $i\in\mathbb{N}$. It follows that
\[\mathbb{E}(X|E_i)P(E_i)=\int_{E_i}\mathbb{E}(X|\mathscr{G})\,\mathrm{d}P(\omega)=\int_{E_i}X(\omega)\,\mathrm{d}P(\omega)\]
for any $i$.
 Hence, we showed that for any $E\in\mathscr{F}$,  $\mathbb{E}(X|E)P(E)$ is well-defined (i.e., it does not depend on a family of pairwise disjoint elements of $\mathscr{F}$ which includes $E$). Now, let $E\in\mathscr{F}$ and $P^E=P(\,\cdot\,|E)$ be a conditional probability measure. Then, we claim that the following definition for $\mathbb{E}(X|E)$ is acceptable:
 \[\mathbb{E}(X|E):=\int_{\Omega}X(\omega)\,\mathrm{d}P^E(\omega).\]
 To show this, it is enough to show that
 \[\left(\int_{\Omega}X(\omega)\,\mathrm{d}P^E(\omega)\right)P(E)=\int_{\Omega}X(\omega)\,\mathrm{d}P(\omega).\]
Let $E'\in\mathscr{F}$ be arbitrary and $X=\mathbbm{1}_{E'}$. Then, we have that
\begin{align*}
	\left(\int_{\Omega}\mathbbm{1}_{E'}(\omega)\,\mathrm{d}P^E(\omega)\right)P(E)&=P^E( E')P(E)=P(E'|E)P(E)\\
	&=P(E'\cap E)=\int_E\mathbbm{1}_{E'}(\omega)\,\mathrm{d}P(\omega).
\end{align*}
One could see that the equality holds when $X$ is a simple function as well. Finally, by the monotone convergence theorem, the equality holds for any random vector $X:(\Omega,\mathscr{F})\to\overline{\mathbb{R}}^n$.

\subsection{Continuous Random Variables }\label{continuousrandomvariables}

In this subsection, we briefly explain continuous random variables, their conditional distributions, expectations, and  conditional expectations. In the sequel, we assume that $(\Omega,\mathscr{F},P)$ is a probability space, $X,Y:(\Omega,\mathscr{F})\to\overline{\mathbb{R}}$ are  random variables, and $Z:(\Omega,\mathscr{F})\to\overline{\mathbb{R}}^n$ is a random vector. We also denote the Lebesgue measure on $\overline{\mathbb{R}}^n$ by $\lambda$.

\subsubsection{Absolutely Continuous Random Variables}
The random vector $Z$ is called absolutely continuous if $P_Z$ is absolutely continuous with respect to the Lebesgue measure.

\subsubsection{Probability Density Functions}

Assume that $Z$ is absolutely continuous. It follows from the Radon-Nicodym  theorem that there exists a measurable function $f:\overline{\mathbb{R}}\to[0,\infty]$ for which $P_Z(E)=\int_Ef(z)\,\mathrm{d}\lambda(z)$ for any Borel set $E$ in $\overline{\mathbb{R}}$.  Here, $f$ is called the probability density function of $X$. Now, if the random variable $X$ is absolutely continuous,  $\mathrm{Supp}(X)\subseteq\mathbb{R}$, and $f$ is Riemann integrable, then 
\[P_X((-\infty,t])=\int_{-\infty}^tf(x)\,\mathrm{d}x\]
for any $t\in\mathbb{R}$.

\subsubsection{Distribution Function of a Random Variable} 

Define $F_X:\overline{\mathbb{R}}\to[0,1]$ by setting $F_X(t)=P_X([-\infty,t])$ for any $t\in\overline{\mathbb{R}}$. Then, $F_X$ is called the distribution function of $X$. Now, let $X$ be absolutely continuous and $\mathrm{Supp}(X)\subseteq\mathbb{R}$. Then 
\[F_X(t)=\int_{-\infty}^tf(x)\,\mathrm{d}x\]
for any $t\in\mathbb{R}$, where $f$ is the probability density function of $X$ which is assumed to be Riemann integrable. Note that if $f$ is continuous, then $F_X$ is differentiable and we have that
\[\frac{\mathrm{d}F_X}{\mathrm{d}x}(t)=f(t).\]
\subsubsection{Definition of a Continuous Random Variable}

Let $(\Omega,\mathscr{F},P)$ be a probability space. Then, the random variable $X$ is called continuous if $F_X$ is continuous. Now, we show that $X$ is continuous if and only if $P_X(t)=0$ for any $t\in\overline{\mathbb{R}}$. First, assume that $X$ is continuous and $t_0\in\overline{\mathbb{R}}$. Let $\{t_n\}_{n=1}^{\infty}$ be an increasing sequence in $\overline{\mathbb{R}}$ converging to $t_0$. Now, if $E_n=(t_n, t_0]$, then we have that $E_n\downarrow \{t_0\}$, and hence we have that
\[0=\lim_{n\to\infty}(F_X(t_0)-F_X(t_n))=\lim_{n\to\infty}P_X((t_n,t_0])=\lim_{n\to\infty}P_X(E_n)=P_X(t_0).\]
One could see that the converse holds as well. 

Let $X$ be absolutely continuous. We show that $X$ is continuous. To do so, let $t\in\overline{\mathbb{R}}$. Then, the Lebesgue measure of $\{t\}$ is 0, and hence $P_X(t)=0$. It follows that $X$ is a continuous random variable.

Continuous random variables are often absolutely continuous. Hence, from now on, by a continuous random variable, we mean an absolutely continuous random variable. 

\subsubsection{Examples of Continuous Random Variables}

As we discussed, a continuous random variable is uniquely determined by its probability density function (pdf). Thus, we introduce the pdfs of some famous random variables in the following:

\begin{align*}
	f(x)&:=\left\{\begin{array}{ll}
		\frac{1}{b-a}&,a\le x\le b\\
		0&,\text{otherwise}
		\end{array}\right.& (\text{Unifrom Random Variable on } [a,b]),\\
	g(x)&:=\left\{\begin{array}{ll}
	\lambda e^{-\lambda x}&,x\ge 0\\
		0&,\text{otherwise}
	\end{array}\right.& \left(\begin{array}{l}\text{Exponential Random Variable}\\ \text{with the parameter $\lambda>0$ }\end{array}\right),\\
h(x):&=\frac{1}{\sqrt{2\pi}\sigma}e^{-\dfrac{(x-\mu)^2}{2\sigma^2}}& \left(\begin{array}{l}\text{Normal Random Variable}\\ \text{with the variance $\sigma^2$ ($\sigma>0$)}\\\text{and the mean $\mu$ }\end{array}\right).
\end{align*}

\subsubsection{Continuous Joint Distributions of  Random Variables}

 Assume that the random vector $(X,Y)$ is absolutely continuous, and $f_{X,Y}$ is the pdf of $(X,Y)$. Then, the pdfs of $X$ and $Y$ could be obtained as follows:
\[f_X(x)=\int_{\overline{\mathbb{R}}}f_{X,Y}(x,y)\,\mathrm{d}\lambda(y),\qquad f_Y(y)=\int_{\overline{\mathbb{R}}}f_{X,Y}(x,y)\,\mathrm{d}\lambda(x),\]
since for instance we have that
\begin{align*}
	\int_{E}\int_{\overline{\mathbb{R}}}f_{X,Y}(x,y)\,\mathrm{d}\lambda(y)\mathrm{d}\lambda(x)=\int_{E\times\overline{\mathbb{R}}}f_{X,Y}(x,y)\,\mathrm{d}\lambda(x,y)=P_{X,Y}(E\times\overline{\mathbb{R}})=P_X(E)
\end{align*}
for any Borel set $E$ in $\overline{\mathbb{R}}$. Indeed,
we validate the first equality of the above expression  for simple functions $f_{X,Y}(x,y)$, and then we generalize it for arbitrary functions $f_{X,Y}(x,y)$ using the dominated convergence theorem (another way to justify this equality  is the Fubini theorem in measure theory).

Now, assume that $X$ and $Y$ are independent. Then, for any two Borel sets $E_1$ and $E_2$ in $\overline{\mathbb{R}}$, we have that $P_{X,Y}(E_1\times E_2)=P_X(E_1)P_Y(E_2)$. Now, we claim that $f_{X,Y}=f_Xf_Y$. To do so, we note that
\begin{align*}
	\int_{E_1\times E_2}f_X(x)f_Y(y)\,\mathrm{d}\lambda(x,y)&=\int_{E_1}\int_{E_2}f_X(x)f_Y(y)\,\mathrm{d}\lambda(x)\mathrm{d}\lambda(y)\\
	&=\left(\int_{E_1}f_X(x)\,\mathrm{d}\lambda(x)\right)\left(\int_{E_2}f_Y(y)\,\mathrm{d}\lambda(y)\right)\\
	&=P_X(E_1)P_Y(E_2)=P_{X,Y}(E_1\times E_2).
	\end{align*}
It follows that for any Borel set $E'$ in $\overline{\mathbb{R}}^n$, we have that 
\[\int_{E'}f_X(x)f_Y(y)\,\mathrm{d}\lambda(x,y)=P_{X,Y}(E'),\]
which implies that $f_{X,Y}=f_Xf_Y$. 

\subsubsection{Conditional Continuous Random Variables}

Let $E\in\mathscr{F}$ and $X$ be continuous (as we mentioned before, here we mean being absolutely continuous). If $P(E)>0$ and $P^E$ be a conditional probability measure (for its definition look at Appendix  \ref{conofcon}), then $P_X^E$  is absolutely continuous, since for any Borel set $E'$ in $\overline{\mathbb{R}}$, we have that
\[P^E_X(E')P(E)=P((X\in E')\cap E)\le P_X(E').\]
Otherwise, if $P(E)=0$, then we define $P_X^E$ in such a way that it is absolutely continuous. In general, we can define $P_X^E$ as follows:
\[P_X^E(E')=\int_{E'}f_{X|E}(x|E)\,\mathrm{d}\lambda(x),\]
where $f_{X|E}(\,\cdot\,|E):\overline{\mathbb{R}}\to\overline{\mathbb{R}}$ is a non-negative measurable function satisfying the following identity:
\[\int_{\overline{\mathbb{R}}}f_{X|E}(x|E)\,\mathrm{d}\lambda(x)=1.\] 
In the case that $E$ is the event $Y^{-1}(y)$, we denote $f_{X|E}$ by $f_X^y$. Also,  we assume that $P(X\in E_1|Y=\,\cdot\,):\overline{\mathbb{R}}\to\overline{\mathbb{R}}$  is the probability function defined in Appendix \ref{con}. Indeed, $P(X\in E_1|Y=\,\cdot\,)$  is measurable and satisfies the following equality:
\[P((X\in E_1)\cap (Y\in E_2))=\int_{E_2}P(X\in E_1|Y=y)\,\mathrm{d}P_Y(y)\]
for any Borel sets $E_1$ and $E_2$ in $\overline{\mathbb{R}}$.
Now, we show that 
\[f_{X}^y(x|y)f_Y(y)=f_{X,Y}(x,y)\]
for any $x,y\in\overline{\mathbb{R}}$. To do so, it is enough to show that for any Borel sets $E_1$ and $E_2$ in $\overline{\mathbb{R}}$, we have that
\[P_{X,Y}(E_1\times E_2)=\int_{E_1\times E_2} f_{X}^y(x|y)f_Y(y)\,\mathrm{d}\lambda(x,y).\]
Denote the right side of the above identity by $I$. Then, we have that
\begin{align*}
	I&=\int_{E_2}\left(\int_{E_1}f_X^y(x|y)\,\mathrm{d}\lambda(x)\right)f_Y(y)\,\mathrm{d}\lambda(y)=\int_{E_2}P_X^y(E_1)f_Y(y)\mathrm{d}\lambda(y)
	\\&=\int_{E_2}P_X^y(E_1)\,\mathrm{d}P_Y(y)=\int_{E_2}P(X\in E_1|Y=y)\,\mathrm{d}P_Y(y)=P((X\in E_1)\cap (Y\in E_2)).
\end{align*}
Hence, we have the following:
\[f_{X}^y(x|y)f_Y(y)=f_{X,Y}(x,y)\]
for any $x,y\in\overline{\mathbb{R}}$.
\subsubsection{Expectation of Continuous Random Variables}

Let $X$ be continuous and $f_X$ be the pdf of $X$. We already know that 
\[\mathbb{E}(X)=\int_{\Omega}X(\omega)\,\mathrm{d}P.\]
By the change of variable theorem, we have that
\[\mathbb{E}(X)=\int_{\overline{\mathbb{R}}}x\,\mathrm{d}P_X(x).\]
Therefore, we have that
\[\mathbb{E}(X)=\int_{\overline{\mathbb{R}}}xf_X(x)\,\mathrm{d}\lambda(x).\]

\subsubsection{Conditional Expectation of Continuous Random Variables}

Let $X$ be continuous and $E\in\mathscr{F}$. We already now that if $P^E$ is a conditional probability measure, then
\[\mathbb{E}(X|E)=\int_{\Omega}X(\omega)\,\mathrm{d}P^E(\omega).\]
Now, by the change of variable theorem, we obtain 
\[\mathbb{E}(X|E)=\int_{\overline{\mathbb{R}}}x\,\mathrm{d}P^E_X(x).\]
 Assume that $P^E_X$ is absolutely continuous. Then, there exists a unique measurable function $g^E:\overline{\mathbb{R}}\to\overline{\mathbb{R}}$ for which \[P^E_X(E')=\int_{E'}g^E(x)\,\mathrm{d}\lambda(x)\] 
 for any Borel set $E'$ in $\overline{\mathbb{R}}$. We denote $g^E(x)$ by $f_{X|E}(x|E)$ for any $x\in\overline{\mathbb{R}}$. It follows that
 \[\mathbb{E}(X|E):=\int_{\overline{\mathbb{R}}}xf_{X|E}(x|E)\,\mathrm{d}\lambda(x).\]

\subsubsection{Total Law of Probability for Continuous Random Variables}

Let $E\in\mathscr{F}$ and $X$ be continuous. Assume that $f_X$ is the pdf of $X$. We already know the following version of the total law of probability:
\[P(E)=\int_{\overline{\mathbb{R}}}P(E|X=x)\,\mathrm{d}P_X(x)\]
which implies that
\[P(E)=\int_{\overline{\mathbb{R}}}P(E|X=x)f_X(x)\,\mathrm{d}\lambda(x).\]

\subsubsection{Total Law of Probability for Conditional Continuous Random Variables}

Let $E, E_0\in\mathscr{F}$ and $X$ be continuous. Assume that $P^E$ is a conditional probability measure, and  $P_X^E$  is  absolutely continuous. Further, assume that $f_{X|E}(x|E)$ is the pdf of $P_X^E$. We already know that 
\[P(E_0|E)=\int_{\overline{\mathbb{R}}}P(E_0|E,X=x)\,\mathrm{d}P_X^E(x).\]
It follows that
\[P(E_0|E)=\int_{\overline{\mathbb{R}}}P(E_0|E,X=x)f_{X|E}(x|E)\,\mathrm{d}\lambda(x).\]

\section{Generalized Probability Density Functions}\label{gpdf}

In this appendix, we  use the Dirac measure to unify both  discrete and continuous distributions (see \cite{capinski2004measure} and \cite{ccinlar2011probability}).

\subsubsection{ Discrete Random Variable Distributions via Dirac Measures}
Let $X:\Omega\to\overline{\mathbb{R}}$ be a discrete random variable and  $E\subseteq\Omega$. We know that 
\[P_X(E)=\sum_{x\in E}P_X(x).\]
Now, define the probability measure $\mu:\mathcal{P}(\Omega)\to[0,1]$ as follows:
\[\mu:=\sum_{x\in\mathrm{Supp}(X)}P_X(x)\delta_x.\]
One could see that $\mu=P_X$, and hence
\[P_X(E)=\int_E\,\mathrm{d}\mu.\]

\subsubsection{Definition of a Generalized Probability Density Function}\label{Definition of a Generalized Probability Density Function}

Let $(\Omega,\mathscr{F},P)$ be a probability space, and $X:\Omega\to\overline{\mathbb{R}}$ be a  random variable. Assume that 
\[\mu=t\lambda+\sum_{n=1}^{\infty}t_n\delta_{x_n},\]
where $\lambda$ is the Lebesgue measure, and $t_n,x_n\in\overline{\mathbb{R}}$ are non-negative for any $n$. Then, $\mu$ is a measure. If $P_X$ is absolutely continuous with respect to $\mu$, then by the Radon-Nykodim theorem, there exists a unique non-negative  measurable function $f_X:\overline{\mathbb{R}}\to\overline{\mathbb{R}}$ in such a way that
\[P_X(E)=\int_{E}f_X(x)\,\mathrm{d}\mu(x)\]
for any $E\in\mathscr{B}(\overline{\mathbb{R}})$. The function $f_X$ is called the generalized density function of $X$ with respect to $\mu$. Obviously, any pdf is a generalized pdf. 
For an example, assume that $\mathrm{Im}(f_X)=\{f_X(x):x\in\overline{\mathbb{R}}\}\subseteq\mathbb{R}$, and $(f_X)_{|_{\mathbb{R}}}:\mathbb{R}\to\mathbb{R}$ is Riemann integrable, and $\mu=\lambda+\delta_a$ for some $a\in\mathbb{R}$ with $a\le t$, where $t\in\mathbb{R}$. Then, we have that

\begin{align*}
	\int_{[-\infty,t]}f_X(x)\,\mathrm{d}\mu(x)&=\int_{[-\infty,t]}f_X(x)\,\mathrm{d}\lambda(x)+\int_{[-\infty,t]}f_X(x)\,\mathrm{d}\delta_a(x)\\
	&=\int_{(-\infty,t]}f_X(x)\,\mathrm{d}\lambda(x)+f_X(a)\\
	&=\int_{-\infty}^tf_X(x)\,\mathrm{d}x+f_X(a).
\end{align*}

\subsubsection{Dirac Delta Function}\label{deltadiracfunction}

The Dirac delta function  is a function $\delta:\overline{\mathbb{R}}\to\overline{\mathbb{R}}$ which takes the value $0$ everywhere except for $x=0$ which takes $\infty$. It is assumed that $\delta$ has the following properties:
\begin{itemize}
	\item 
	$\int_{-\infty}^{\infty}\delta(x-a)\,\mathrm{d}x=1$ for any $a\in\mathbb{R}$, and
	\item
	$\int_{-\infty}^{\infty}g(x)\delta(x-a)\,\mathrm{d}x=g(a)$ for any $a\in\mathbb{R}$ and any continuous function $g:\mathbb{R}\to{\mathbb{R}}$ for which the smallest closed set of all points $x\in\mathbb{R}$ with $g(x)\neq 0$ is bounded. 
\end{itemize} 
The first property implies that $\delta(x-a)$ is a pdf. 

A main challenge here is that $\delta$ is not a function on $\mathbb{R}$ to $\mathbb{R}$, and hence the Riemann integral could not be defined for $\delta$. Hence, one could interpret the above Riemann integrals as Lebesgue integrals.  In this case, we note that in the Lebesgue integral removing a set of zero measure (such as a point) does not affect the value of an integral. Hence, we must have $\int_{\overline{\mathbb{R}}}\delta(x)\,\mathrm{d}\lambda(x)=0$, a contradiction! 
In fact, a true interpretation of the  Dirac delta function is to consider it as a certain linear functional. Here, we do not discuss the details of a true mathematical definition of $\delta$ (for details, see \cite{lax2014functional}). However, we can interpret  the integrals appeared above as symbolic integrals and define them as follows (for a more detailed and comprehensive reference, one could see \cite[Section 5]{loffler2019brownian}):
\[	\int_{-\infty}^{t}\delta(x-a)\,\mathrm{d}x:=\int_{[-\infty,t]}\,\mathrm{d}\delta_a(x),\quad \int_{-\infty}^{t}g(x)\delta(x-a)\,\mathrm{d}x:=\int_{[-\infty,t]}g(x)\,\mathrm{d}\delta_a(x),\] 
where $\delta_a$ is the Dirac measure with respect to $a$. The two main properties of the Dirac delta function are satisfies, since
\[\int_{\overline{\mathbb{R}}}\,\mathrm{d}\delta_a(x)=1,\qquad \int_{\overline{\mathbb{R}}}g(x)\,\mathrm{d}\delta_a(x)=g(a).\]
In general, we define
\[\int_{-\infty}^tg(x)\left(\sum_{n=1}^{\infty}t_n\delta(x-a_n)\right)\,\mathrm{d}x:=\sum_{n=1}^{\infty}t_n\int_{[-\infty,t]}g(x)\,\mathrm{d}\delta_{a_n}(x),\]
where $t_n,a_n\in\mathbb{R}$ for any $n$.
Hence, we have that
\begin{align*}
\int_{[-\infty,t]}g(x)\,\mathrm{d}\left(t_0\lambda+\sum_{n=1}^{\infty}t_n\delta_{a_n}\right)(x)&=\int_{[-\infty,t]}t_0g(x)\,\mathrm{d}\lambda(x)+\sum_{n=1}^{\infty}t_n\int_{[-\infty,t]}g(x)\,\mathrm{d}\delta_{a_n}(x)\\
&=\int_{-\infty}^tt_0g(x)\,\mathrm{d}x+\int_{-\infty}^tg(x)\left(\sum_{n=1}^{\infty}t_n\delta(x-a_n)\right)\,\mathrm{d}x\\
&=\int_{-\infty}^tg(x)\left(t_0+\sum_{n=1}^{\infty}t_n\delta(x-a_n)\right)\,\mathrm{d}x.
\end{align*}
Indeed, we have that
\[\int_{-\infty}^t\delta(x-a)\,\mathrm{d}x=\int_{[-\infty,t]}\,\mathrm{d}\delta_a(x)=U_a(t),\]
where $U_{a}$ is the Heaviside step function defined as follows:

\[U_{a}(t)=\left\{\begin{array}{ll}
	0  & ,t< a \\
	1  &  ,t\ge a
\end{array}\right..\]
It follows that $U_a(t)$ is the distribution function associated to  $\delta(x-a)$, and we have the following symbolic derivative:
\[\frac{\mathrm{d}U_a}{\mathrm{d}x}(t)=\delta(t-a).\] 
Note that $U_a$ is not really differentiable at $t=a$, that is why we used the word "symbolic" above. 
\end{document}